\documentclass[11pt, letterpaper]{amsart}

\usepackage{amsfonts}
\usepackage{amsmath}
\usepackage{amsthm} 
\usepackage{color}
\usepackage{enumerate}
\usepackage{amssymb}
\usepackage{graphicx}
\graphicspath{ {Figures/} }
\usepackage{caption}
\usepackage{subcaption}
\usepackage{appendix}
\usepackage{ulem}
\usepackage{cancel}
\usepackage[]{mathtools}
\usepackage[]{bbm} 

\usepackage{caption}
\usepackage{subcaption}

\setcounter{MaxMatrixCols}{10}

\vfuzz2pt
\hfuzz2pt
\addtolength{\hoffset}{-1.9cm} \addtolength{\textwidth}{3.8cm}
\addtolength{\voffset}{-0.7cm}
\addtolength{\textheight}{1.4cm}



\theoremstyle{definition}
\newtheorem{theorem}{Theorem}[section]
\newtheorem{assumption}[theorem]{Assumption}

\newtheorem{proposition}[theorem]{Proposition}

\newtheorem{lemma}[theorem]{Lemma}
\newtheorem{remark}[theorem]{Remark}

\numberwithin{equation}{section}

\newcommand{\E}{\mathbb{E}}
\newcommand{\filt}{\mathbb{F}}
\newcommand{\filtg}{\mathbb{G}}

\newcommand{\prob}{\mathbb{P}}
\newcommand{\tprob}{\widetilde{\mathbb{P}}}
\newcommand{\qprob}{\mathbb{Q}}
\newcommand{\reals}{\mathbb R}

\newcommand{\sdpos}{\mathbb{S}^d_{++}}
\newcommand{\A}{\mathcal{A}}

\newcommand{\C}{\mathcal{C}}
\newcommand{\Ecal}{\mathcal{E}}
\newcommand{\F}{\mathcal{F}}
\newcommand{\G}{\mathcal{G}}
\newcommand{\Hcal}{\mathcal{H}}

\newcommand{\M}{\mathcal{M}}
\newcommand{\tM}{\widetilde{\mathcal{M}}}
\newcommand{\OO}{\mathcal{O}}
\newcommand{\mcp}{\mathcal{P}}
\newcommand{\R}{\mathcal{R}}

\newcommand{\We}{\mathcal{W}}




\newcommand{\ca}{\check{a}}

\newcommand{\cvas}{\xrightarrow{a.s.}}
\newcommand{\cvprob}{\xrightarrow{p}}
\newcommand{\eps}{\varepsilon}
\newcommand{\such}{\ | \ }

\newcommand{\probtripleg}{(\Omega, \mathcal{G}, \mathbb{P})}

\newcommand{\dfn}{\, := \,}

\newcommand{\tr}{\mathrm{Tr}}

\newcommand*{\dif}{{\mathop{}\!\mathrm{d}}}

\newcommand{\e}{\mathrm{e}}
\newcommand{\Afn}{\mathbf{K}_1}
\newcommand{\Bfn}{\mathbf{K}_2}
\newcommand{\Cfn}{\mathbf{K}_3}
\newcommand{\Dfn}{\mathbf{K}_4}
\newcommand{\Efn}{\mathbf{K}_5}
\newcommand{\Efnalt}{v_c}

\newcommand{\bAfn}{\ol{\mathbf{K}}_1}
\newcommand{\bBfn}{\ol{\mathbf{K}}_2}
\newcommand{\bEfn}{\ol{\mathbf{K}}_5}
\newcommand{\tprobnu}{\widetilde{\nu}}

\newcommand{\expv}[3]{\mathbb{E}^{#1}_{#2}\left[#3\right]}
\newcommand{\wtexpv}[3]{\widetilde{\mathbb{E}}^{#1}_{#2}\left[#3\right]}
\newcommand{\condexpv}[4]{\mathbb{E}^{#1}_{#2}\left[\left. #3 \right| #4\right]}
\newcommand{\wtcondexpv}[4]{\widetilde{\mathbb{E}}^{#1}_{#2}\left[\left.#3 \right| #4\right]}
\newcommand{\expvs}[1]{\mathbb{E}\left[#1\right]}
\newcommand{\condexpvs}[2]{\mathbb{E}\left[ \left. #1 \right| #2\right]}

\newcommand{\condprobs}[2]{\mathbb{P}\left[#1\big| #2\right]}
\newcommand{\wtcondprobs}[2]{\widetilde{\mathbb{P}}\left[#1\big| #2\right]}
\newcommand{\xpn}[1]{\exp\left(#1\right)}

\newcommand{\nada}[1]{}

\newcommand{\wt}[1]{\widetilde{#1}}
\newcommand{\wh}[1]{\widehat{#1}}

\newcommand{\bra}[1]{\left[#1\right]}
\newcommand{\cbra}[1]{\left\{#1\right\}}
\newcommand{\ol}[1]{\overline{#1}}
\newcommand{\ul}[1]{\underline{#1}}
\newcommand{\relent}[2]{H\left(#1\big|#2\right)}

\newcommand{\mbf}[1]{\mathbf{#1}}

\newcommand{\abs}[1]{\left\lvert#1\right\rvert}

\newcommand{\spos}[1]{\mathbb{S}^{#1}_{++}}
\newcommand{\plog}[1]{\textrm{PL}\left(#1\right)}
\newcommand{\plogf}[2]{\textrm{PL}\left(#1,#2\right)}

\newcommand{\idmat}[1]{\mathbf{1}_{#1}}

\setcounter{tocdepth}{4}
\setcounter{secnumdepth}{4}

\begin{document}

\title[CDS Optimal Investment]{Optimal Investment in Equity and Credit Default Swaps in the Presence of Default}

\author{Zhe Fei}
\author{Scott Robertson}
\address{Questrom School of Business, Boston University, Boston MA, 02215, USA}
\email{zhefei@bu.edu; scottrob@bu.edu}

\date{\today}

\begin{abstract}

We consider an equity market subject to risk from both unhedgeable shocks and default.  The novelty of our work is that to partially offset default risk, investors may dynamically trade in a credit default swap (CDS) market. Assuming investment opportunities are driven by functions of an underlying diffusive factor process, we identify the certainty equivalent for a constant absolute risk aversion investor with a semi-linear partial differential equation (PDE) which has quadratic growth in both the function and gradient coefficients. For general model specifications, we prove existence of a solution to the PDE which is also the certainty equivalent. We show the optimal policy in the CDS market covers not only equity losses upon default (as one would expect), but also losses due to restricted future trading opportunities.  We use our results to price default dependent claims though the principal of utility indifference, and we show that provided the underlying equity market is complete absent the possibility of default, the equity-CDS market is complete accounting for default.  Lastly, through a numerical application, we show the optimal CDS policies are essentially static (and hence easily implementable) and that investing in CDS dramatically increases investor indirect utility. 
    
\end{abstract}

\maketitle

\section{Introduction}

In this article, we consider an optimal investment problem  with random endowment, partially hedgeable shocks, and the possibility of default in one or more of the traded assets. The novelty of our work is that to partially offset the default risk, the investor trades dynamically in a market for credit default swaps (CDS) on the traded assets.   Our goal is to identify how the investor uses the CDS market to mitigate her default risk, and how existence of this market alters both her indirect utility from trading and the way she prices default dependent contingent claims.

Continuing the line of research studied in \cite{bielecki2006portfolio, MR2779555, MR2178034, MR2212266, sircar2007utility, MR3846288} and especially \cite{MR4086602}, we work in a reduced form intensity based model where investment opportunities (such as excess returns and  volatility; default intensities, losses and recovery rates; and random endowments/contingent claim payoffs) are driven by an underlying economic factor process $X$, modeled as a multi-dimensional diffusion.  Furthermore, the shocks driving $X$ are only partially correlated with those driving the equities, and hence even absent default, the market is incomplete.

New to our model, especially in comparison to \cite{MR4086602} and \cite{bielecki2006portfolio, MR2779555, capponi2014dynamic, MR3846288}, is that we allow the investor to dynamically trade in a rolling (c.f. \cite{MR2474544}) or ``on the run'' CDS market offering protection upon equity default. To obtain the wealth process associated to a dynamic CDS strategy, we depart from the
 current literature (see in particular \cite{capponi2014dynamic, dabadghao2014dynamic}) by using the rolling CDS strategies of \cite{MR2474544}.  These strategies arise as the investor enters and unwinds positions in CDS of a fixed remaining time to maturity with ever increasing frequency, and enables continual investment in on-the-run CDS contracts, avoiding (as the authors in \cite{MR2474544} discuss) liquidity issues associated to off-the-run CDS markets.  

Even absent default, and certainly accounting for default, the above market may be incomplete. This prohibits pricing claims solely using through absence of arbitrage arguments. Therefore, and also to account for her preferences, the investor prices  through the principal of utility indifference (see \cite{HN1989}).  This requires us to identify the investor's value function in the presence of random endowments with default dependent payoffs, such as defaultable bonds. To ensure tractability, we assume her preferences are described by an exponential, or constant absolute risk aversion (CARA), utility function.  This implies the indifference price for any contingent claim is independent of the investor's initial wealth, and that up to translation by the initial wealth, the indirect utility function depends only on time and the factor process.

In fact, due to the diffusive Markovian structure, the indirect utility function is expected to satisfy a certain semi-linear partial differential equation (PDE): see \eqref{E:G_HJB} below.  Due to the CDS market, this PDE differs from that in \cite{MR4086602}, as the instantaneous covariance matrix of the combined equity and CDS wealth processes may degenerate on the interior of the state space. On a technical level, this requires non-trivial extensions of the PDE results in \cite{MR4086602}, but we are still able to verify the certainty equivalent solves the PDE \eqref{E:G_HJB} by appealing to the classical theory of semi-linear equations in \cite{MR0181836, MR1465184}, and using both duality (see, e.g \cite{MR1865021, MR2489605}) and delicate localization (c.f. \cite{MR4086602}) arguments.

Qualitatively, our main finding is that the investor does not hold a position in the CDS solely to offset losses in the equity.  Rather, the investor holds a position to satisfy the heuristic relationship (see Sections \ref{SS:detwtg} and \ref{SS:opt_policy})
\begin{equation}\label{E:qual}
    \begin{split}
    \textrm{CDS Dollar Position}  &= \textrm{Equity Loss }  + \textrm{ Loss due to Stoppage of Trade.}
    \end{split}
\end{equation}
Above, the second term is the monetary value ``lost'' by the investor because after default, she cannot trade in the defaulted securities.  This decomposition is intuitively clear, as the investor is aware that default means more than just a loss in the equity position: it means she cannot trade after default as well.

Second, we find that if the equity market absent default is complete, then the equity-CDS market including default is complete, provided a certain (very mild) non-degeneracy condition holds: see equation \eqref{E:E_barA_def} and Assumption \ref{A:complete_mkt} below.   While on the one hand it is clear that by adding a tradeable asset one may hedge against an additional source of uncertainty, on the other hand, the non-degeneracy condition was a surprise (at least a-prori). However, as we explain in Section \ref{S:complete_mkt}, this condition is necessary to rule out the CDS being a redundant asset (compared to the equity), and can be verified using the non-degeneracy results of \cite{MR3131287, MR3590708}. 

Continuing, we show (for general model specifications) the investor hedges against default primarily through her CDS position, and not through the equity position.  This is seen numerically in Section \ref{S:numerics}, and shows the CDS market is doing what it should: providing a mechanism to hedge against default in deteriorating market conditions.  In particular, we show the investor does not hedge default risk by shorting the stock, which would be difficult to practically implement.  This stands in direct contrast to when the CDS market is not present, as therein it was shown in \cite[Section 4]{MR4086602}) the investor does short the equity.

We now describe the model. The time horizon is $[t,T]$ and we set the interest rate to $0$. Absent default, the equity process $S^e$ has dynamics driven by a diffusive factor process $X$, with shocks to $S^e$ only partially correlated with those to $X$,\footnote{This is is line with the models encountered in \cite{kim1996dnp, wachter2002pac, MR2048829, MR2178034, MR2206349, liu2007pss, buraschi2010correlation, MR2932547, MR4086602} among many others.} see \eqref{E:SDE} and \eqref{E:SE_dynamics}. The default (or ''credit-event'') time is $\tau$, at which time the equities experience a proportional loss governed by a loss function $\ell_e$.   The default time has $\filt^{X}$ intensity function $\gamma$, and in addition to viewing $X$ and $S^e$, investors also observe the default indicator process $H = \cbra{H_t = 1_{t\geq \tau}}$. The CDS price process $S^r$ dynamics are from  \cite{MR2474544} and require comment. Therein, the dynamics were obtained under an exogenously specified ``spot pricing measure'' $\tprob$,  and the rolling CDS contract has horizon $\wt{T}$, which we assume is larger than the investor's horizon $T$.\footnote{This corresponds to market being in existence throughout investor's time period.} We connect pull the dynamics back to $\prob$ by specifying the $\tprob$ default intensity $\wt{\gamma}$ and $\filt^X$ risk premia $\wt{\nu}$.\footnote{The idea of exogenously specifying CDS price dynamics under a pricing measure and then pulling back to the physical measure is also used in, for example, \cite{capponi2014dynamic}.} To ensure consistency with \cite{MR2474544}, we require $\tprob$ to be a martingale measure for $S^e$, but as our market is generically incomplete, $\tprob$ is simply one of the martingale measures.  We then combine $S^e,S^r$ into a single price process $S$.

The agent has CARA preferences (with risk aversion $\alpha$) from terminal wealth, and a random endowment of the form $\phi(X_T)1_{\tau > T} + \psi(\tau,X_{\tau})1_{\tau \leq T}$. $\phi$ is a default dependent claim, and while we allow for general payoff functions $\phi$, we are primarily interested in $\phi \equiv q$ for some $q\in\reals$, as this corresponds to $q$ face of  defaultable bond.  $\psi$ is a ``payoff'' the investor receives upon default, which allows us to account for both partial recovery in the defaultable bond and the investor's indirect utility had she traded over the period $[\tau,1]$ in the remaining non-defaulted assets.   In this setting, standard heuristic arguments indicate the value function at $t<T, X_t = x$, and given investor wealth $w$, is of the form $u(t,x,w) = -\xpn{-\alpha (w + G(t,x))}$, where the indirect utility function $G$ satisfies the Hamilton-Jacoby-Bellman (HJB) equation \eqref{E:G_HJB} with Hamiltonian $H$ from \eqref{E:hamil}.

To solve the HJB equation we assume (see Assumptions \ref{A:complete_mkt} and \ref{A:incomplete_mkt} respectively) one of two scenarios. First, that both the equity market absent default and the equity-CDS market allowing for default, are complete with martingale measure $\tprob$. Interestingly, equity-CDS market completeness requires non-degeneracy of the function $\Efnalt$ of \eqref{E:E_barA_def}, which itself ensures the CDS is not a redundant asset, compared to $S^e$.  As the equity-CDS market is complete, the PDE for $G$ linearizes and, under a very mild no arbitrage condition (see Assumptions \ref{A:phi_psi_alt}, \ref{A:complete_mkt}), we obtain in Theorem \ref{T:main_result_complete} a smooth solution to the HJB equation which is also the certainty equivalent function.

In the second scenario, the market absent default is ``strictly'' incomplete (in that the shocks affecting $S^e,X$ have correlation matrix which lies below $(1-\eps)\idmat{d}$ uniformly for some $\eps > 0$).  Here, the PDE for $G$ does not linearize, and we cannot directly invoke the results of \cite{MR4086602} to obtain existence of solutions due to potential degeneracy of the CDS volatility function $\sigma_r$ of \eqref{E:sigR_def}\footnote{For example, \eqref{E:wtu_wtv_dfn} implies that when the default intensity under the spot pricing measure is deterministic, $\abs{\sigma_r}$ is identically $0$.}.  However, provided either (i) $\sigma_r$ is identically degenerate or (ii) $\sigma_r$ is never degenerate (see Assumption \ref{A:incomplete_mkt} for a precise statement) we suitably modify the proofs in \cite{MR4086602} to verify in Theorem \ref{T:main_result} existence of a solution to the PDE which is also the value function. 

Having presented the main existence and verification result, we discuss the optimal policies in Sections \ref{SS:detwtg} and \ref{SS:opt_policy} respectively, making precise the heuristic in \eqref{E:qual}.  In Section \ref{S:indiff_pricing} we recall the concept of the a utility indifference price, and connect the price to the certainty equivalent.

In Section \ref{S:numerics} we perform a numerical application when the underlying factor process is CIR.  Here, in the complete market setting of Assumption \ref{A:complete_mkt} we display three very interesting results.  First, when the CDS market is present, rather than shorting the stock (as occurs absent the CDS market, see \cite{MR4086602}), the investor holds a stable equity position across a range of default intensities: see Figure \ref{fig:Optimal Defaultable Equity Positions}.  This implies the investor is using the CDS market as intended, to mitigate default risk.  Second, in Figure \ref{fig:Optimal CDS Positions} we show that the CDS position displays very little variation over both time and the state variable.   Indeed, the right-plot therein dramatically shows how stable the position is by plotting the minimal and maximal CDS positions (over the state space) as a function of time. This shows the CDS positions are implementable in practice (where there might not be a liquid market for dynamic CDS trading). Lastly, in Figure \ref{fig:RB} we plot the relative benefit of the CDS market (defined by the ratio $\textrm{CE}^{\textrm{CDS}}/\textrm{CE}^{\textrm{No CDS}} -1$) where ``No CDS'' means absent the CDS market. Especially for high default intensities and large positions in the defaultable bond, the relative benefit is quite large.

Section \ref{S:numerics} also considers an incomplete market example, where there are two equities, one of which can default. Despite market incompleteness, Figures \ref{fig:stochastic_UIP} and \ref{fig:stochastic_CDS_Positions} indicate the investor's ability to accurately hedge the defaultable bond.  Indeed, Figure \ref{fig:stochastic_UIP} shows the time zero indifference price is almost independent of the notional, and Figure \ref{fig:stochastic_CDS_Positions} shows the time zero CDS position grows in almost one-to-one correspondence with the notional.  Interestingly, in Figure \ref{fig:stochastic_CDS_Positions} we see the position in the defaultable equity varies very little with either the notional or the state variable, further indicating that hedging is being done in the CDS market.

This paper is organized as follows.  The model and optimal investment problem are  presented in Sections \ref{S:prob} and \ref{S:opt_invest}. The HJB equation for the certainty equivalent is identified in Section \ref{S:HJB_CE}. Section \ref{S:complete_mkt} presents results in the complete market setting, and Section \ref{S:incomplete_mkt} presents results in the incomplete market setting. Section \ref{S:indiff_pricing} discusses indifference pricing and Section \ref{S:numerics} contains the numerical example.  A conclusion follows in Section \ref{S:conclusion}. Proofs are contained in Appendices \ref{AS:HJB} -- \ref{AS:PropE9Lem}.

\section{Probabilistic setup and factor process}\label{S:prob}

There is a complete probability space $\probtripleg$ that supports two independent Brownian motions, $W$ and $B$, of respective dimensions $d$ and $k$, as well as an independent random variable $U\sim U(0,1)$. We denote by $\filt^{W}$ and $\filt^{W,B}$ the $\prob$ augmented natural filtrations of $W$ and $(W,B)$ respectively.  There is a process $X$ that represents the dynamic evolution of factors fundamental to the economy.  $X$ is driven by the Brownian motion $W$ and has dynamics
\begin{equation}\label{E:SDE}
    dX_s = b(s,X_s) ds + a(s,X_s) dW_s.
\end{equation}
$X$ takes values in a region $\OO\subset\reals^d$ and we assume
\begin{assumption}\label{A:region}\text{}
    \begin{enumerate}[(i)]
    \item $\OO \subset \mathbb{R}^d$ is an open, and there exists a sequence of open, bounded, connected sub-regions $\left\{ \OO_n \right\}$ with $\ol{\OO}_n \subset \OO_{n + 1},$ and $\OO = \bigcup_n \OO_n.$ For each $\OO_n,$ $\partial \OO \in C^{2, \beta}$.\footnote{See \cite[Section 3.2]{MR1326606} for a precise definition of $\partial \OO \in C^{2,\beta}$, and throughout $\beta \in (0,1]$ is fixed constant. The primary examples are $\OO = \reals^d$ with $\OO_n$ the ball of radius $n$, and $\OO = (0,\infty)$ with $\OO_n= (1/n,n)$.}
    \item $b \in C^{(1, 1)}([0, \infty) \times \OO; \reals^d)$ and $A \in C^{(1, 1)}([0, \infty) \times \OO; \sdpos)$\footnote{$\mathbb{S}^d_{+}$ is the set of symmetric non-negative definite $d\times d$ matrices, and $\sdpos$ is the strictly positive definite subset.}, and  for each fixed $(t,x)$, $a(t,x) = \sqrt{A(t,x)}$, the unique symmetric positive definite square root.   For any starting point $x\in\OO$ and time $t \geq 0$ there is a unique strong solution to \eqref{E:SDE} starting at $t$ with $X_t = x$, which we will write $X^{t,x}$.
    \end{enumerate}
\end{assumption}

\begin{remark} It is well known (see \cite[Chapter 5]{MR1121940}) that a unique strong solution taking values in $\OO=\reals^d$ will exist if $a,b$ are globally Lipshitz in space, locally uniformly in time, and of linear growth.  Additionally, in the time-homogeneous case, from \cite[Chapter IX]{MR1725357} the strong solution property will hold provided the process $X$ does not explode to the boundary of $\OO$ in finite time, and in the univariate setting there are necessary and sufficient conditions (see \cite[Theorem 5.1.5]{MR1326606}) for explosion to occur. 
\end{remark}

\section{The Optimal Investment Problem}\label{S:opt_invest}

Fix a starting time $t\geq 0$ and location $x\in\OO$. There are three assets available for investment: a money market, an equity market, and a rolling CDS market.  This latter market is relevant as there is a default time $\tau$ (or more appropriately named a credit-event time) which affects investment opportunities. We assume $\tau$ has $\filt^{W,B}$ intensity governed by a function $\gamma$ of time and state. To enforce this, and following the canonical reduced form construction (see \cite{bielecki2013credit}) we set\footnote{As $(t,x)$ are fixed, we omit their dependence so that, for example, $\tau^{t,x}$ is written $\tau$ and $X^{t,x}$ is written $X$.}
\begin{equation*}
    \tau= \inf \cbra{ s \geq t \such \int_t^s \gamma(u, X_u) du = -\log(U)}.
\end{equation*}
This implies for $s>t$
\begin{equation*}
    \condprobs{\tau > s}{\F^{W,B}_{\infty}} = \condprobs{\tau > s }{\F^{W,B}_s} = \xpn{-\int_t^s \gamma(u, X_u) du},
\end{equation*}
and hence $s \to \gamma(s,X_s), s\geq t$ is the $\filt^{W,B}$ default intensity for $\tau$ under $\prob$.  Given $\tau$ we define the default indicator process $H_\cdot = \mathbbm{1}_{\tau \leq \cdot}$, as well as the filtration $\filtg$ which is the $\prob$-augmentation of $W,B$ and $H$'s natural filtration.  Lastly, we remark (see \cite{bielecki2009credit}) that  $W,B$ remain Brownian motions in the $\filtg$ filtration and that $\filtg$ satisfies the usual conditions.

We next describe the markets. First, the money market pays a constant interest rate, which we set to $0$.  Second, the equity market has $k$ risky assets $S^e$ with dynamics on $[t,\infty)$
\begin{equation}\label{E:SE_dynamics}
    \frac{dS^e_{s}}{S^e_{s-}}  = \mathbbm{1}_{s \leq \tau} \left( \mu_e(s, X_s) ds + \sigma_e(s, X_s) dZ_s \right) - \ell_e(s, X_s) dH_s;\hspace{20pt} \langle Z, W \rangle_s  = \rho(s, X_s) ds.
\end{equation}
Here, $Z$ is the $k$-dimensional Brownian motion
\begin{equation}\label{E:Z_BM}
    Z_s = \int_t^s \rho(u,X_u)dW_u + \int_t^s \ol{\rho}(u,X_u)dB_u;\qquad s \geq t.
\end{equation}
$\rho$ is a $k\times d$ matrix valued function satisfying $\idmat{k} - \rho\rho' \in \mathbb{S}^k_{+}$, and $\ol{\rho} = \sqrt{\idmat{k} - \rho\rho'}$\footnote{Throughout, $'$ denotes transposition, $\idmat{p}$ is the $p$ dimensional identity matrix and $0_p$ is the $p$ dimensional $0$ vector.}.  The function $\mu_e$ represents the pre-default drift, and $\sigma_e$ the pre-default volatility.  Prior to default, $S^e$ is an It\^{o} process with dynamics governed by $X$, and whose shocks are partially correlated with those driving $X$.   Upon default, $S^e$ experiences a downward jump, the size of which is determined by the fractional loss vector-valued function $\ell_e$. Our assumptions on the default intensity and model coefficients are

\begin{assumption}\label{A:SE_reg}  As functions defined on $[0,\infty)\times \OO$ and taking values in the respective state spaces $(0,\infty)$, $\reals^k$, $\spos{k}$, $\reals^{k\times d}$ and $\reals^k$ we have $\gamma, \mu_e, \sigma_e, \rho$ and $\ell_e$ are all in $C^{(1,1)}$.  Additionally, the correlation function $\rho$ satisfies  $\idmat{k} - \rho \rho' \in \mathbb{S}^k_{+}$ and the loss function $\ell_e$ satisfies both $\ell_e'\ell_e > 0$ and $0 \leq \ell_e^{(i)}\leq 1, i=1,\dots, k$.

\end{assumption}

\begin{remark} $\ell_e'\ell_e > 0$ ensures at least one of the equities has a fractional loss upon default. Additionally, \cite[Lemma 1.7.3]{MR1326606} implies $\bar{\rho} \in C^{(1,1)}([0,\infty)\times \OO; \mathbb{S}^k_{+})$, inheriting $\rho$'s regularity.
\end{remark}

It is well known the process
\begin{equation*}
    M_s \dfn H_s - \int_t^{s\wedge \tau} \gamma(u,X_u)du;\qquad s\geq t,
\end{equation*}
is a  $\filtg$  local martingale, and hence we obtain the $\filtg$ semi-martingale decomposition for $S^e$
\begin{equation}\label{E:Se_smgle}
    \frac{dS^e_{s}}{S^e_{s-}} =  \mathbbm{1}_{s \leq \tau} \left(\left(\mu_e -\gamma\ell_e\right)(s, X_s) ds + \left(\sigma_e\rho\right)(s, X_s) dW_s + \left(\sigma_e\ol{\rho}\right)(s,X_s)dB_s\right)  - \ell_e(s, X_s)  dM_s.
\end{equation}

Lastly, we turn to the CDS market. Here, we use \cite{MR2474544} to associate a price process $S^r$ ($r$ stands for ``rolling'') to dynamic trading in a rolling CDS contract which fully indemnifies investors from losses in the event of default  over the time period $[t,\wt{T}]$ where $\wt{T}$ is CDS contract maturity\footnote{We write $\wt{T}$ for the CDS maturity as it need not coincide with the agent's investment horizon $T$ defined below, but we require $T <  \wt{T}$. We separate the times because there is no a-priori reason to believe the CDS market (for which $\wt{T} = 2, 5$ years are typical terms) has the same horizon as the individual investor.}.  In the rolling strategy,  at each time $s \in [t,\wt{T}]$ one enters a CDS contract and then unwinds at a short time $ds$ later.  The protection premium is paid continuously over $[s,s+ds]$, and should there be a default during this time, the holder is fully protected in her position. 

To obtain the wealth process dynamics for investment in the rolling CDS, \cite{MR2474544} assumes the market values securities using an exogenously given spot pricing measure $\tprob$. For the sake of consistency, $\tprob$ should be an equivalent local martingale measure for $S^e$, and this places restrictions on $\tprob$.  To see the restrictions, assume the investor has horizon $T < \wt{T}$. Using \eqref{E:Se_smgle}, one can show if a measure $\qprob$ is equivalent to $\prob$ on $\G_T$ with density
\begin{equation}\label{E:mm_den}
    \frac{d\qprob}{d\prob}\Big|_{\G_T} = \Ecal\left(\int \mbf{A}_u'dW_u + \mbf{B}_u'dB_u + \mbf{C}_u dM_u\right)_T,\footnote{$\Ecal(\cdot)$ is the Doleans-Dade stochastic exponential and $\mbf{A},\mbf{B},\mbf{C}$ are $\filtg$ predictable processes with $\mbf{C}>-1$.}
\end{equation}
then $\qprob$ is an equivalent local martingale measure for $S^e$ if and only if on $[t,\tau\wedge T]$
\begin{equation}\label{E:mpr_1}
    0 = \mu_e - \gamma\ell_e + \sigma_e\rho\mbf{A} + \sigma_e\bar{\rho}\mbf{B} - \gamma \ell_e \mbf{C}.
\end{equation}
First, consider when the equity market absent is default is complete, in that the number of factors $d$ equals the number of assets $k$ and $\rho \equiv \idmat{d}$. Here, $\tprob$ is identified by setting
\begin{equation}\label{E:tprob_mpr_c}
    \wt{\mbf{A}} = -\sigma_e^{-1}\left(\mu_e - \wt{\gamma}\ell_e\right);\qquad \wt{\mbf{B}} = 0;\qquad \wt{\mbf{C}} = \frac{\wt{\gamma}}{\gamma} - 1,
\end{equation}
where $\wt{\gamma}$, which we may exogenously specify, is the $\filt^{W,B}$ default intensity function under $\tprob$. In the general case, to enforce \eqref{E:mpr_1} we may exogenously choose both the $\tprob$ default intensity function $\wt{\gamma}$ and $W$ equity risk premia function $\tprobnu$ and then set
\begin{equation}\label{E:tprob_mpr_ic}
    \wt{\mbf{A}} = -\tprobnu;\qquad \wt{\mbf{B}} = -(\ol{\rho})^{-1}\left(\sigma_e^{-1}(\mu_e - \wt{\gamma}\ell_e) - \rho \tprobnu\right);\qquad \wt{\mbf{C}} = \frac{\wt{\gamma}}{\gamma} - 1.
\end{equation}
With this as motivation, we assume the following about $\tprob$.

\begin{assumption}\label{A:alt_gamma} $\tprob$ is equivalent to $\prob$ on $\G_S$ for each $S>0$, and under $\tprob$ for any interval $[t,S]$
    \begin{enumerate}[(i)]
        \item $\tau$ has $\filt^{W,B}$ intensity  process $s\to  \wt{\gamma}(s,X_s)$ where $\wt{\gamma}\in C^{(1,1)}([0,\infty)\times \OO; (0, \infty))$.
        \item $s\to \wt{W}_s \dfn W_s + \int_t^s \tprobnu(u,X_u)du$ is a $(\tprob,\filt^W)$ Brownian motion, where $\tprobnu\in C^{(1,1)}([0,\infty) \times \OO; (0, \infty))$.
    \end{enumerate}
\end{assumption}

\begin{remark} When $d=k$ and $\rho \equiv \idmat{d}$  from \eqref{E:tprob_mpr_c}, we have $\tprobnu = \sigma_e^{-1}\left(\mu_e - \wt{\gamma}\ell_e\right)$ while in the strictly incomplete case, $\tprobnu$ is an exogenous function. In each case, we assume $\wt{\nu}$ satisfies part (ii) of Assumption \ref{A:alt_gamma}.
\end{remark}

For the CDS maturity $\wt{T}\geq t$, Assumption \ref{A:alt_gamma} implies $X$ is non-explosive under $\tprob$ over $[t,\wt{T}]$, with dynamics
\begin{equation}\label{E:tprob_SDE}
    dX_s = \left(b - a\tprobnu\right)(s,X_s)ds + a(s,X_s)d\wt{W}_s;\qquad X_t = x.
\end{equation}
This, along with the non-negativity of $\wt{\gamma}$ allows us to define the functions on $[t,\wt{T}]\times \OO$
\begin{equation}\label{E:wtu_wtv_dfn}
    \wt{u}(s,y) \dfn 1 - \wtcondexpv{}{}{\e^{-\int_s^{\wt{T}} \wt{\gamma}(u,X_u)du}}{X_s =y};\quad \wt{v}(s,y) \dfn \wtcondexpv{}{}{\int_s^{\wt{T}} \e^{-\int_s^v \wt{\gamma}(u,X_u)du}dv}{X_s=y},
\end{equation}
where we have written $\wt{\mathbb{E}}$ instead of $\mathbb{E}^{\tprob}$. As can be deduced from \cite[pp. 2507]{MR2474544}, $\wt{\kappa}(s,X_s) = \wt{u}(s,X_s)/\wt{v}(s,X_s)$ is the fair CDS spread at time $s$ for a CDS which offers full protection over the horizon $[s,\wt{T}]$.  With this notation, we obtain from \cite[Lemma 2.4]{MR2474544} the following proposition, the proof of which is given in Appendix \ref{AS:opt_invest}.

\begin{proposition}\label{P:cds_dynamics}
Under Assumptions \ref{A:region}  and \ref{A:alt_gamma}, the per-dollar wealth process $R_{\cdot}$ associated with the rolling CDS strategy has dynamics on $[t,\wt{T}]$
\begin{equation}\label{E:cds_dynamics}
    d R_s =   \mathbbm{1}_{s \leq \tau}\left(\sigma_r(s,X_s)'a(s,X_s)\left(dW_s + \tprobnu(s,X_s)ds\right) - \wt{\gamma}(s,X_s)ds\right) + dH_s,
\end{equation}
where
\begin{equation}\label{E:sigR_def}
    \sigma_r(s,y) \dfn \wt{u}(s,y) \times \nabla_{y} \log\left(\frac{\wt{u}(s,y)}{\wt{v}(s,y)}\right).
\end{equation}

\end{proposition}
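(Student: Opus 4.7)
The plan is to derive the formula directly from \cite[Lemma 2.4]{MR2474544}, which already establishes the per-dollar wealth dynamics of a rolling CDS strategy under the exogenous spot pricing measure $\tprob$, and then translate that expression back to $\prob$ via the Girsanov transformation built into Assumption \ref{A:alt_gamma}. The essential idea is that under $\tprob$ the process $R$ must be a local martingale (apart from the default jump and its compensator), because $\tprob$ prices the CDS and a rolling strategy that continually enters on-the-run contracts at the fair spread has zero cost at inception.

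First, I would recall from \cite{MR2474544} the construction of $R$ as a refinement limit of strategies that enter a CDS at time $s$ with fair spread $\wt{\kappa}(s,X_s) = \wt{u}(s,X_s)/\wt{v}(s,X_s)$ (so that the initial pre-default contract value $\wt{u}(s,X_s)-\wt{\kappa}(s,X_s)\wt{v}(s,X_s)$ vanishes) and unwind a short time later. Passing to the rolling limit requires differentiating the pre-default value $\wt{u}(r,X_r) - \wt{\kappa}(s,X_s)\wt{v}(r,X_r)$ in $r$ at $r=s$, which in turn requires It\^{o}'s formula applied to $\wt{u}$ and $\wt{v}$ along the $\tprob$-SDE \eqref{E:tprob_SDE}. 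Collecting the martingale pieces produces $\nabla_y \wt{u}'a\, d\wt{W}_s - \wt{\kappa}\nabla_y \wt{v}'a\, d\wt{W}_s$; dividing by $\wt{v}$ to normalize to one dollar of notional, and using $\wt{\kappa}=\wt{u}/\wt{v}$, naturally produces the function $\sigma_r = \wt{u}\,\nabla_y \log(\wt{u}/\wt{v})$ of \eqref{E:sigR_def}. Combining with the continuously paid premium $\wt{\kappa}$, the unit default payoff encoded by $dH_s$, and the $\tprob$-compensator $\wt{\gamma}(s,X_s)ds$ of $H$ prior to default, one obtains the $\tprob$-dynamics
\begin{equation*}
    dR_s = \mathbbm{1}_{s \leq \tau}\sigma_r(s,X_s)'a(s,X_s)\,d\wt{W}_s \;+\; dH_s \;-\; \mathbbm{1}_{s\leq \tau}\wt{\gamma}(s,X_s)\,ds,
\end{equation*}
which is a local $\tprob$-martingale, as expected.

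Finally, substituting $d\wt{W}_s = dW_s + \tprobnu(s,X_s)ds$ from Assumption \ref{A:alt_gamma}(ii) yields the claimed identity \eqref{E:cds_dynamics}. The main (and essentially only) obstacle is verifying enough regularity of $\wt{u}$ and $\wt{v}$ to justify It\^{o}'s formula and ensure $\sigma_r$ is well-defined and locally bounded on $[t,\wt{T}]\times \OO$; this follows from standard Feynman-Kac / parabolic regularity arguments using the $C^{(1,1)}$ smoothness of the coefficients assumed in Assumptions \ref{A:region} and \ref{A:alt_gamma}, together with non-explosion of $X$ under $\tprob$ on $[t,\wt{T}]$ guaranteed by Assumption \ref{A:alt_gamma}. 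All other steps are bookkeeping driven by It\^{o}'s formula and the change of measure.
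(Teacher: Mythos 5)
Your route is essentially the one the paper follows: determine the PDEs solved by $\wt{u},\wt{v}$ via a Feynman--Ka\v{c} argument (the paper invokes \cite{Heath-Schweizer} for this), apply It\^{o}'s formula along the $\tprob$-SDE \eqref{E:tprob_SDE} to get the $\tprob$-dynamics of the rolling-strategy wealth, and then Girsanov back to $\prob$ using Assumption \ref{A:alt_gamma}(ii). The paper short-circuits the It\^{o} bookkeeping by matching its objects onto the quantities $m^1, m^2, G, \kappa$ appearing in \cite[Lemma 2.4]{MR2474544} and invoking that lemma as a black box, whereas you reconstruct that lemma's conclusion from scratch; this is the same underlying argument, not a genuinely different one. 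One arithmetic slip worth correcting: $\nabla_y\wt{u}-\wt{\kappa}\nabla_y\wt{v}$ already equals $\wt{u}\,\nabla_y\log(\wt{u}/\wt{v})=\sigma_r$ with no further normalization, since $\wt{u}\bigl(\nabla_y\wt{u}/\wt{u}-\nabla_y\wt{v}/\wt{v}\bigr)=\nabla_y\wt{u}-(\wt{u}/\wt{v})\nabla_y\wt{v}$; dividing by $\wt{v}$ as you describe would over-normalize and produce $\sigma_r/\wt{v}$ rather than $\sigma_r$. The remaining pieces of your plan (the premium/mark-to-market cancellation giving the $-\wt{\gamma}\,ds$ drift, the default jump $dH_s$, the substitution $d\wt{W}_s = dW_s + \tprobnu\,ds$, and the regularity needed for It\^{o} coming from the parabolic PDEs for $\wt{u},\wt{v}$) are correct and line up with the paper's proof.
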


\begin{remark}Above, $\wt{u}$ is the value of the CDS protection, and  $a'\nabla_y \log(\wt{u}(s,y)/\wt{v}(s,y))$ is the volatility of the log CDS spread. Also, when the intensity function $\wt{\gamma}$ only depends on time, neither $\wt{u},\wt{v}$ depend on $y$, and hence $\sigma_r \equiv 0_d$ implying
\begin{equation*}
    dR_s = dH_s - \mathbbm{1}_{s\leq \tau} \wt{\gamma}(s)ds.
\end{equation*}
\end{remark}

Proposition \ref{P:cds_dynamics} allows us to associate with the CDS market a fictitious asset $S^r$ with dynamics over $[t,\wt{T}]$
\begin{equation*}
    \frac{\dif S^r_{s}}{S^r_{s-}} = \mathbbm{1}_{s \leq \tau} \left(\left(\sigma_r'a\tprobnu -\wt{\gamma} \right)(s, X_s)ds + \left(\sigma_ra\right)(s, X_s) d W_s \right) + d H_s,
\end{equation*}
and hence we combine $S^e, S^r$ into a  $k+1$ dimensional  process $S$ with dynamics
\begin{equation}\label{E:S_def}
\frac{dS_s}{S_{s-}} = \mathbbm{1}_{s\leq \tau}\left(\mu(s,X_s)ds + \sigma_W(s,X_s)dW_s + \sigma_B(s,X_s)dB_s\right) - \ell(s,X_s)dH_s,
\end{equation}
where
\begin{equation}\label{E:coeffs}
    \mu = \begin{pmatrix} \mu_e \\ \sigma_r' a \tprobnu - \wt{\gamma} \end{pmatrix}, \hspace{10pt} 
    \sigma_W = \begin{pmatrix} \sigma_e \rho \\ \sigma_r' a \end{pmatrix}, \hspace{10pt}  
    \sigma_B = \begin{pmatrix} \sigma_e \ol{\rho} \\ 0 \end{pmatrix}, \hspace{10pt} 
    \ell = \begin{pmatrix} \ell_e \\ -1 \end{pmatrix}.
\end{equation}

\subsection*{Martingale measures, wealth processes and acceptable trading strategies} Given the traded assets $S$, we define the class of equivalent local martingale measures, wealth processes and admissible strategies in the usual manner.  The investment horizon is $T<\wt{T}$, and the equivalent local martingale measures are
\begin{equation*}
        \M \dfn \cbra{ \qprob \such \qprob \sim \prob \textrm{ on } \G_T \textrm{ and } S \textrm{ is a } \qprob \textrm{ local martingale}}.
\end{equation*}
With an eye towards the optimal investment problem for an agent with CARA preferences, we set $\wt{M}$ as the subset with finite relative entropy
\begin{equation*}
    \wt{\M} \dfn \cbra{\qprob \in \M \such \relent{\qprob}{\prob} < \infty},
\end{equation*}
where for $\nu<<\nu$, $\relent{\nu}{\mu} = \expv{\nu}{}{\log(d\nu/d\mu)}$ is the relative entropy of $\nu$ with respect to $\mu$.

Trading strategies in $S$ are denoted by $\pi$, where for $j=1,...,k+1$, and $t\leq s \leq T$,  $\pi^j_s(\omega)$ is the dollar position in $S^j$ at time $s$ and scenario $\omega$.  We require $\pi\in\mcp(\filtg)$, the $\filtg$ predictable sigma-field, and hence $\pi$ coincides with a $\filt^{W,B}$ predictable process on the stochastic interval $[t,\tau]$ (see \cite{bielecki2009credit}).  We  write $\pi = (\theta,\delta)$ where $\theta$ is the position in $S^e$ and $\delta$ the position in $S^r$.  If $\pi$ is additionally $S$-integrable, the resultant wealth process for a time $t$ initial wealth $w$ is denoted $\We^{\pi}$ or $\We^{(\theta,\delta)}$ and has dynamics (omitting the $(s,X_s)$ function argument)
\begin{equation*}
    \begin{split}
        d\We^{\pi}_s = &  \pi_s'\left(\mathbbm{1}_{s \leq \tau} \left(\mu ds + \sigma_W dW_s + \sigma_B dB_s\right)-  \ell dH_s\right); \hspace{20pt} \We_t^{\pi} = w.
    \end{split}
\end{equation*}
With this notation, the admissible class of strategies $\A$ is
\begin{equation}\label{E:admiss}
\A = \cbra{ \pi \in \mcp(\filtg) \such \pi \textrm{ is $S$ integrable, }\We^{\pi} \textrm{ is a $\qprob$ supermartingale for all } \qprob\in \wt{\M}}.
\end{equation}

\subsection*{The agent, random endowment, and the optimal investment problem}

Having defined the market and trading strategies, we now turn to the agent, who derives utility from terminal consumption using the exponential or CARA utility function
\begin{equation*}
    U(w) \dfn - e^{-\alpha w};\qquad w\in \reals.
\end{equation*}
In addition to trading in $S$, the agent has a non-traded random endowment of the form
\begin{equation*}
    \phi(X_T)1_{\tau > T}  + \psi(\tau,X_{\tau})1_{\tau \leq T}.
\end{equation*}
Above, $\phi$ is a claim with payoff contingent upon no default by $T$. While the primary examples we have in mind are either no claim ($\phi \equiv 0$) or $q$ notional of a defaultable bond ($\phi\equiv q$), motivated by the discussion on exotic credit linked derivatives in \cite{schonbucher2003credit} we allow for payoffs which may depend upon $X_T$. Conversely, $\psi$ represents any ``payoff'' the investor may receive upon default.  The idea is that even though in our model investment stops at $\tau$, in reality there will be investment opportunities after $\tau$, and $\psi(\tau,X_{\tau})$ is the time $\tau$ value of future investment over $[\tau,T]$.  In Section \ref{S:numerics} will explicitly construct  $\psi$ using optimal investment results for affine stochastic volatility models as found in  \cite{MR2660149}, but for now we take $\psi$ as given.  Our assumptions on $\phi,\psi$ are

\begin{assumption}\label{A:phi_psi_alt} $\phi \in C^{2,\beta}(\OO,\reals)$\footnote{$C^{2,\beta}(\OO;\reals)$ consists of $C^{2}$ functions on $\OO$ whose first and second partial derivatives are $\beta$ H\"{o}lder continuous on each $\ol{\OO}_n$. Similarly, $C^{(1,1),\beta}([0,T]\times \OO;[0,\infty))$ requires  appropriate H\"{o}lder continuity on each $[0,T]\times\OO_n$.} is bounded from below with $\underline{\phi} \dfn 0 \wedge \inf_{x \in \OO} \phi(x)$. For each $n$ 
\begin{equation*}
    \sup_{t\leq T, x\in \ol{\OO}_n} \wtexpv{}{}{\phi(X^{t, x}_T)} <\infty.
\end{equation*}
$\psi \in C^{(1, 1)}([0, T] \times \OO; [0,\infty))$, with $\psi (T, \cdot) \equiv 0$. Either $\psi$ is bounded from above, or for each $n$
\begin{equation*}
        \sup_{t\leq T, x \in \ol{\OO}_n}\wtexpv{}{}{\int_t^T (\psi\wt{\gamma})(u, X^{t,x}_u)du} < \infty.
\end{equation*}
\end{assumption}

Recalling the starting time/location $(t,x)$ and wealth $w$, the agent's optimal investment problem is to identify
\begin{equation}\label{E:vf_tx}
    u(t, x, w) \dfn \sup_{\pi\in\A} \expvs{ - \xpn{-\alpha \left( \We^{\pi}_T + \phi(X_T) \mathbbm{1}_{\tau > T} + \Psi(\tau, X_{\tau}) \mathbbm{1}_{\tau \leq T} \right)}}.
\end{equation}
Lastly, we record the dual problem to \eqref{E:vf_tx}, as it is used throughout.
\begin{equation}\label{E:dual_problem}
    v(t, x) \dfn \inf_{\qprob \in \wt{\M}} \left( \relent{\qprob}{\prob} + \alpha \expv{\qprob}{}{\phi(X_T)1_{\tau > T}  + \psi(\tau,X_{\tau})1_{\tau \leq T}} \right).
\end{equation}
The dual problem clarifies our assumptions on $\phi,\psi$.  Indeed, Assumption \ref{A:phi_psi_alt} simply posits the existence of a martingale measure which integrates the claims $\phi,\psi$, with a slight strengthening to ensure the expected values are locally uniformly bounded in the starting points $(t,x)$.

\section{The Certainty Equivalent Hamilton Jacoby Bellman (HJB) Equation}\label{S:HJB_CE} In this section we identify the HJB equation and PDE for the certainty equivalent function.  After formally identifying the HJB equation, we will separate our presentation into two cases (see Assumptions \ref{A:complete_mkt}, \ref{A:incomplete_mkt} below), according to when the market absent default is complete or incomplete. Proofs of results in this section are given in Appendix \ref{AS:HJB_CE}.

Due to exponential preferences, the initial wealth $w$ factors out of \eqref{E:vf_tx} so that $u(t, x, w) = e^{-\alpha w} u(t, x, 0)$, and we define the certainty equivalent function 
\begin{equation} \label{certainty equivalent}    
    G(t, x) \coloneqq -\frac{1}{\alpha} \log \left( -u(t, x,0) \right).
\end{equation}
Next, define the instantaneous covariation matrices\footnote{$\Upsilon_e(s,X_s) = d\langle S^e,X\rangle_s/ds$ and $\Upsilon(s,X_s) = d\langle S,X\rangle_s/ds$.}
\begin{equation}\label{E:SigE_UpsE}
    \Sigma_e \dfn \sigma_e \sigma_e';\qquad \Upsilon_e \dfn \sigma_e \rho a';\qquad \Sigma \dfn \begin{pmatrix} \Sigma_e & \Upsilon_e\sigma_r \\ \sigma_r' \Upsilon_e' & \sigma_r' A \sigma_r \end{pmatrix};\qquad \Upsilon \dfn \begin{pmatrix} \Upsilon_e \\ \sigma_r' A \end{pmatrix}.
\end{equation}
Using the martingale optimality principle\footnote{A formal derivation is presented in Appendix \ref{AS:HJB}. However, we will use PDE and duality methods to rigorously verify that $G$ is the certainty equivalent function.}, the certainty equivalent function $G$ is expected to solve the PDE
\begin{equation}\label{E:G_HJB}
    \begin{split}
        0 &= G_t + LG -\frac{\alpha}{2}\nabla G' A \nabla G + \frac{\gamma}{\alpha} + \sup_{\pi} H(\pi,G,\nabla G),\qquad \phi = G(T,\cdot),
    \end{split}
\end{equation}
where $\nabla = \nabla_x$ is the gradient operator for $x$, $L$ is the extended generator for $X$ under $\prob$
\begin{equation*}
    L \dfn \frac{1}{2}\tr\left(A D^2\right) + b'\nabla,
\end{equation*}
and $D^2$ is the Hessian operator. The Hamiltonian is
\begin{equation}\label{E:hamil}
H(\pi,g,p) = \pi'\left(\mu - \alpha \Upsilon p\right) - \frac{\alpha}{2}\pi'\Sigma\pi - \frac{\gamma}{\alpha} e^{\alpha(g + \pi'\ell - \psi)}.
\end{equation}
In \eqref{E:G_HJB}, the first equation must hold for $0\leq t < T, x\in\OO$ and the second for $x\in\OO$.

Formally, the Hamiltonian  \eqref{E:hamil} coincides with that in \cite[Equation (4)]{MR4086602} and hence from \cite[Equations (7), (8)]{MR4086602} one expects the optimal policy function
\begin{equation}\label{E:hat_pi}
\wh{\pi} = \frac{1}{\alpha}\Sigma^{-1}\left(\mu - \alpha\Upsilon p - \frac{\plogf{g}{p}}{\ell'\Sigma^{-1}\ell}\ell\right),
\end{equation}
and reduced Hamiltonian
\begin{equation}\label{E:MF_hamil}
    \begin{split}
        \Hcal(g,p) \dfn \sup_{\pi} H(\pi,g,p) = \frac{1}{2\alpha}(\mu - \alpha\Upsilon p)'\Sigma^{-1}(\mu - \alpha\Upsilon p) - \frac{\textrm{PL}(g,p)^2 + 2\textrm{PL}(g,p)}{2\alpha\ell'\Sigma^{-1}\ell}.
    \end{split}
\end{equation}
Here, we first defined
 \begin{equation*}
\plog{y} \dfn (ye^{y})^{-1};\quad y > 0,
\end{equation*}
as the inverse of $y e^{y}$ on $y>0$\footnote{\textrm{PL} is called the ``Product-Log'' or ``Lambert-W'' function, and we summarize its properties in Appendix \ref{AS:PL}.} and then, abusing notation, set
 \begin{equation}\label{E:plf_def}
     \textrm{PL}(g,p) \dfn \plog{\gamma \ell'\Sigma^{-1}\ell e^{\alpha g - \alpha\Psi +\ell'\Sigma^{-1}(\mu-\alpha\Upsilon p)}}.
 \end{equation}
However, there is a crucial difference which prohibits us from directly importing the results of \cite{MR4086602}. Namely, therein it was assumed  (similarly to Assumption \ref{A:SE_reg} for $\Sigma_e$), that $\Sigma = \Sigma(t,x) \in \spos{k}$ for all $t\leq T, x\in\OO$. Presently, $\Sigma$ from \eqref{E:SigE_UpsE} is the instantaneous covariation matrix for both the equity and CDS markets, and may not be strictly positive definite.  Indeed, one can show $\Sigma$ is not invertible when
\begin{equation*}
    \rho'\rho = \idmat{d},\qquad \textrm{ or } \qquad \sigma_r = 0_d.
\end{equation*}
Fortunately, it turns out that degeneracy of $\Sigma$ does not pose a problem, provided we separate analysis into two cases, corresponding to when the market is complete or not.

\section{Complete market}\label{S:complete_mkt}  In the first case, we analyze when the $(S^e,S^r)$ market is complete. Throughout, Assumptions \ref{A:region}, \ref{A:SE_reg}, \ref{A:alt_gamma} and \ref{A:phi_psi_alt} are in force. As a first step towards enforcing completeness, we assume the number of assets $k$ equals the number of factors $d$ and the correlation matrix function $\rho$ is identically equal to $\idmat{d}$.  In view of \eqref{E:Z_BM} the Brownian motion $B$ is irrelevant and we remove $B$ entirely by setting $\filtg$ as the $\prob$ augmentation of $W$ and $H$'s natural filtration\footnote{This is not technically required, but allows to assert, for example, that the market trading in $S^e$ absent default is complete with unique martingale measure $\tprob$ without worrying about the dynamics of $B$ under any martingale measure, as ultimately they are irrelevant to the optimal investment problem.}.  As is clear from \eqref{E:tprob_mpr_c}, $\tprob$ is the unique martingale measure for $S^e$ absent default. And, provided we make one additional assumption, $\tprob$ will be the unique martingale measure for $(S^e,S^r)$ and have finite relative entropy with respect to $\prob$ for all starting points $(t,x)$.  To state the assumption, define the functions 
\begin{equation}\label{E:E_barA_def}
\Efnalt(t,x) \dfn 1 + (\sigma_r'a\sigma_e^{-1}\ell_e)(t,x),
\end{equation}
and
\begin{equation}\label{E:Qval_complete}
Q_c(t,x)\dfn  \left(\frac{1}{2} \abs{\sigma_e^{-1}(\mu_e-\wt{\gamma}\ell_e)}^2 + \wt{\gamma} \left(\frac{\gamma}{\wt{\gamma}} - \log\left(\frac{\gamma}{\wt{\gamma}}\right) - 1\right)\right)(t,x).
\end{equation}
With these definitions, we assume
\begin{assumption}\label{A:complete_mkt}  The number of assets $k$ equals the number of factors $d$, the correlation matrix $\rho$ function is identically $\idmat{d}$, and $\filtg = \filt^{W,H}$. Additionally, $\Efnalt(t,x) \neq 0$ on $[0,T]\times\OO$ and for each $n$
\begin{equation*}
    \sup_{t\leq T, x\in\OO_n} \wtexpv{}{}{\int_t^T Q_c(u,X^{t,x}_u)du} < \infty.
\end{equation*}
\end{assumption}

\subsection{On Assumption \ref{A:complete_mkt}} Let us discuss the condition on $\Efnalt$ needed to ensure completeness of the $(S^e,S^r)$ market. First, for any measure $\qprob \in \M$ with density process $\Ecal\left(\int_t^\cdot \mbf{A}_s'dW_s + \int_t^{\cdot} \mbf{C}_s dM_s\right)$, using \eqref{E:S_def} with $\rho\equiv \idmat{d}$ (and hence $\ol{\rho}\equiv 0_{d\times d}$) the market price of risk equations are
\begin{equation*}
    0 = \left(\begin{array}{c} \mu_e - \gamma \ell_e \\ \sigma_r'a\sigma_e^{-1}(\mu_e - \wt{\gamma} \ell_e) - \wt{\gamma} + \gamma \end{array}\right) + \left(\begin{array}{c} \sigma_e \\ \sigma_r'a\end{array}\right)\mbf{A} - \gamma \mbf{C}\left(\begin{array}{c} \ell_e \\ -1\end{array}\right).
\end{equation*}
The top equation gives $\mbf{A} = -\sigma^{-1}_e\left(\mu_e - \gamma(1+\mathbf{C})\ell_e\right)$.  Plugging this into the bottom equation, and using \eqref{E:E_barA_def}, $\mbf{C}$ must solve
\begin{equation*}
    0 = \Efnalt\left(\gamma(1+\mbf{C}) - \wt{\gamma}\right).
\end{equation*}
Thus, if $\Efnalt\neq 0$ on $[t,T]\times \OO$, then we must have $\mbf{C} = \wt{\gamma}/\gamma - 1$ which implies $\qprob = \tprob$ and hence $(S^e,S^r)$ market completeness.  But, if $\Efnalt$ can degenerate (with positive $\textrm{Leb}_{[t,T]}\times\prob$ probability), there are many solutions $\mbf{C}$, and hence the market is not complete.  Alternatively, from the trading strategy perspective, for $\pi = (\theta,\delta)$ the corresponding wealth process has $\tprob$ dynamics
\begin{equation*}
    d\We^{\pi}_s = 1_{s\leq \tau}\left(\sigma_e(s,X_s)'\theta_s+ \delta_s a(s,X_s)'\sigma_r(s,X_s)\right)d\wt{W}_s + \left(\delta_s - \theta_s'\ell_e(s,X_s)\right) d\wt{M}_s.
\end{equation*}
By translating $\theta \to -\delta (\sigma'_e)^{-1}a'\sigma_r + \theta$ the resultant dynamics are
\begin{equation*}
    d\We^{(\theta,\delta)}_s = 1_{s\leq \tau}\theta_s'\sigma_e(s,X_s)d\wt{W}_s - \theta_s'\ell_e(s,X_s) d\wt{M}_s +  \delta_s\Efnalt(s,X_s) d\wt{M}_s = \theta_s'\frac{dS^e_s}{S^e_{s-}} +  \delta_s\Efnalt(s,X_s) d\wt{M}_s.
\end{equation*}
Thus, we may equate investment in $(S^e,S^r)$ with investment in $(S^e,S^{\wt{r}})$ where 
\begin{equation*}
    \frac{dS^{\wt{r}}_{s}}{S^{\wt{r}}_{s-}} = \Efnalt(s,X_s)d\wt{M}_s = \Efnalt(s,X_s)\left(dH_s - 1_{s\leq \tau}\wt{\gamma}(s,X_s) ds\right).
\end{equation*}
If $\Efnalt\neq  0$ we can always hedge against default, but if $\Efnalt$ degenerates then investment in $S^{\wt{r}}$ does not offer default protection.  Thus, to hedge against default risk, we need $\Efnalt\neq 0$.  Lastly, note that $\Efnalt = 1$ when $\sigma_r  =0$, so that degeneracy of $\sigma_r$ does not lead to incompleteness.

Next, let us discuss the condition on $Q_c$. First, as the map $z \to z - 1 -\log(z)$ is non-negative on $(0,\infty)$ we know $Q_c\geq 0$. Next, for a given starting point $(t,x)$ calculation shows that if $\wtexpv{}{}{\int_t^T Q_c(u,X^{t,x}_u)du} < \infty$ then
\begin{equation*}
    \relent{\tprob}{\prob}(t,x)  = \wtexpv{}{}{\int_t^T e^{-\int_t^u \wt{\gamma}(v,X^{t,x}_v)dv} Q_c(u,X^{t,x}_u)du} < \wtexpv{}{}{\int_t^T Q_c(u,X^{t,x}_u)du} < \infty.
\end{equation*}
As such, our assumption $Q_c$ (along with Assumption \ref{A:phi_psi_alt}) is essentially the minimal one needed to ensure the optimal investment problem is well posed: that the unique martingale measure has finite relative entropy for all starting points.   We have only slightly strengthened this assumption, by removing the $\textrm{exp}(-\int_t^u \wt{\gamma}(v,X^{t,x}_v)dv)$ term, and requiring finite-ness locally uniformly in $(t,x)$. 

\subsection{The Hamiltonian and HJB equation}

We now identify the reduced Hamiltonian \eqref{E:MF_hamil} and HJB PDE \eqref{E:G_HJB}, the latter of which linearizes,  as expected due to market completeness.  To state the result we define the extended generator $\wt{L}$ of $X$ under $\tprob$ (see \eqref{E:tprob_SDE}) when $\nu = \sigma_e^{-1}(\mu_e - \wt{\gamma}\ell_e)$  
\begin{equation}\label{E:wtL_def}
\wt{L} \dfn \frac{1}{2}\tr\left(AD^2\right) + \nabla '\left(b- a\sigma_e^{-1}\left(\mu_e - \wt{\gamma}\ell_e\right)\right).
\end{equation}

\begin{proposition}\label{P:general_hamil_11}
Under Assumption \ref{A:complete_mkt}, $\Hcal$ from \eqref{E:MF_hamil} takes the form
\begin{equation}\label{E:Hcal_is_ok_1}
    \begin{split}
        \Hcal_{c}(g,p) &= \frac{1}{\alpha}(Q_c-\gamma) + \wt{\gamma}(\psi-g) + \frac{\alpha}{2}p'Ap - p'a\sigma_e^{-1}(\mu_e-\wt{\gamma}\ell_e).
    \end{split}
\end{equation}
Writing $\pi = (\theta,\delta)$ where $\theta$ is the equity position and $\delta$ the CDS position, the candidate optimal policy functions are (recall \eqref{E:E_barA_def}) 
\begin{equation}\label{E:gen_opt_strat_1}
    \begin{split}
    \wh{\delta}_c(g,p) &= \frac{1}{\alpha\Efnalt}\left(\ell_e'\Sigma_e^{-1}\left(\mu_e - \wt{\gamma}\ell_e - \alpha\sigma_e a' p\right) + \log\left(\frac{\gamma}{\wt{\gamma}}\right) - \alpha(\psi-g) \right),\\
    \wh{\theta}_c(g,p) &= \frac{1}{\alpha}\Sigma_e^{-1}\left(\mu_e - \wt{\gamma}\ell_e - \alpha \sigma_e a' \left(p  + \wh{\delta}_c(g,p)\sigma_r\right)\right).
    \end{split}
\end{equation}
The PDE \eqref{E:G_HJB} specifies to the linear parabolic PDE
\begin{equation}\label{E:PDE_1}
    \begin{split}
        0 &= G_t + \wt{L}G  + \wt{\gamma}(\psi-G) + \frac{1}{\alpha} Q_c;\qquad \phi = G(T,\cdot).
    \end{split}
\end{equation}

\end{proposition}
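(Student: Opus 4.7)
The plan is to bypass the generic formula \eqref{E:MF_hamil} entirely, since under Assumption \ref{A:complete_mkt} the covariation matrix $\Sigma$ from \eqref{E:SigE_UpsE} is singular: with $k=d$ and $\rho \equiv \idmat{d}$ one has $\ol\rho\equiv 0_{d\times d}$ and
\[
\Sigma \;=\; \begin{pmatrix} \sigma_e \\ \sigma_r' a \end{pmatrix}\begin{pmatrix} \sigma_e' & a\sigma_r \end{pmatrix},
\]
a rank-$d$ block inside a $(d+1)\times(d+1)$ matrix, so $\Sigma^{-1}$ does not exist and the formulas of \cite{MR4086602} must be redone. I would maximize the Hamiltonian \eqref{E:hamil} by hand in coordinates adapted to this degeneracy.

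The key step is the change of variable $\eta \dfn \sigma_e'\theta + \delta\, a\sigma_r \in \reals^d$, invertible in $\theta$ because $\sigma_e$ is square and non-singular ($\theta = (\sigma_e')^{-1}(\eta - \delta\, a\sigma_r)$). In the $(\eta,\delta)$ coordinates one immediately reads off $\pi'\Sigma\pi = |\eta|^2$ and $\pi'\Upsilon p = \eta' a p$, and setting $\lambda \dfn \sigma_e^{-1}\ell_e$ together with the identity $\tprobnu = \sigma_e^{-1}(\mu_e - \wt\gamma\ell_e)$ from \eqref{E:tprob_mpr_c} a short algebraic manipulation gives
\[
\pi'\mu \;=\; \eta'\tprobnu + \wt\gamma\,\eta'\lambda - \delta\,\wt\gamma\,\Efnalt, \qquad \pi'\ell \;=\; \eta'\lambda - \delta\,\Efnalt,
\]
after using $\Efnalt = 1 + \sigma_r' a\lambda$. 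The Hamiltonian is then a jointly strictly concave function of $(\eta,\delta)\in\reals^d\times\reals$ (strict concavity uses $\Efnalt\neq 0$ on the direction $v=0$, $w\neq 0$), and is separable after optimizing $\delta$ first.

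Because $\Efnalt \neq 0$, the first-order condition in $\delta$ (with $\eta$ held fixed) is a scalar exponential equation solvable in closed form, giving
\[
\wh\delta(\eta) \;=\; \frac{1}{\alpha\Efnalt}\Bigl(\alpha(g + \eta'\lambda - \psi) + \log(\gamma/\wt\gamma)\Bigr),
\]
and at this $\wh\delta$ the FOC $\gamma\, e^{\alpha(g + \eta'\lambda - \delta\Efnalt - \psi)} = \wt\gamma$ collapses the exponential term to $\wt\gamma/\alpha$. This is the crucial cancellation: the pre-existing $+\wt\gamma\,\eta'\lambda$ exactly annihilates the $-\wt\gamma\,\eta'\lambda$ produced by $-\wh\delta\,\wt\gamma\,\Efnalt$, so every $\eta'\lambda$ cross-term disappears and the remaining problem in $\eta$ is the unconstrained quadratic $\eta'(\tprobnu - \alpha ap) - (\alpha/2)|\eta|^2 + \text{(constants in }\eta)$. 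Completing the square yields $\wh\eta = (\tprobnu - \alpha ap)/\alpha$, and assembling the constants one recognizes the combination $\tfrac{1}{2}|\tprobnu|^2 + \gamma - \wt\gamma + \wt\gamma\log(\wt\gamma/\gamma)$ as exactly $Q_c$ from \eqref{E:Qval_complete}, producing \eqref{E:Hcal_is_ok_1}. The optimal policies \eqref{E:gen_opt_strat_1} follow by reading off $\wh\delta_c = \wh\delta(\wh\eta)$ and $\wh\theta_c = (\sigma_e')^{-1}(\wh\eta - \wh\delta_c\, a\sigma_r)$, then simplifying via $(\sigma_e')^{-1} = \Sigma_e^{-1}\sigma_e$ and $a'=a$.

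Finally, inserting $\Hcal_c$ into \eqref{E:G_HJB} is short term-matching: the $\gamma/\alpha$ and the $-\gamma/\alpha$ inside $(Q_c-\gamma)/\alpha$ cancel, the $\pm(\alpha/2)\nabla G'A\nabla G$ pieces cancel, and combining $L$ with the residual drift $-\nabla G' a\sigma_e^{-1}(\mu_e-\wt\gamma\ell_e)$ reconstructs exactly $\wt L$ of \eqref{E:wtL_def}, leaving \eqref{E:PDE_1}. The main (really the only) obstacle is the algebraic bookkeeping around the $\delta$-FOC; once one sees that this single identity decouples $\eta$ and $\delta$, everything else is mechanical, and the PDE linearizes despite the genuinely quadratic growth of the Hamiltonian in \eqref{E:hamil}.
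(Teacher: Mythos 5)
Your proof is correct, and it takes a genuinely different route from the paper's. The paper keeps the original $(\theta,\delta)$ coordinates and maximizes over $\theta$ first for fixed $\delta$, which is legitimate since only $\Sigma_e$ (not $\Sigma$) needs to be invertible; the $\theta$-FOC is transcendental because $\theta'\ell_e$ appears inside the exponential, so the paper solves it via the Lambert-W function, arriving at an intermediate expression for $\wh\theta(p,g,\delta)$ and a reduced one-dimensional $\delta$-problem \eqref{E:mid_hcal} whose quadratic term drops out in the complete case because $\Afn(\sigma_r)=\sigma_r'(A-\Upsilon_e'\Sigma_e^{-1}\Upsilon_e)\sigma_r\equiv 0$. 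Your change of variables $\eta=\sigma_e'\theta+\delta a\sigma_r$ instead diagonalizes the degenerate quadratic form and, crucially, reverses the order of optimization: because $\Efnalt\neq 0$, the $\delta$-FOC for fixed $\eta$ is an elementary exponential equation (no product-log), and evaluating the exponential at $\wh\delta(\eta)$ gives the constant $\wt\gamma/\gamma$, which then kills every $\eta'\lambda$ cross-term and leaves a pure quadratic in $\eta$. This is cleaner and more elementary for the complete case, and it makes the cancellation that linearizes the PDE completely transparent. What the paper's order of optimization buys is uniformity: the notation $\Afn,\dots,\Efn$ and the expression \eqref{E:mid_hcal} are set up once and then specialized to both Assumption \ref{A:complete_mkt} and Assumption \ref{A:incomplete_mkt}, whereas your $(\eta,\delta)$ coordinates would not diagonalize $\Sigma$ when $\ol\rho\neq 0$ (since $\sigma_B\sigma_B'$ does not factor through $\eta$) and the $\delta$-FOC would again become transcendental when $\Afn(\sigma_r)>0$. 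One minor stylistic point: when you write $\pi'\Upsilon p=\eta'ap$ you are implicitly using $a'=a$; it is worth saying so explicitly since the same symmetry is used again in simplifying $\wh\theta_c$.
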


\begin{remark} \label{R:Feynman-Kac}  
The PDE \eqref{E:PDE_1} is expected to admit the solution
\begin{equation*}
    \begin{split}
        G(t,x) &= \wtexpv{}{}{1_{\tau>T}\phi(X^{t,x}_T) + 1_{\tau\leq T}\psi(\tau,X^{t,x}_{\tau})} + \frac{1}{\alpha}\relent{\tprob}{\prob}(t,x),\\
        &= \wtexpv{}{}{e^{-\int_t^T \wt{\gamma}(u,X^{t,x}_u)du}\phi(X^{t,x}_T) + \int_t^T e^{-\int_t^v \wt{\gamma}(u,X^{t,x}_u)du}\left(\frac{1}{\alpha}Q_c + \wt{\gamma}\psi\right)(v,X^{t,x}_v)dv}.
    \end{split}
\end{equation*}
The first equality is expected from the general duality theory, and the second follows as $\wt{\gamma}$ is the $\filt^W$ default intensity function for $\tau$ under $\tprob$.   The second equality is also expected using Feynman-Ka\v{c}.  While there are certain assumptions needed to ensure the above two equalities (see for example \cite{Heath-Schweizer} for the Feynman-Ka\v{c} method or \cite{MR1891730} for the duality method) we will prove existence of solutions to \eqref{E:PDE_1} which are verified to be the certainty equivalent using the general results of Appendix \ref{AS:main_result}, which are valid in both the complete and incomplete settings.
\end{remark}

\subsection{Optimal Policies}\label{SS:detwtg}  Here we analyze the optimal policies, as a different (yet intuitive)  phenomena arises when compared to the optimal policies in \cite{MR4086602}, where the investor does not have access to the CDS market. Using \eqref{E:gen_opt_strat_1}, one can show  $\wh{\theta}_c,\wh{\delta}_c$ satisfy the relationship
\begin{equation*}
    \wh{\delta}_c =  \ell_e'\wh{\theta}_c  + g-\psi + \frac{1}{\alpha}\log\left(\frac{\gamma}{\wt{\gamma}}\right).
\end{equation*}
On the right side above, $\ell_e'\wh{\theta}_c$ is the loss in equity wealth upon default. The quantity $g-\psi$ is the loss of wealth due to the termination of investment opportunities, as $g$ is the indirect utility from future trading if default has not yet occurred, and $\psi$ is the payment upon default. Thus, if default occurs the investor ``loses'' the value from additional trading $g$, but ``gains'' the payment $\psi$.  

To understand the  $(1/\alpha)\log(\gamma/\wt{\gamma})$ term, first recall that $\tprob$ is the unique martingale measure. Next, assume the investor is at time $s$, with $X_s = y$, $G=g$, and default has not occurred. The investor is worried about default in the next instant $ds$, and she knows upon default she will lose $\ell_e'\wh{\theta} + g-\psi$, so she takes a position of this size in the rolling CDS to cover this loss.

However, she also wants to optimally invest over the next instant, where nothing changes except the possibility of default. As her trading strategies must be predictable, she cannot adjust her CDS position after the fact.  Therefore, according the standard optimal investment theory, she seeks a position in the CDS which ensures her marginal utility  is proportional to the state price density given her information. As $\tprob$ is the unique martingale measure, the additional position $\delta$ ensures the first order optimality conditions $e^{-\alpha\delta} = d\tprob/ d\prob |_{\cbra{\tau>s,\F^{W,B}_s}}$. This gives
\begin{equation*}
    e^{-\alpha \delta} = \frac{d\tprob}{d\prob}\big|_{\cbra{\tau>s,\F^{W,B}_s}} \approx \frac{\wtcondprobs{\tau \leq s + ds}{\tau > s, \F^{W,B}_s}}{\condprobs{\tau \leq s + ds}{\tau > s, \F^{W,B}_s}} \approx \frac{\wt{\gamma}_s}{\gamma_s},
\end{equation*}
and hence the appearance of the term $(1/\alpha)\log(\gamma/\wt{\gamma})$.

To summarize, while naively one might conjecture the rolling CDS position should only cover losses in the equity position, this is not true. Instead, the CDS position covers not only the loss in equity upon default but also the effective loss due to the stoppage of trade. As each of these losses may be exactly estimated prior to default, the position is also adjusted to satisfy the standard optimality conditions over the next instant, accounting for the investor's information set.  And, as shown in \cite[Section 2.4]{MR4086602}, the ratio $\wt{\gamma}/\gamma$ can be interpreted as the risk premia due to default, and hence the agent sets her position to ensure marginal utility is equal to the credit risk premia.

\subsection{Existence and Verification}  We conclude this section with the existence and verification theorem in the complete market setting.

\begin{theorem}\label{T:main_result_complete} Under Assumption \ref{A:complete_mkt}, the certainty equivalent $G$ is in $C^{1,2}((0,T)\times \OO;\reals)$ and satisfies the PDE \eqref{E:PDE_1}. The optimal equity and rolling CDS strategies are given in \eqref{E:gen_opt_strat_1}, evaluated at $(s,X_s)$ for $s\in[t,T]$.
\end{theorem}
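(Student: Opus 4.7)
The theorem has two pieces: existence of a classical $C^{1,2}$ solution to the linear parabolic equation \eqref{E:PDE_1} and identification of that solution with the certainty equivalent $G$ from \eqref{certainty equivalent}. Since Proposition \ref{P:general_hamil_11} has already linearized the HJB using market completeness, and Remark \ref{R:Feynman-Kac} supplies the expected Feynman-Ka\v{c} representation, the natural strategy is to define a candidate $\wt{G}$ by that formula, upgrade it to a classical solution by PDE regularity, and then use duality combined with a martingale verification argument to match it to $G$.

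To establish existence and regularity, I would first observe that under Assumptions \ref{A:phi_psi_alt} and \ref{A:complete_mkt} the integrands in the Feynman-Ka\v{c} representation are $\tprob$-integrable locally uniformly in $(t,x) \in [0,T] \times \OO_n$ for each $n$, so $\wt{G}$ is finite and continuous. I would then lift the regularity by localizing to the exhausting domains $\OO_n$: on each $[0,T] \times \ol{\OO}_n$, the Cauchy--Dirichlet problem for \eqref{E:PDE_1} with terminal data $\phi$ and lateral data $\wt{G}$ restricted to $\partial \OO_n$ admits a classical $C^{(1,2),\beta}$ solution $G_n$ by the Schauder theory of \cite{MR0181836, MR1465184}, since all coefficients are $C^{(1,1),\beta}$ on compact sets. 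A Feynman-Ka\v{c} argument using the exit time of $X$ from $\OO_n$ --- which tends to $\infty$ almost surely under $\tprob$ by Assumption \ref{A:alt_gamma} --- identifies $G_n$ with the stopped analogue of $\wt{G}$, so $G_n \to \wt{G}$ pointwise. Interior Schauder estimates then propagate the convergence to $C^{1,2}$ on compact sub-domains, yielding $\wt{G} \in C^{1,2}((0,T) \times \OO)$ solving \eqref{E:PDE_1}.

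For the verification $\wt{G} = G$, completeness is essential: the analysis following \eqref{E:E_barA_def} shows $\wt{\M} = \{\tprob\}$ under Assumption \ref{A:complete_mkt}, so the dual problem \eqref{E:dual_problem} collapses to
\begin{equation*}
v(t,x) = \relent{\tprob}{\prob}(t,x) + \alpha \wtexpv{}{}{\phi(X_T) 1_{\tau > T} + \psi(\tau, X_\tau) 1_{\tau \leq T}},
\end{equation*}
and a direct Girsanov computation using \eqref{E:tprob_mpr_c} and the $\tprob$-intensity of $\tau$ identifies this with $\alpha \wt{G}(t,x)$. Weak duality then gives $G \leq \wt{G}$. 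For the matching lower bound I would apply It\^{o}'s formula to $s \to -\exp(-\alpha(\We^{\wh{\pi}}_s + \wt{G}(s,X_s)))$ with $\wh{\pi} = (\wh{\theta}_c, \wh{\delta}_c)$ from \eqref{E:gen_opt_strat_1}; the HJB \eqref{E:PDE_1} combined with the fact that $\wh{\pi}$ maximizes the Hamiltonian \eqref{E:hamil} kills the drift on $[t, T \wedge \tau]$, while the jump at $\tau$ and the terminal condition $\phi = \wt{G}(T, \cdot)$ exactly reconstruct the random endowment. Admissibility of $\wh{\pi}$ is immediate once $\wt{\M} = \{\tprob\}$, since it suffices to check $\We^{\wh{\pi}}$ is a $\tprob$-supermartingale, which it is by construction.

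The principal obstacle is technical rather than conceptual: the state space $\OO$ is unbounded and the coefficients appearing in $\wt{L}$ together with $\wt{\gamma}$, $\psi$, and $Q_c$ need not be bounded. Passing to the limit $n \to \infty$ in the Cauchy--Dirichlet approximations, and justifying all interchanges of limits and expectations in the martingale verification, require uniform-in-$(t,x)$ domination of the Feynman-Ka\v{c} integrand --- which is exactly what the local integrability clauses in Assumptions \ref{A:phi_psi_alt} and \ref{A:complete_mkt} deliver. I expect the localization technology of \cite{MR4086602}, though developed for the nonlinear incomplete case, to adapt directly here because the linear structure of \eqref{E:PDE_1} is strictly easier to control.
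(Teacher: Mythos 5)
Your strategy — define a candidate $\wt{G}$ by the Feynman--Ka\v{c} formula, upgrade to a classical $C^{1,2}$ solution of \eqref{E:PDE_1} via Schauder theory on the exhausting domains, then sandwich the certainty equivalent by combining weak duality (using $\wt{\M}=\{\tprob\}$) with an It\^{o} verification at the candidate policy $\wh{\pi}$ — is exactly the route that Remark \ref{R:Feynman-Kac} sketches and then deliberately declines. The paper instead folds the complete and incomplete cases into a single localization argument (Appendix \ref{AS:main_result}) in which the \emph{optimal investment problem itself} is localized: a localized default time $\tau^n$, an auxiliary Brownian motion $\wh{W}$, localized markets, admissible sets $\A^n$ and measure classes $\tM^n$. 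The global certainty equivalent is recovered by passing $n\to\infty$ under locally uniform bounds on $G^n$, $\nabla G^n$ and a careful uniform-integrability analysis of the relative-entropy and endowment terms (Propositions \ref{P:localized_verification}, \ref{P:analytical_unwind} and \ref{P:unwind_prob}). The payoff of that route is that admissibility of $\wh{\pi}$ and $\wh{\qprob}\in\tM$ are established along the way and never need to be argued in the abstract. Your route is cleaner in the complete case, where the PDE is linear and the dual problem collapses, but it must independently supply exactly the verification work that the paper's localization machinery is built to handle.

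The genuine gap is the assertion that ``admissibility of $\wh{\pi}$ is immediate once $\wt{\M}=\{\tprob\}$, since it suffices to check $\We^{\wh{\pi}}$ is a $\tprob$-supermartingale, which it is by construction.'' Under $\tprob$, the wealth process $\We^{\wh{\pi}}$ is a priori only a \emph{local} martingale. Because CARA preferences impose no lower bound on wealth and because $\wh{\pi}=(\wh{\theta}_c,\wh{\delta}_c)$ from \eqref{E:gen_opt_strat_1} is built from $\wt{G}$ and $\nabla\wt{G}$, which are unbounded on $\OO$, there is nothing ``by construction'' that converts the local martingale into a supermartingale. Proving this is precisely the burden of Appendix \ref{AS:main_result}: one must first get locally uniform bounds on $\abs{\nabla G^n}$ (Proposition \ref{P:analytical_unwind} via Remark \ref{R:gradient_bnd}) and then pass to the limit using the exponential moment and non-explosion conditions built into Assumption \ref{A:complete_mkt}. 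You flag these interchange-of-limit issues in your closing paragraph as ``technical rather than conceptual,'' but the admissibility claim itself is stated as settled when it is in fact the crux. A secondary, smaller issue: the process to which you apply It\^{o} must incorporate the post-default value, i.e.\ something of the form $-\exp\bigl(-\alpha(\We^{\wh{\pi}}_s + 1_{\tau>s}\wt{G}(s,X_s) + 1_{\tau\leq s}\psi(\tau,X_\tau))\bigr)$ as in \eqref{E:opt_mm_dens}; the process you wrote down, $-\exp(-\alpha(\We^{\wh{\pi}}_s + \wt{G}(s,X_s)))$, does not register the change in the random endowment at $\tau$ and its drift will not cancel on $\{s>\tau\}$.
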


\section{Incomplete Market Case}\label{S:incomplete_mkt}  In this section, we treat the general case where the number of assets and factor need not coincide, and where the independent Brownian motion $B$ in \eqref{E:S_def} is present.  Here, the analysis is more involved, and we enforce stronger assumptions on the model coefficients. In order to state our main assumption, recall that $\tprob$ is defined using from \eqref{E:tprob_mpr_ic}, where $\tprobnu,\wt{\gamma}$ are in Assumption \ref{A:alt_gamma}.  Given this, similarly to \eqref{E:Qval_complete} define 
\begin{equation}\label{E:Qval_incomplete}
Q_i(t,x)\dfn  \bigg(\frac{1}{2} \abs{\tprobnu}^2 + \frac{1}{2}\abs{\ol{\rho}^{-1}(\sigma_e^{-1}(\mu_e - \wt{\gamma}\ell_e)-\rho\tprobnu)}^2 + \wt{\gamma} \left(\frac{\gamma}{\wt{\gamma}} - \log\left(\frac{\gamma}{\wt{\gamma}}\right) - 1\right)\bigg)(t,x).
\end{equation}
We then assume
\begin{assumption}\label{A:incomplete_mkt}
There is an $\eps_1>0$ such that $(1-\eps_1)\idmat{k} - \rho\rho'\in \mathbb{S}^k_{+}$ on $[0,T]\times\OO$. The spot pricing measure $\tprob$ is defined though \eqref{E:tprob_mpr_ic} where $\wt{\gamma},\wt{\nu}$ are in Assumption \ref{A:alt_gamma}. For $\sigma_r$ in \eqref{E:sigR_def}, either $\abs{\sigma_r}\equiv 0$ on $[0,T]\times\OO$ or $\abs{\sigma_r} > 0$ on $[0,T]\times\OO$.  Additionally,
\begin{enumerate}[(i)] 
\item  There is $\eps_2 > 0$ such that for each $n$,
\begin{equation*}
    \sup_{t\leq T, x \in \ol{\OO}_n} \wtexpv{}{}{\e^{\eps_2 \int_t^T Q_i(u,X^{t,x}_u) du}} < \infty.
\end{equation*}
\item  There is $p > 1$ such that a strong solution to the SDE $dX_t = (b+(p-1)a\tprobnu)(t,X_t)dt + a(t,X_t)dW_t$ exits for each starting time $t\leq T$ and $x\in\OO$.  Writing $X^{(p),t,x}$ as the solution, we have for each $n$,
\begin{equation*}
    \sup_{t\leq T, x \in \ol{\OO}_n} \expvs{\e^{\frac{1}{2}p(p - 1)\int_t^T \abs{\tprobnu(u,X^{(p),t,x}_u)}^2 d u}}  < \infty.
\end{equation*}
\end{enumerate}
\end{assumption}

\subsection{On Assumption \ref{A:incomplete_mkt}} That $\rho\rho' \leq (1-\eps_1)\idmat{k}$ clearly implies the market absent default is incomplete.   As for $\sigma_r$, we want to allow the default intensity $\wt{\gamma}$ under $\tprob$ to be deterministic, at which point $\sigma_r$ vanishes everywhere on the state space. However, when the intensity $\wt{\gamma}$ depends on $X$ as well, we do not want $\sigma_r$ to vanish on the interior of the state space \footnote{If $\abs{\sigma_r} = 0$ somewhere, but not everywhere, on the interior, we cannot use \cite[Theorem 11.3(b)]{MR1465184} as the quantity $B_{\infty}$ there-in is infinite.  This removes the key gradient bound for local solutions to \eqref{E:G_HJB}, needed to ensure a global solution exists.}.   Therefore, we assume $\abs{\sigma_r} \neq 0$.  This assumption can always be verified in examples (see Section \ref{S:numerics}). Additionally, from \eqref{E:sigR_def} we see that $\abs{\sigma_r} = 0$ precisely when $\abs{\nabla h} = 0$ for $h = \wt{u}/\wt{v}$.  Using \eqref{E:wtu_wtv_dfn} (see also the proof of Proposition \ref{P:cds_dynamics} in Appendix \ref{AS:opt_invest} below), one can show $h$ solves the PDE
\begin{equation*}
    0 = h_t + \wt{L}h + \nabla h'A \frac{\nabla \wt{v}}{\wt{v}} + \frac{1}{\wt{v}} h + \frac{\wt{\gamma}}{\wt{v}};\qquad \wt{\gamma}(\wt{T},\cdot) = h(\wt{T},\cdot)
\end{equation*}
where $\wt{L}$ is the extended generator of $X$ under $\tprob$. This implies $h$ admits a Feynman-Ka\v{c} representation, and the problem of studying when such functions have non-degenerate gradient is well known.  See, for example, \cite{anderson2008equilibrium, MR3131287, MR3590708}. In these articles, the key condition is that $\nabla \wt{\gamma}(\wt{T},\cdot)$ does not degenerate.  If this is the case, then under certain technical restrictions, the non-degeneracy is transferred to $h$ and hence $\sigma_r$.   The exponential integrability and non-explosion conditions are needed in order to obtain locally uniform bounds on the local solutions to the HJB equation obtained in Appendix \ref{AS:main_result} below.  Here, we note that by taking $\eps_2 \approx 0$ and $p\approx 1$ we are only requiring the existence of some exponential moment, and in specific examples, this can be checked.

\subsection{The Hamiltonian and HJB equation} When Assumption \ref{A:incomplete_mkt} holds,  a challenge seemingly arises in identifying the reduced Hamiltonian, as  $\Sigma$ from \eqref{E:SigE_UpsE} is not invertible when $\sigma_r = 0_d$.  However, it tunrs out that degeneracy of $\sigma_r$ does not pose a problem. Below we express our results in terms of $\sigma_r$ (otherwise omitting $(t,x)$ function arguments), and recall $Q_c$ from \eqref{E:Qval_complete}. 

\begin{proposition}\label{P:general_hamil_2}
Under Assumption \ref{A:incomplete_mkt}, $\Hcal$ from \eqref{E:MF_hamil} takes the form
\begin{equation}\label{E:Hcal_is_ok_2}
    \begin{split}
        \Hcal_i(g,p) &= \frac{1}{\alpha}(Q_c-\gamma)  + \wt{\gamma}(\psi-g)  + \frac{\alpha}{2}p'\Upsilon_e'\Sigma_e^{-1}\Upsilon_e p - p'\Upsilon_e'\Sigma_e^{-1}(\mu_e-\wt{\gamma}\ell_e) + R_{\Hcal}(\sigma_r,g,p),
    \end{split}
\end{equation}
where $R_{\Hcal}$ is given in \eqref{E:GR_def_new} (see also \eqref{E:GR_def}) below and satisfies $\R_{\Hcal}(0_d,g,p) = 0$. The optimal policy functions are 
\begin{equation}\label{E:gen_opt_strat_2}
    \begin{split}
    \wh{\delta}_i(g,p) &= \frac{1}{\alpha}\left(\ell_e'\Sigma_e^{-1}\left(\mu_e - \wt{\gamma}\ell_e - \alpha \Upsilon_e p \right) + \log\left(\frac{\gamma}{\wt{\gamma}}\right) - \alpha(\psi-g) + R_{\delta}(\sigma_r,g,p)\right),\\
    \wh{\theta}_i(g,p) &= \frac{1}{\alpha}\Sigma_e^{-1}\left(\mu_e - \wt{\gamma}\ell_e - \alpha \Upsilon_e\left( p  + \wh{\delta}_i(g,p)\sigma_r\right)\right),
    \end{split}
\end{equation}
where $R_{\delta}$ is defined in \eqref{E:Rdelta_def_new} (see also \eqref{E:Rdelta_def}) below and satisfies $R_{\delta}(0_d,g,p) = 0$. Lastly, the PDE in \eqref{E:G_HJB} specifies to the semi-linear parabolic Cauchy PDE
\begin{equation}\label{E:PDE_2}
    \begin{split}
        0 &= G_t + \ol{L}G  +\wt{\gamma}(\psi-G) + \frac{1}{\alpha}Q_c -\frac{\alpha}{2}\nabla G'a(\idmat{d}-\rho'\rho)a'\nabla G  + R_{\Hcal}(\sigma_r,G,\nabla G),\\
        \phi &= G(T,\cdot),
    \end{split}
\end{equation}
where
\begin{equation*}
\ol{L} \dfn \frac{1}{2}\tr\left(AD^2\right) + \nabla '\left(b- a\rho'\sigma_e^{-1}\left(\mu_e - \wt{\gamma}\ell_e\right)\right),
\end{equation*}

\end{proposition}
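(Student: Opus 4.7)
The plan is to exploit the dichotomy built into Assumption \ref{A:incomplete_mkt}: either $\abs{\sigma_r} \equiv 0$ on $[0,T]\times\OO$, or $\abs{\sigma_r} > 0$ everywhere. The general formulas \eqref{E:hat_pi} and \eqref{E:MF_hamil} borrowed from \cite{MR4086602} require $\Sigma \in \spos{k+1}$, which fails in the first case (and so must be bypassed) and is verified in the second through a Schur complement. Viewing $R_{\Hcal}$ and $R_\delta$ as residuals encoding the CDS diffusive volatility, the proof reduces to computing $\Hcal$ and the pointwise maximizer in each case, checking the decompositions \eqref{E:Hcal_is_ok_2} and \eqref{E:gen_opt_strat_2}, and then algebraically deducing the PDE \eqref{E:PDE_2} from $\Hcal_i$.

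\emph{Case 1 ($\abs{\sigma_r} \equiv 0$).} Here $\mu = (\mu_e', -\wt{\gamma})'$, $\Upsilon = (\Upsilon_e', 0_d')'$, $\Sigma = \diag(\Sigma_e, 0)$, and $\ell = (\ell_e', -1)'$, so the Hamiltonian \eqref{E:hamil} is a linear function of $\delta$ outside the exponential default term. The first-order condition in $\delta$ yields $\gamma e^{\alpha(g + \theta'\ell_e - \delta - \psi)} = \wt{\gamma}$, so $\wh{\delta} = g - \psi + \theta'\ell_e + \frac{1}{\alpha}\log(\gamma/\wt{\gamma})$ and the exponential term collapses to $\wt{\gamma}/\alpha$. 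The remaining maximization is a concave quadratic in $\theta$ with solution $\wh{\theta} = \frac{1}{\alpha}\Sigma_e^{-1}(\mu_e - \wt{\gamma}\ell_e - \alpha\Upsilon_e p)$. Substituting back and using the identity $Q_c = \frac{1}{2}\abs{\sigma_e^{-1}(\mu_e - \wt{\gamma}\ell_e)}^2 + \gamma - \wt{\gamma} - \wt{\gamma}\log(\gamma/\wt{\gamma})$ recovers \eqref{E:Hcal_is_ok_2} and \eqref{E:gen_opt_strat_2} with both residuals identically zero, consistent with $R_{\Hcal}(0_d,\cdot,\cdot) = R_\delta(0_d,\cdot,\cdot) = 0$.

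\emph{Case 2 ($\abs{\sigma_r} > 0$).} Since $(1-\eps_1)\idmat{k} - \rho\rho' \in \mathbb{S}^k_+$ and $\rho\rho'$, $\rho'\rho$ share the same nonzero spectrum, one has $\idmat{d} - \rho'\rho \geq \eps_1\idmat{d}$; combined with $\Upsilon_e'\Sigma_e^{-1}\Upsilon_e = a\rho'\rho a'$, the Schur complement $\sigma_r'A\sigma_r - \sigma_r'\Upsilon_e'\Sigma_e^{-1}\Upsilon_e\sigma_r = \sigma_r'a(\idmat{d} - \rho'\rho)a'\sigma_r$ is strictly positive, so $\Sigma \in \spos{k+1}$ and the formulas \eqref{E:hat_pi}, \eqref{E:MF_hamil} apply. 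Block inversion of $\Sigma$ expresses both $\ell'\Sigma^{-1}\ell$ and $\ell'\Sigma^{-1}(\mu - \alpha\Upsilon p)$ through $\Sigma_e^{-1}$ and the scalar $c \dfn \sigma_r'a(\idmat{d} - \rho'\rho)a'\sigma_r$; the top block of \eqref{E:hat_pi} then matches $\wh{\theta}_i$ from \eqref{E:gen_opt_strat_2}, while the bottom block recovers $\wh{\delta}_i$ together with the $\sigma_r$-dependent correction which \emph{defines} $R_\delta$ in \eqref{E:Rdelta_def_new}. Completing the square in \eqref{E:MF_hamil} and isolating the Case 1 summands defines $R_{\Hcal}$ in \eqref{E:GR_def_new}; both residuals are manifestly continuous in $\sigma_r$ through $0_d$, as the PL-argument collapses to the Case 1 root, making the two sub-cases uniform.

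\emph{PDE reduction.} Substituting $\Hcal_i$ from \eqref{E:Hcal_is_ok_2} into \eqref{E:G_HJB}: the $\pm\gamma/\alpha$ terms cancel; the drift combination $b'\nabla G - \nabla G'\Upsilon_e'\Sigma_e^{-1}(\mu_e - \wt{\gamma}\ell_e)$ reduces to the first-order part of $\ol{L}G$ via $\Upsilon_e'\Sigma_e^{-1} = a\rho'\sigma_e^{-1}$; and the second order quadratic $-\frac{\alpha}{2}\nabla G'A\nabla G + \frac{\alpha}{2}\nabla G'\Upsilon_e'\Sigma_e^{-1}\Upsilon_e\nabla G$ collapses, using $A = aa'$, to $-\frac{\alpha}{2}\nabla G'a(\idmat{d}-\rho'\rho)a'\nabla G$, producing \eqref{E:PDE_2}. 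The main obstacle is the bookkeeping in Case 2: one must write $R_{\Hcal}$ and $R_\delta$ as explicit functions of $\sigma_r$ that extend continuously to $\sigma_r = 0_d$ with value zero, so that the two halves of the dichotomy are genuinely captured by the single formula in the statement rather than only by case-by-case inspection.
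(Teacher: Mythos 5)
Your proposal is correct, but it takes a computationally different route from the paper. You split into the two sub-cases of the dichotomy and attack them with unrelated methods: in Case 1 ($\sigma_r \equiv 0$) you optimize over $\delta$ first, exploiting that the FOC in $\delta$ collapses the exponential default term to $\wt{\gamma}/\alpha$ and leaves a plain quadratic in $\theta$; in Case 2 ($|\sigma_r| > 0$) you verify $\Sigma \in \spos{k+1}$ via the Schur complement $\sigma_r' a(\idmat{d}-\rho'\rho)a'\sigma_r$ and invoke the generic formulas \eqref{E:hat_pi}, \eqref{E:MF_hamil} with explicit block inversion. The paper instead optimizes in the opposite order uniformly: first over $\theta$ for fixed $\delta$ (which only needs $\Sigma_e^{-1}$, never $\Sigma^{-1}$), obtaining the intermediate Hamiltonian \eqref{E:mid_hcal}, and only then over $\delta$. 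That sequencing is what lets the paper handle both sub-cases in one sweep and \emph{define} $R_{\Hcal}$ and $R_\delta$ as the residuals in \eqref{E:GR_def_new}, \eqref{E:Rdelta_def_new}: when $\Afn(\sigma_r)=0$ the $\delta$-optimization reduces to the Product-Log root with the remainder identically zero, and when $\Afn(\sigma_r)>0$ the quadratic-plus-Lambert-W structure of \eqref{E:mid_hcal} is solved directly. Your route is correct — joint concavity of $H$ in $(\theta,\delta)$ licenses either order of optimization, and the Schur-complement/ellipticity observation is exactly what the paper later formalizes in Lemma \ref{L:L_inv_bnd} — but the paper's sequencing avoids ever inverting the possibly-degenerate $(k+1)\times(k+1)$ matrix $\Sigma$ and produces both residuals as one closed-form expression rather than as a case-by-case patch. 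Two small remarks: your claim that the residuals extend \emph{continuously} through $\sigma_r = 0_d$ is true (the paper notes it in a footnote) but is more than the statement requires — the indicator in \eqref{E:Rdelta_def_new} sidesteps continuity precisely because Assumption \ref{A:incomplete_mkt} rules out mixed degeneracy; and your Case-2 sketch defers the heavy algebra (the explicit form of $R_{\Hcal}$, $R_\delta$), which is exactly where most of the work of the paper's proof lives.
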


\begin{remark} When $k=d$ and $\rho = \idmat{d}$, note that $\ol{L}$ coincides with $\wt{L}$ from \eqref{E:wtL_def} and $\Hcal_i = \Hcal_c + \R_{\Hcal}$ where $\Hcal_c$ is from  \eqref{E:Hcal_is_ok_1}.  Also, it is not as obvious, but when the additional condition of Assumption \ref{A:complete_mkt} hold, $R_{\Hcal}$ vanishes and the PDE \eqref{E:PDE_2} reduces to that in \eqref{E:PDE_1}.   
\end{remark}

\subsection{The Optimal Policies}\label{SS:opt_policy}  In the general case, the optimal policies satisfy a (qualitatively) similar relationship as in the complete market case.  To show this,  assume  $\Sigma$ is invertible (the result holds even when $\Sigma$ degenerates, but the notation using \eqref{E:gen_opt_strat_2} is much more cumbersome). Here, the optimal $\wh{\pi} = (\wh{\theta},\wh{\delta})$ is from \eqref{E:hat_pi}, and using \eqref{E:coeffs}, \eqref{E:SigE_UpsE} and \eqref{E:plf_def} we deduce
\begin{equation*}
    \begin{split}
        \wh{\delta}(g,p) &= \wh{\theta}(g,p)'\ell_e - \wh{\pi}(g,p)'\ell,\\
        &= \wh{\theta}(g,p)'\ell_e + g-\psi  + \frac{1}{\alpha}\left(\plogf{g}{p} - \ell'\Sigma^{-1}\left(\mu-\alpha\Upsilon p\right) - \alpha(g-\psi)\right),\\
        &= \wh{\theta}(g,p)'\ell_e + g-\psi  + \frac{1}{\alpha}\log\left(\frac{\gamma \ell'\Sigma^{-1}\ell}{\plogf{g}{p}}\right),
    \end{split}
\end{equation*}
where the last equality follows from the identity $\plog{xe^y}-y = -\log(\plog{xe^y}/x)$.  As shown in \cite[Section 2.4]{MR4086602}, the map
\begin{equation*}
    s \to \frac{\plogf{G}{\nabla G}}{\ell'\Sigma^{-1}\ell}(s,X_s)
\end{equation*}
is the $\filt^{W,B}$ default intensity of $\tau$ under the dual optimal measure $\wh{\qprob}$.  Therefore, just as in the complete market case, the position in the CDS accounts for (i) the loss in the equity market should default occur, (ii) the effective loss due to the inability to trade should default occur, (iii) and the inverse marginal utility of the credit risk premia associated to the dual optimal measure.

\subsection{Existence and Verification} As in the complete market case, we conclude with the existence and verification result. Here, unlike in the complete market case where by default $\tprob$ was the dual optimal measure, we additionally explicitly identify the dual optimal measure $\wh{\qprob}\in\tM$.

\begin{theorem}\label{T:main_result}
Under Assumption \ref{A:incomplete_mkt}, the certainty equivalent $G$ is in $C^{1,2}((0,T)\times\OO; \reals)$ and solves the PDEs \eqref{E:PDE_2}. The optimal trading equity and CDS trading strategies $(\wh{\theta}, \wh{\delta})$ are given in \eqref{E:gen_opt_strat_2}. Lastly, the density process 
\begin{equation}\label{E:opt_mm_dens}
    \wh{Z}_s \dfn \e^{- \alpha \left( \We^{(\wh{\theta}, \wh{\delta})}_s + 1_{\tau > s} G(s, X_s)  + 1_{\tau \leq s} \psi(\tau,X_{\tau}) -G(t, x) \right) }, \quad t \leq s \leq T,
\end{equation}
defines a measure $\wh{Q} \in \wt{\M}$ that solves the dual problem \eqref{E:dual_problem}.
 
\end{theorem}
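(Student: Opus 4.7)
The plan is to split the proof into two tasks: first establish existence of a classical $C^{1,2}$ solution $G$ to the semi-linear PDE \eqref{E:PDE_2}, and then verify through stochastic control and duality arguments that this $G$ is the certainty equivalent, that $(\wh{\theta},\wh{\delta})$ in \eqref{E:gen_opt_strat_2} is optimal, and that $\wh{Z}$ defines the dual minimizer $\wh{\qprob}\in\tM$.

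For existence of $G$, the strategy is the localization approach hinted at in the paper. On each bounded subdomain $\OO_n$ with appropriate smoothed boundary data constructed from $\phi$ (and compatible with the terminal condition at $T$), the classical quasilinear parabolic theory of \cite{MR0181836} combined with the gradient bounds in \cite[Ch.~VI, Thm.~11.3(b)]{MR1465184} yields a solution $G_n\in C^{1,2}((0,T)\times\OO_n)\cap C([0,T]\times\ol{\OO}_n)$. Two uniform bounds then extend $G_n$ to a global solution on $(0,T)\times\OO$: a locally uniform sup bound $|G_n|\le C_m$ on $[0,T]\times\ol{\OO}_m$ for $m<n$, and a locally uniform gradient bound $|\nabla G_n|\le C_m'$ on the same sets. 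The sup bound follows from a Feynman-Ka\v{c}-type representation driven by $\tprob$ (cf. Remark \ref{R:Feynman-Kac}): the exponential integrability condition (i) of Assumption \ref{A:incomplete_mkt} bounds the $Q_i$ contribution, while Assumption \ref{A:phi_psi_alt} controls the $\phi,\psi$ contributions. The gradient bound is precisely where the dichotomy $|\sigma_r|\equiv 0$ or $|\sigma_r|>0$ on $[0,T]\times\OO$ matters: it guarantees the nonlinearity $R_{\Hcal}(\sigma_r,G,\nabla G)$ has coefficients that remain bounded on every compact, so that the constant $B_\infty$ in \cite[Thm.~11.3(b)]{MR1465184} is finite. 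Schauder estimates combined with an Arzel\`a-Ascoli / diagonalization argument then produce the limit $G\in C^{1,2}((0,T)\times\OO)$ solving \eqref{E:PDE_2} with $G(T,\cdot)=\phi$.

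For verification, I would apply \ito's formula to the process
\begin{equation*}
Y^\pi_s \dfn \xpn{-\alpha\pare{\We^\pi_s + 1_{\tau>s}G(s,X_s) + 1_{\tau\le s}\psi(\tau,X_\tau)}},\qquad s\in[t,T],
\end{equation*}
for an arbitrary admissible $\pi=(\theta,\delta)\in\A$. Using the PDE \eqref{E:PDE_2} and \eqref{E:MF_hamil}, the finite variation part of $Y^\pi$ is pointwise non-positive, with equality when $\pi=(\wh{\theta},\wh{\delta})$ from \eqref{E:gen_opt_strat_2}; thus $Y^\pi$ is a local supermartingale and $Y^{(\wh{\theta},\wh{\delta})}$ is a local martingale. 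Taking expectations at $T$, combined with terminal values $Y^\pi_T=\xpn{-\alpha(\We^\pi_T+\phi(X_T)1_{\tau>T}+\psi(\tau,X_\tau)1_{\tau\le T})}$, gives the inequality $-u(t,x,0)\ge \xpn{-\alpha G(t,x)}$ and, upon choosing $\pi=(\wh{\theta},\wh{\delta})$, the reverse inequality, proving $G(t,x)$ is the certainty equivalent and $(\wh{\theta},\wh{\delta})$ is optimal. The dual measure $\wh{\qprob}$ defined by $\wh{Z}$ is then shown to lie in $\tM$ by checking $\wh{Z}$ is a true martingale and that the market prices of risk in its Girsanov decomposition verify \eqref{E:mpr_1}; its dual optimality follows because evaluating the dual objective $\relent{\wh{\qprob}}{\prob}+\alpha\expv{\wh{\qprob}}{}{\phi(X_T)1_{\tau>T}+\psi(\tau,X_\tau)1_{\tau\le T}}$ reproduces $\alpha G(t,x)$, matching the primal value and closing the duality gap (see \cite{MR1865021, MR2489605}).

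The main obstacle I anticipate is not the pointwise identification of drifts but rather promoting the local (super)martingale statements to genuine (super)martingale statements in the presence of quadratic gradient growth in \eqref{E:PDE_2} and the exponential nonlinearity in $Y^\pi$. Concretely, one must show that $\cbra{Y^\pi_{\sigma_n}}_n$ is uniformly integrable along any localizing sequence $\sigma_n\uparrow T$, and, in parallel, that $\wh{Z}$ is a true martingale so as to define $\wh{\qprob}\in\tM$. The exponential-moment hypothesis (i) and the non-explosion hypothesis (ii) in Assumption \ref{A:incomplete_mkt} are tailored for this purpose: condition (ii) with $p>1$ enables a Girsanov-plus-H\"older argument (paired with a de~la~Vall\'ee-Poussin criterion) that absorbs the $\tprobnu$-driven density into a $\prob$-expectation with a small excess exponent, while condition (i) controls the $Q_i$ integrand that appears after applying \ito. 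Once uniform integrability is secured, the admissibility check $\We^{(\wh{\theta},\wh{\delta})}$ being a $\qprob$-supermartingale for all $\qprob\in\tM$ follows from the non-negativity of $\wh{Z}^{-1}Y^{(\wh{\theta},\wh{\delta})}$ and standard arguments, completing the proof.
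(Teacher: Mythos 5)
Your existence strategy (localize on $\OO_n$, invoke \cite{MR0181836,MR1465184}, establish uniform sup/gradient bounds, pass to the limit) matches the paper's, including the correct use of the dichotomy $|\sigma_r|\equiv 0$ or $|\sigma_r|>0$ to guarantee the gradient estimates. The sup bound is indeed obtained through the $\tprob$-controlled dual quantity (not a Feynman--Ka\v{c} identity, since the incomplete-market PDE is nonlinear, but the spirit is the same: the localized certainty equivalent is dominated by $\frac{1}{\alpha}\relent{\tprob}{\prob}$ plus $\tprob$-expectations of $\phi,\psi$, which Assumptions~\ref{A:phi_psi_alt} and~\ref{A:incomplete_mkt}(i) control locally uniformly). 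So the analytical half of your proposal is sound and essentially the paper's.

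The verification half has a genuine gap, and it lies exactly where you anticipated but hoped to push through. You propose applying \ito's formula to $Y^\pi_s=\exp(-\alpha(\We^\pi_s+1_{\tau>s}G(s,X_s)+1_{\tau\le s}\psi(\tau,X_\tau)))$ for \emph{arbitrary} admissible $\pi\in\A$, deducing a local sub/supermartingale, and then promoting it to a true one by uniform integrability. The difficulty is that the admissibility class $\A$ only asserts that $\We^\pi$ is a supermartingale under every $\qprob\in\tM$; it gives no $\prob$-moment control on $\We^\pi$, let alone on $Y^\pi$, whose exponential nonlinearity amplifies tails. The exponential-moment hypothesis (i) and the non-explosion hypothesis (ii) in Assumption~\ref{A:incomplete_mkt} control $\tprobnu$-driven quantities and the relative entropy of \emph{measures}, not the law of $\We^\pi$ for an arbitrary $\pi\in\A$. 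A Girsanov-plus-H\"older manoeuvre absorbs $\tprobnu$ into the reference measure but cannot manufacture $\prob$-integrability for $e^{-\alpha\We^\pi_T}$ when $\pi$ is only known to be $\qprob$-supermartingale-admissible. The paper sidesteps this entirely: it never carries out a global primal verification. Instead it performs the verification only on the bounded subdomains $\OO_n$ (Proposition~\ref{P:localized_verification}), where the candidate optimal policy and all coefficients are bounded so that the required martingale properties are elementary, and then transfers the conclusion to the global problem via the \emph{dual} side: for each $\qprob\in\tM$ it builds approximating $\qprob^n\in\tM^n$, proves convergence of the dual objectives (the technical heart being Lemma~\ref{L:long_cond_exp} and the uniform-integrability arguments surrounding it), and combines the resulting upper bound on $G$ with the lower bound from Proposition~\ref{P:unwind_prob_0}. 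The dual optimality of $\wh{\qprob}$ and the optimality of $(\wh{\theta},\wh{\delta})$ then fall out of this sandwich and the general duality theory for exponential utility (\cite{MR1865021,MR2489605}), rather than from a direct verification that the duality gap vanishes. Your appeal to ``matching the primal value and closing the duality gap'' presupposes exactly the statement that must be proved.

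A smaller but real omission: the paper's localized problem is not merely the localized PDE; it is a genuinely different stochastic control problem built on a localized default time $\tau^n$ and localized assets $S^{e,n},S^{r,n}$ driven by an auxiliary Brownian motion $\wh{W}$, designed so that the local PDE solution can be rigorously identified with the local certainty equivalent. Your proposal uses localization only at the PDE level, which is enough for existence but does not by itself produce the local verification and dual approximation machinery that makes the global verification tractable.
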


\section{Indifference Pricing for Defaultable Bonds}\label{S:indiff_pricing} As an application of our main results, we may price a defaultable zero coupon bond through the principle of utility indifference. This approach accounts for both market incompleteness and investor preferences.  To identify the formulas, let us denote by $u(\cdot;q)$ and $G(\cdot;q)$ the value function and certainty equivalent function respectively when $\phi(x) \equiv q$.  We then seek a price function $p(\cdot;q)$ such that the investor is indifferent between (i) not owning the defaultable bond and (ii) paying $qp(\cdot;q)$ to own $q$ notional of the defaultable bond.  Here, by indifferent we mean that the indirect utility the investor obtains trading in the equity-CDS market is the same in both cases.  Mathematically, from \eqref{E:vf_tx} we require for $t\leq T$, $x\in\OO$ and $w\in\reals$ that
\begin{equation*}
    u(t,x,w;0) = u(t,x,w-qp(t,x,w;q);q),
\end{equation*}
or, in terms of the certainty equivalents
\begin{equation*}
    p(t,x,w;q) = p(t,x;q) = \frac{1}{q}\left(G(t,x;q)-G(t,x;0)\right).
\end{equation*}
Note the price is independent of the initial capital, as is well known. In the complete case of Section \ref{S:complete_mkt}, using Remark \ref{R:Feynman-Kac} we find 
\begin{equation*}
    p(t,x;q) = p(t,x) = \wtexpv{}{}{1_{\tau^{t,x}>T}} = \wtexpv{}{}{e^{-\int_t^T \wt{\gamma}(u,X^{t,x}_u)du}}.
\end{equation*}
This is simply the unique arbitrage free price for one unit of the defaultable bond.  The more interesting case is when the market is incomplete.

Lastly, we can easily incorporate recovery into the above pricing. Indeed, with (time and state-dependent) recovery, the bond payoff becomes $q1_{\tau > T} + q R(\tau,X_{\tau})1_{\tau \leq T}$ and we can absorb the recover function $R$ into the payoff $\psi$.  The indifference price takes the same form as above.

\section{Numerical Application}\label{S:numerics}

In this section we consider when $X$ follows a CIR process and investment opportunities are affine functions.  This updates \cite[Section 4.2]{MR4086602} to when the investor may trade in the rolling CDS.  As we will see, unlike in \cite{MR4086602} the investor does not short the defaultable stock, rather she increases her position in the CDS contract.  As such, the CDS market provides the investor a means to hedge her default risk.

The horizon is $T = 1$ and we start at $t=0$. For fixed $x\in \OO = (0,\infty)$,  $X = X^{x}$ has dynamics
\begin{equation*}
    \dif X_s = \kappa(\theta - X_s) \dif s + \xi \sqrt{X_s} \dif W_s,\ X_0 = x,
\end{equation*}
where we impose the Feller's condition $2 \kappa \theta \geq \xi^2$ to ensure $X\in \OO$ for all times. For the sake of comparison, we use the same parameters in \cite{MR4086602} for $X$, $\kappa = 0.25$, $\theta = 0.06$ and $\xi = 0.1$. Lastly, we set the absolute risk aversion coefficient of investor $\alpha = 3$.

\subsection{Complete market}\label{SS:Ex_Comp}  In the first example, there is one equity with drift and volatility functions
\begin{equation*}
    \mu_e(s, y) = y \sigma \nu; \quad \sigma_e(s, y) = \sqrt{y} \sigma,
\end{equation*}
where $\nu = 4.0762$ and $\sigma = 0.9762$. The loss proportion upon default is a constant, $\ell_e(y) = \ell = 0.5$, and as there is only one asset, the post-default certainty equivalent $\psi \equiv 0$. The investor holds $q$ units of a defaultable bond. The default intensities are linear functions of the factor process with 
\begin{equation*}
    \gamma(s, y) = y \gamma;\quad \wt{\gamma}(s, y) = y \wt{\gamma}; \quad  \wt{\gamma} = 1.5\gamma,
\end{equation*}
where we choose $\gamma$ such that at the long term mean level $\theta$ of $X$ the one-year default probability is $1 - e^{- \gamma \theta} = 3\%$\footnote{This gives $\gamma = 0.5076$ and implicitly approximates $\gamma\int_0^1 X_u du \approx \gamma X_0$. Without this approximation, one can show using \eqref{E:wtD_def}, but computing under the physical measure, that $\gamma = 0.5080$.}. The expiration date of the CDS contract is $\wt{T} = 2$. Using \eqref{E:tprob_SDE}, the dynamics for $X$ under $\tprob$ are
\begin{equation*}
    dX_t = \wt{\kappa}\left(\wt{\theta}-X_t\right)dt + \xi\sqrt{X_t}d\wt{W}_t;\qquad \wt{\kappa} = \kappa + \xi\left(\nu - \frac{\wt{\gamma}}{2 \sigma}\right);\quad \wt{\theta} = \frac{\kappa\theta}{\wt{\kappa}}.
\end{equation*}
While Feller's condition holds under $\tprob$, to ensure $X\in\OO$ we also need $\wt{\kappa}>0$, but under our parameter assumptions we have $\wt{\kappa} = 0.6186$ and $\wt{\theta} = 0.0242$. Next, as it affects the optimal policies $\wh{\theta},\wh{\delta}$ in \eqref{E:gen_opt_strat_1} (if not the PDE in \eqref{E:PDE_1}), we calculate $\sigma_r$ from \eqref{E:sigR_def}, starting with $\wt{u}$ and $\wt{v}$ from \eqref{E:wtu_wtv_dfn}. Each of these depends on the function,
\begin{equation}\label{E:wtD_def}
    \wt{D}(s, v, y) \dfn \wtcondexpv{}{}{\e^{ -\wt{\gamma}\int_s^v X_u du } }{X_s = y}.
\end{equation}
The Markov property implies (abusing notation) $\wt{D}(s,v,y) = \wt{D}(v-s,y)$ and it is well known (see \cite{duffie1996yield}, \cite{MR1793362}, and \cite{MR1846525}) that $\wt{D}(u,y)  = \wt{A}(u) \e^{ - \wt{B}(u) y}$ for certain explicitly identifiable functions $\wt{A},\wt{B}$ with $\wt{B} >0$.  The formulas for $\wt{u}$, $\wt{v}$ and hence $\sigma_r$ are obtained using $\wt{D}$ as $ \wt{u}(s, y) = 1 - \wt{D}(\wt{T}-s, y)$ and $\wt{v}(s, y) = \int_s^{\wt{T}} \wt{D}(v-s, y) d v$. Lastly, one can show for $\wt{T} > 1$ that $\sigma_r \neq 0$, and in fact one has the bounds
\begin{equation*}
    \wt{D}(\wt{T}-s, y) \wt{B}(\wt{T} - s) \leq \sigma_r(s, y) \leq \wt{B}(\wt{T} - s).
\end{equation*}
Assumption \ref{A:phi_psi_alt} holds as $\phi$ is constant and $\psi = 0$. For Assumption \ref{A:complete_mkt}, first we have $\Efnalt \geq 1 > 0$ from \eqref{E:E_barA_def} as $\sigma_r$, $a$, $\sigma_e$, and $\ell_e$ are all positive scalars. Next, from \eqref{E:Qval_complete} we see $Q_c(s,y) = Q_c y$ for a certain positive constant $Q_c$. It is well known (see \cite[Equation (3.23)]{MR1846525}) that 
\begin{equation*} 
\wtexpv{}{}{X^{0, x}_u} = x e^{- \wt{\kappa}u} + \wt{\theta} (1 - e^{- \wt{\kappa} u } ) \leq \max(x, \wt{\theta}) 
\end{equation*}
for CIR processes, and hence Assumption Assumption \ref{A:complete_mkt} holds.

Figure \eqref{fig:Optimal Defaultable Equity Positions} compares the optimal equity positions (for difference face $q$) at $t=0$ in our model (left plot) where the CDS market is pesent, to the model of \cite{MR4086602} where the CDS market is absent (right plot). As we see, the presence of the CDS market enables the investor to avoid shorting the stock as a hedge against default risk, and as such provides a natural hedging instrument\footnote{In the model of \cite{MR4086602} the investor shorts the stock for low state variables due to mean-reversion, which implies the likelihood of default will increase in the future. As such, shorting the stock provides default protection.}.

Figure \eqref{fig:Optimal CDS Positions} shows the optimal CDS positions. The left plot shows the optimal CDS positions at time $0$ for different notional $q$ of the defaultable bond. Here we see the optimal CDS position increases along with $q$. The right plot fixes $q=1$ and for each $t\in [0,1]$ shows the range $[\min_{x>0}\wh{\delta}(t,x), \max_{x>0} \wh{\delta}(t,x)]$.  As such, we see the investor's position on the CDS displays remarkable little variation over the state space.  This in turn implies that the optimal position is relatively static in the CDS, and hence well-approximated by static CDS positions.

Lastly, Figure  \eqref{fig:RB} shows the relative benefit of the market with CDS versus that without CDS, defined as $\textrm{CE}^{\text{CDS}}/\textrm{CE}^{\text{no CDS}} - 1$, where $\text{CE}^{\text{CDS}}$ is the certainty equivalent with CDS, and $\textrm{CE}^{\text{no CDS}}$ is the certainty equivalent without CDS, computed in \cite{MR4086602}.  Especially as the defaultable bond position and default intensity increase, the CDS market provides the investor with a subtstantial benefit over the market without CDS.

\begin{figure}[ht]
\centering
\begin{subfigure}[b]{0.45\textwidth}
  \centering
  \includegraphics[width = \textwidth]{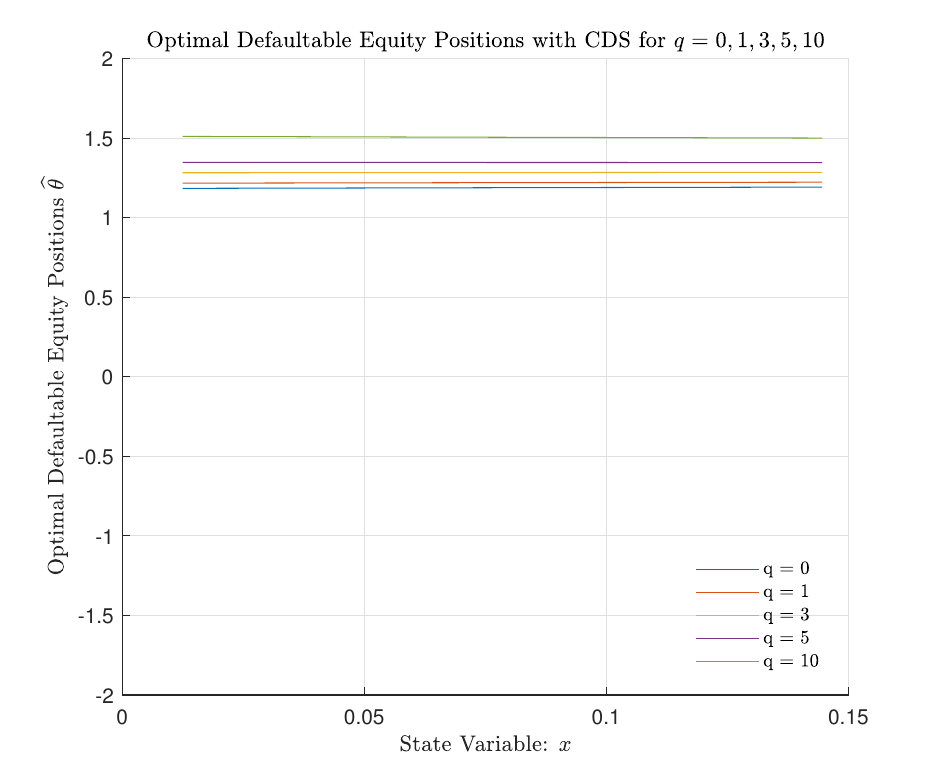}
\end{subfigure}
\hfill
\begin{subfigure}[b]{0.45\textwidth}
  \centering
  \includegraphics[width = \textwidth]{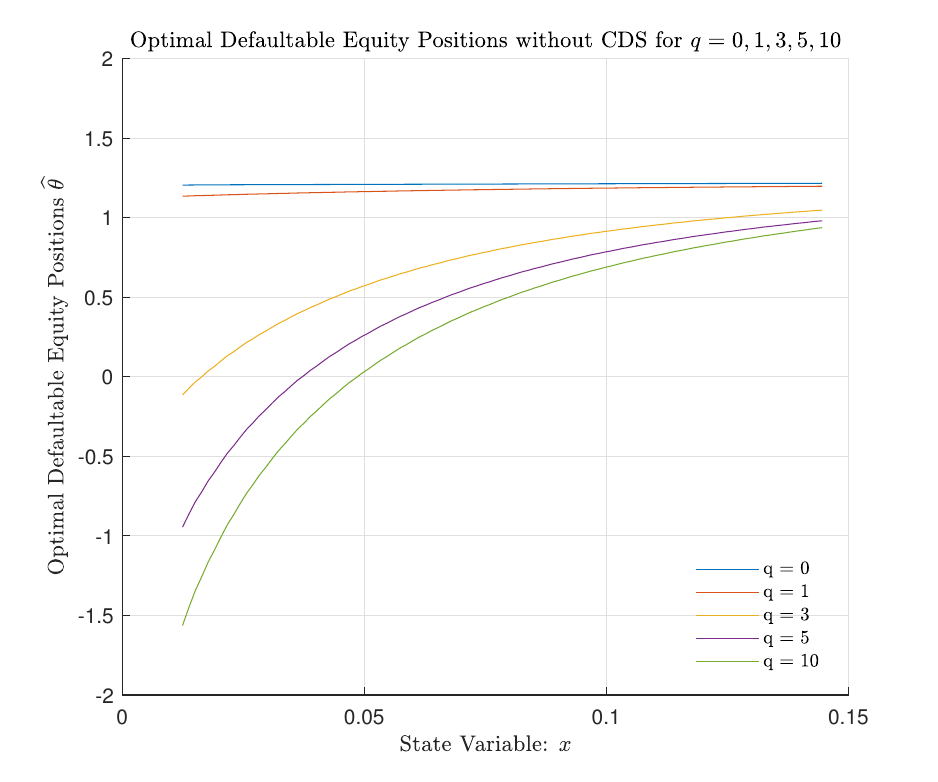}
\end{subfigure}
\caption{Time zero equity positions as a function of the state variable, for different face $q$ in the defaultable bond in complete model of Section \ref{SS:Ex_Comp}. The left plot is in the presence of the CDS market. The right plot is in the absence of the CDS market.}
\label{fig:Optimal Defaultable Equity Positions}
\end{figure}

\begin{figure}[ht]
\centering
\begin{subfigure}[b]{0.45\textwidth}
  \centering
  \includegraphics[width = \textwidth]{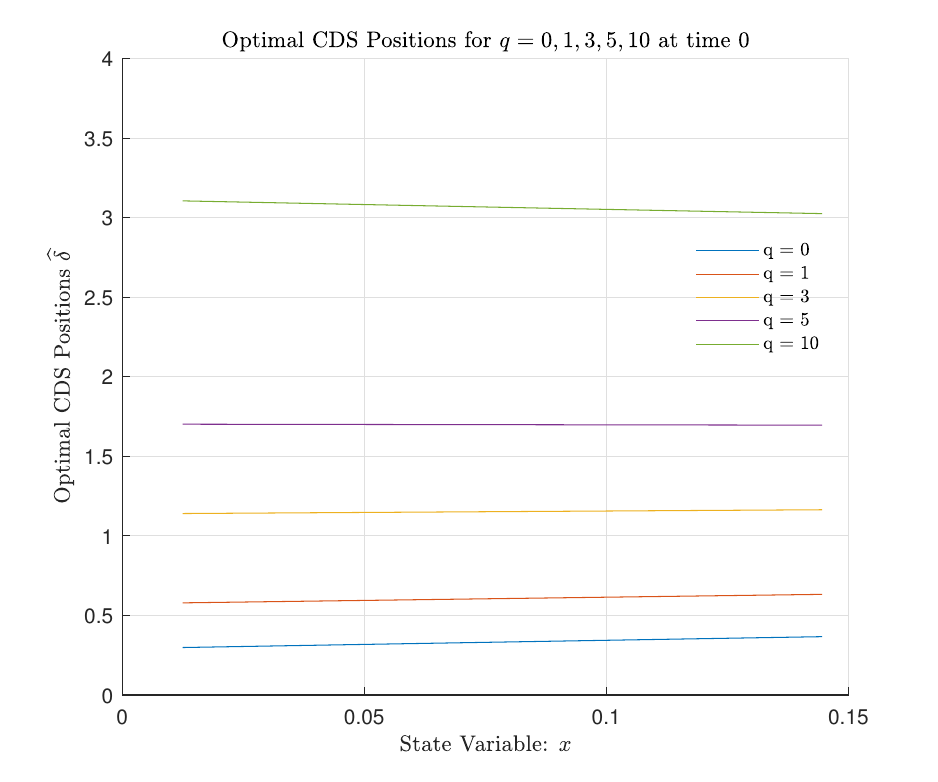}
\end{subfigure}
\hfill
\begin{subfigure}[b]{0.45\textwidth}
  \centering
  \includegraphics[width = \textwidth]{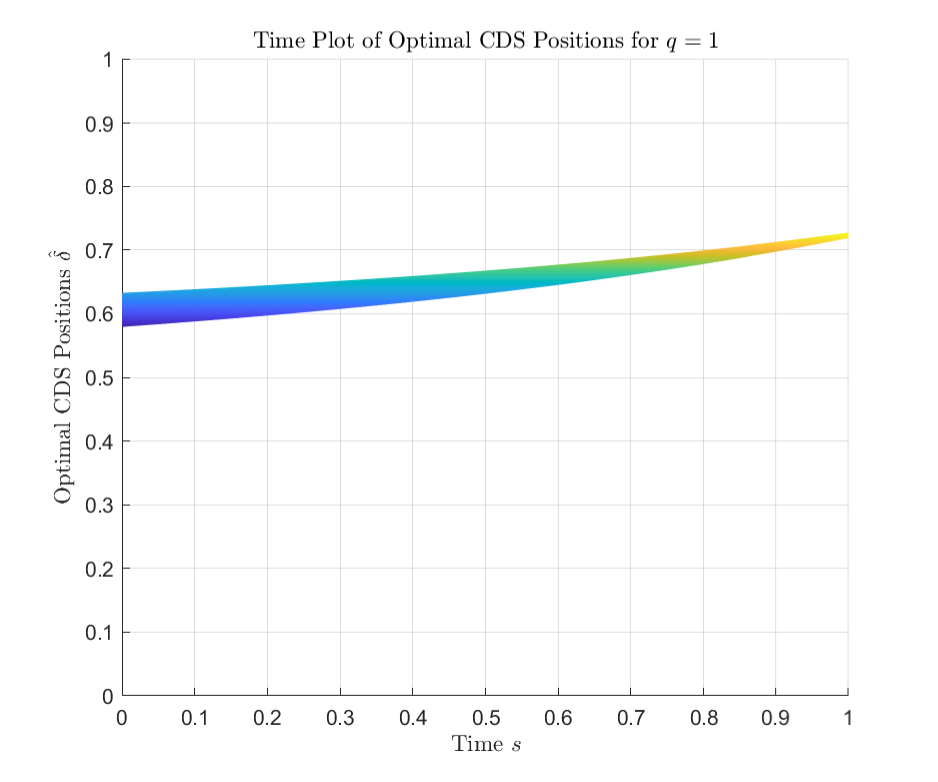}
\end{subfigure}
\caption{CDS positions.  The left plot shows the time $0$ cds position as function of the state variable, for different face $q$ of the defaultable bond in the complete model of Section \ref{SS:Ex_Comp}.  The right plot shows the range (over the state variable) of possible CDS positions as a function of time,  for $q=1$ face if the defaultable bond.}
\label{fig:Optimal CDS Positions}
\end{figure}

\begin{figure}[ht]
\centering
\includegraphics[width = .6\textwidth, height = 6cm]{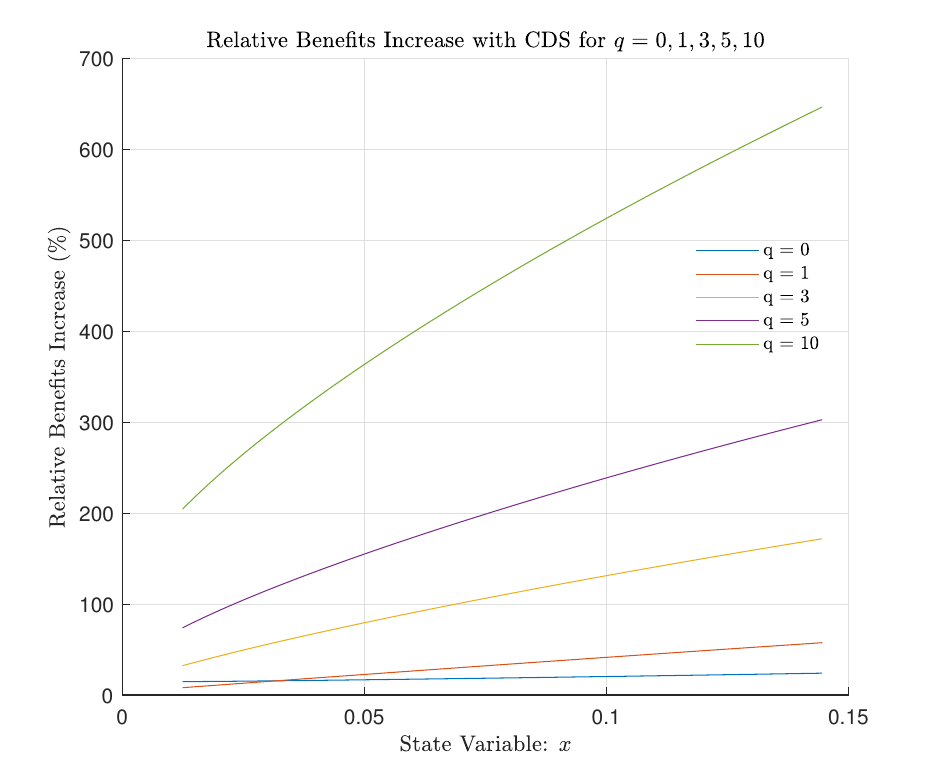}
\caption{The relative benefit $\textrm{CE}^{\textrm{CDS}}/\textrm{CE}^{\textrm{no CDS}} - 1$ at time $0$ as a function of the state variable for different face $q$ in the defaultable bond in the complete model of Section \ref{SS:Ex_Comp}.}
\label{fig:RB}
\end{figure}

\subsection{Incomplete Market}\label{SS:Ex_Incomp} In the incomplete market example there are two equities, and for given $\nu\in\reals^2$ and $\sigma\in \spos{2}$, the equity drift and volatility functions are
\begin{equation*}
    \mu_e(s,y) = y\sigma\nu;\quad \sigma_e(s,y) = \sqrt{y}\sigma.
\end{equation*}
The correlation is a constant $\rho\in\reals^2$ satisfying $\rho' \rho < 1$, and the loss proportion is $\ell_e(s, y)  = \ell(y) e_2$ where $e_2 = (0, 1)'$ and $\ell$ is a function specified below. This implies that at $\tau$ the first asset does not default, while the second does default, suffering the proportional loss $\ell(y)$ if $X_\tau = y$. We use the same values of $\nu$, $\sigma = \sqrt{\Sigma}$ and $\rho$ as in \cite{MR4086602} ,
\begin{equation*}
    \nu = 
    \begin{pmatrix}
        2.235 \\
        3.672
    \end{pmatrix};\quad 
    \Sigma =
    \begin{pmatrix}
        0.277 & 0.310 \\
        0.310 & 0.953
    \end{pmatrix};\quad
    \rho = 
    \begin{pmatrix}
        -0.530 \\
        -0.320
    \end{pmatrix}.    
\end{equation*}


The investor holds $q$ units of a defaultable bond, paying $1$ in the event $\tau > 1$.  As for $\psi$, unlike in the single-asset case, there is a non-zero post-default certainty equivalent from trading, as one may trade the non-defaulted first equity over $[\tau,1]$. To explicitly compute $\psi$, we use \cite {MR2660149,MR2932547} and \cite[Section 4.2]{MR4086602} which shows that $\psi(s, y) = \psi_g(s) + \psi_h(s) y$ for explicitly identifiable functions $\psi_g,\psi_h$.  The default intensities take the same form as in the complete market case
\begin{equation*}
    \gamma(s, y) = y \gamma;\quad \wt{\gamma}(s, y) = y \wt{\gamma}; \quad  \frac{\wt{\gamma}}{\gamma} = 1.5,
\end{equation*}
and the loss proportion is a constant, $\ell(y) = 0.5$. We assume the spot pricing measure $W$ risk-premia function is
\begin{equation*}
    \tprobnu(s, y) = \left(\rho' \sigma_e^{-1}(\mu_e - \wt{\gamma} l_e)\right)(s,y) = \sqrt{y} \tprobnu,\quad \tprobnu = \rho' \nu - \frac{\wt{\gamma}}{2}( \rho' \sigma^{-1} e_2 )\footnote{This corresponds to the minimal martginale measrure ignoring the jump stochastic exponential.}.
\end{equation*}
Under $\tprob$, the dynamics of $X$ becomes
\begin{equation*}
    d X_s = \wt{\kappa}(\wt{\theta} - X_s) d s + \xi \sqrt{X_s} d \wt{W}_s,\quad X_t = x,
\end{equation*}
where $\wt{\kappa} = \kappa + \xi \tprobnu$ and $\wt{\theta} = \kappa\theta/\wt{\kappa}$. As $\wt{\kappa} = 0.0177 > 0$ and Feller's condition holds, $X$ is still a CIR process under $\tprob$ with $X \in \OO$.  The CDS contract expiriy is $\wt{T} = 2$ and $\sigma_r$ is constructed in the same manner as in the complete market case. Lastly, though the verification is lengthy, one can show Assumptions \ref{A:phi_psi_alt} and \ref{A:incomplete_mkt} hold.

Figure \ref{fig:stochastic_UIP} plots the time zero indifference price as a function of the state variable for $q=1,3,5,10$ face of the defaultable bond.  Here, we see the prices display little variation across the position size.  This indicates that even though the market is incomplete, the investor is still able to accurately hedge the defaultable bond, and hence the position size does not significatnly influence the price.   Figure \ref{fig:stochastic_CDS_Positions} also indicates the investor's ability to hedge the defaultable bond, by showing time zero optimal positions in the non-defaultable equity (top-left plot), defaultable equity (top-right plot) and CDS (bottome plot) for different bond notational positions.   From the  figures (note the axis scaling) we see the investor's position in the defaultable equity is rather insensitive to the bond notational, as the notional primarily affects the
CDS position, with a lesser affect on the non-defaultable equity position.  This further indicates hedging is being done through the CDS, and the hedging strategies are mostly insenstive to the default intensity.

\begin{figure}[ht]
\centering
\includegraphics[width = 10cm, height=6cm]{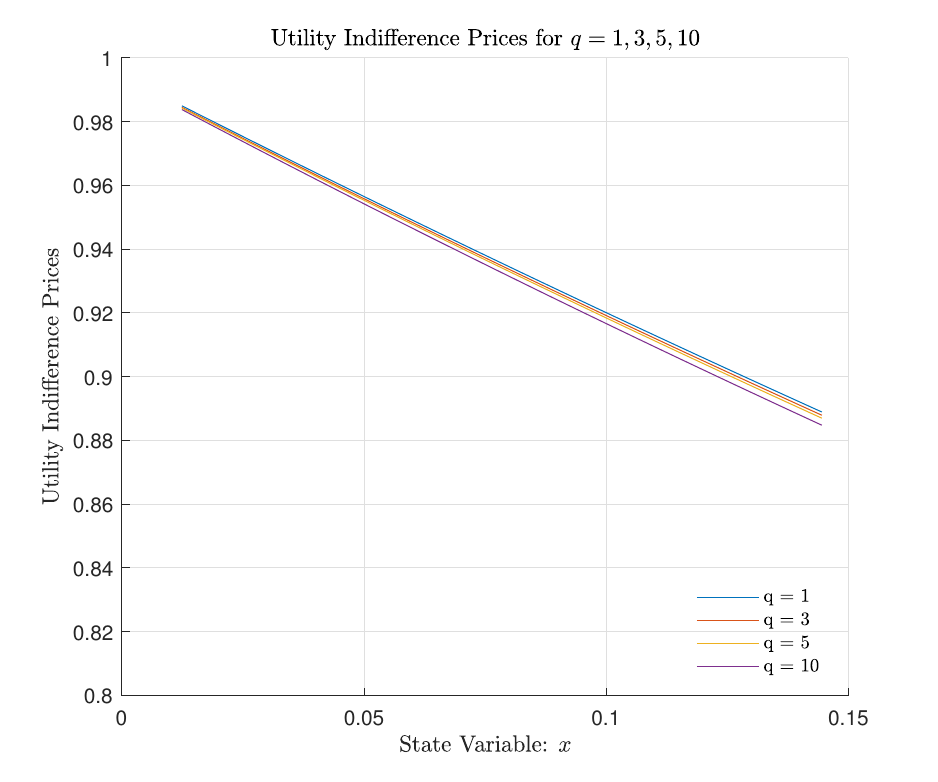}
\caption{Time $0$ utility indifference price as a function of the state variable for different face $q$ in the defaultable bond, in the incomplete model of Section \ref{SS:Ex_Incomp}.}
\label{fig:stochastic_UIP}
\end{figure}

\begin{figure}[ht]
\centering
\begin{subfigure}[b]{0.475\textwidth}
  \centering
  \includegraphics[width = \textwidth]{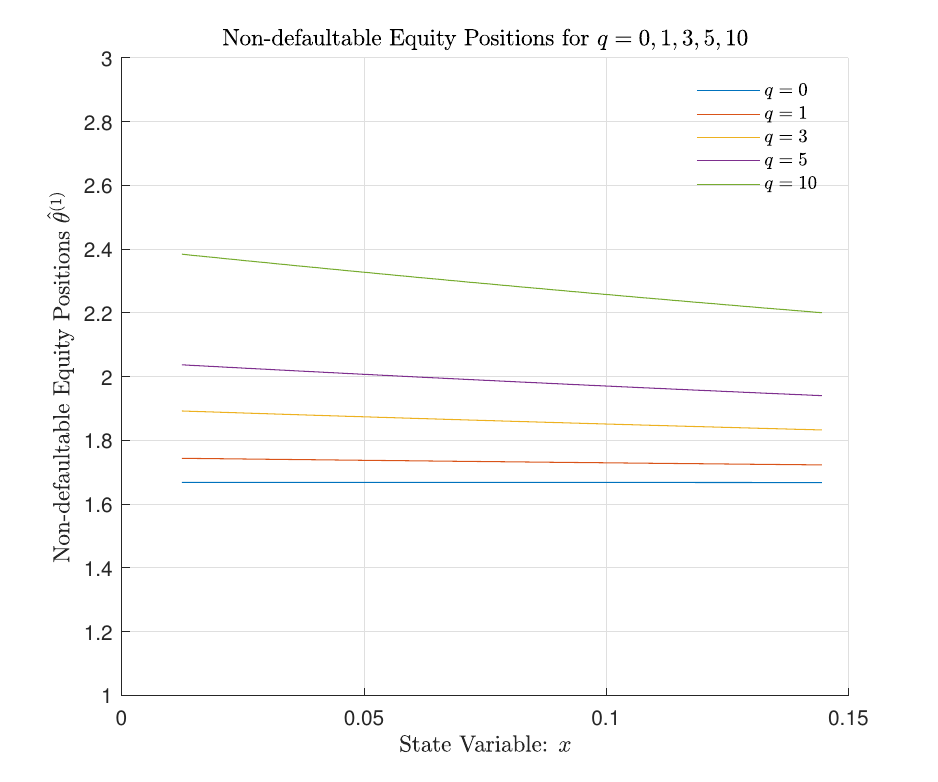}
\end{subfigure}
\hfill
\begin{subfigure}[b]{0.475\textwidth}
  \centering
  \includegraphics[width = \textwidth]{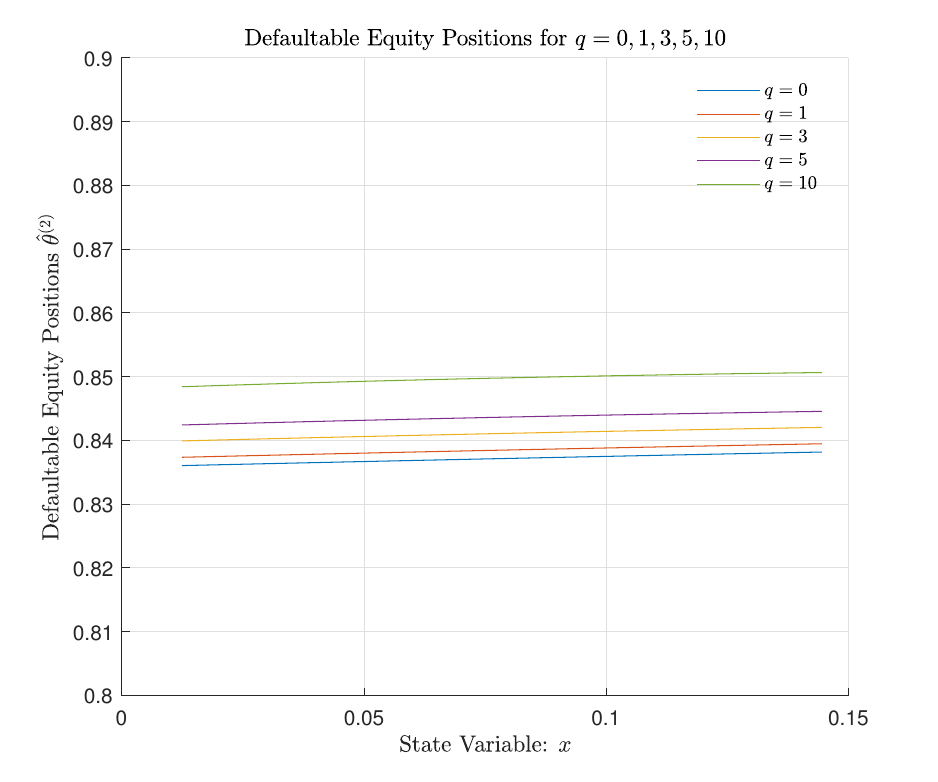}
\end{subfigure}
\vskip\baselineskip
\begin{subfigure}[b]{0.475\textwidth}
  \centering
  \includegraphics[width = \textwidth]{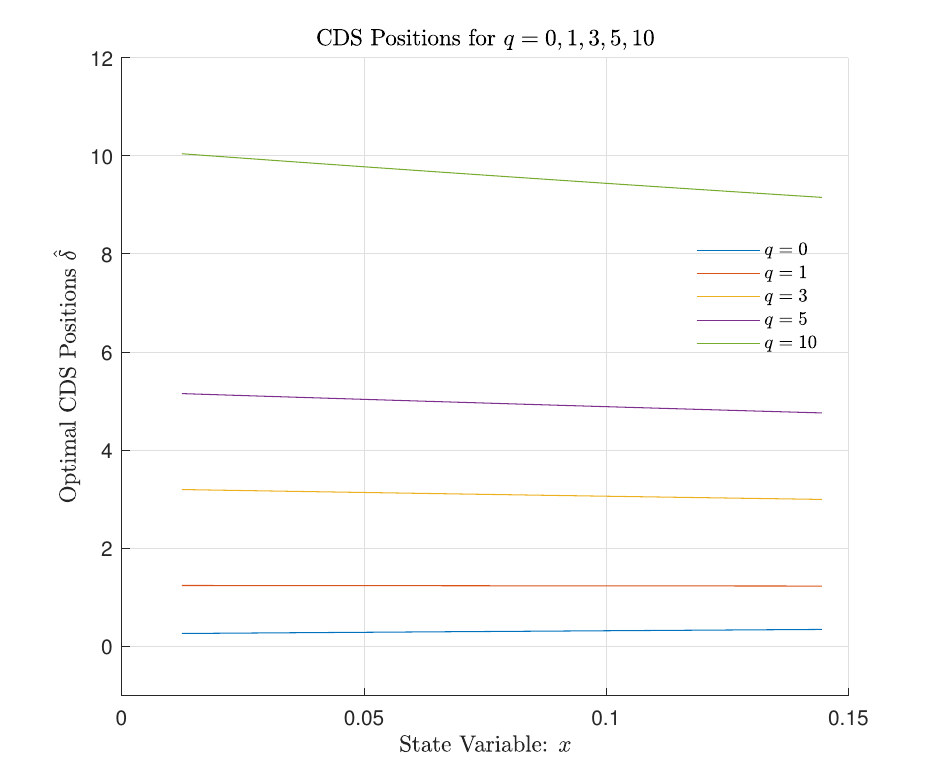}
\end{subfigure}
\caption{Time zero equity and CDS positions as a function of the state variable, for different face $q$ in the defaultable bond in the incomplete model of Section \ref{SS:Ex_Incomp}. The top left plot is for non-defaultable equity, the top right plot is for the defaultable equity, and the bottom plot is for the CDS.}
\label{fig:stochastic_CDS_Positions}
\end{figure}

\section{Conclusion}\label{S:conclusion}

In this paper, we considered the optimal investment problem in a model where the risky assets may default, but where the investor can hedge default risk by dynamically trading in a CDS market.  This updates the setting of \cite{MR4086602} to allow for CDS trading.  Under general conditions, we show that if the equity market absent default is complete, then the equity-CDS market accounting for default is complete as well. Furthermore, we show the optimal CDS position does not just cover equity losses upon default. Rather it additionally covers against losses due to the stoppage of trade.  Numerically, we find that the investor is using the CDS as the primary vehicle to hedge against default, as one would expect.  Furthermore, CDS positions are nearly static, and hence easily implementable in practice. Lastly, we show the investor's indirect utility significantly increases when the CDS market is present, despite the relatively tame strategies.  We hope this work reinforces the use of CDS trading to hegdge against default, even when a dynamic market is not present.

\bibliographystyle{alpha}
\bibliography{reference}

\appendix 

\section{HJB Derivation}\label{AS:HJB}

Here we informally derive \eqref{E:G_HJB}, but we stress that we will use other methods to verify solutions to \eqref{E:G_HJB} are the certainty equivalent. Fix $x\in \OO$ and $t\leq T$, and  for $\pi \in \A$ from \eqref{E:admiss} define
\begin{equation*}
    J(\pi) \dfn \expvs{U\left(\We^{\pi}_T + \phi(X_T)1_{\tau > T} + \psi(\tau,X_{\tau})1_{\tau\leq T}\right)}.
\end{equation*}
Using the jump in wealth at default we obtain
\begin{equation*}
    J(\pi) = \expvs{U\left(\We^{\pi}_{(T\wedge\tau)-} - \pi_{\tau}'\ell(\tau,X_{\tau})1_{\tau\leq T} + \phi(X_T)1_{\tau > T} + \psi(\tau,X_{\tau})1_{\tau\leq T}\right)}.
\end{equation*}
As $\pi\in\mcp(\filtg)$, we know that both $\pi$ and the left limit process associated to $\We^{\pi}$ coincide with $\filt^{W,B}$ predictable processes on $[0,\tau]$, and hence we use the intensity function $\gamma$ to obtain
\begin{equation*}
    \begin{split}
        J(\pi) &= \mathbb{E}\bigg[\int_{t}^{\infty} U\bigg(\We^{\pi}_{(T\wedge u)-} - \pi_{u}'\ell(u,X_u) 1_{u\leq T} + \phi(X_T)1_{u > T} + \psi(u,X_u)1_{u\leq T}\bigg)\\
        &\qquad\qquad \times \gamma(u,X_u) e^{-\int_0^u \gamma(v,X_v) dv}du\bigg],\\
        &=\mathbb{E}\bigg[\int_t^T U\left(W^{\pi}_u - \pi_u'\ell(u,X_u)  + \psi(u,X_u)\right)\gamma(u,X_u) e^{-\int_t^u \gamma(v,X_v) dv}du\\
        &\qquad\qquad + e^{-\int_t^T \gamma(v,X_v) dv}U\left(W^{\pi}_{T} + \phi(X_T)\right)\bigg].
    \end{split}
\end{equation*}
Using again that $\pi$ coincides with an $\filt^{W,B}$ predictable process prior to default, and writing $X= X^{t,x}$, we are left with the control problem of identifying $v(t,x,w) \dfn \sup_{\pi \in \mcp(\filt)} \wt{J}(t,x,w;\pi)$ where
\begin{equation*}
    \begin{split}
        \wt{J}(t,x,w,\pi) &= \mathbb{E}\bigg[\int_t^T U\left(W^{\pi}_u - \pi_u'\ell(u,X^{t,x}_u)  + \psi(u,X^{t,x}_u)\right)\gamma(u,X^{t,x}_u) e^{-\int_t^u \gamma(v,X^{t,x}_v) dv}du\\
        &\qquad\qquad + e^{-\int_t^T \gamma(v,X^{t,x}_v) dv}U\left(W^{\pi}_{T} + \phi(X^{t,x}_T)\right) \bigg| \We^{\pi}_t = w\bigg].
    \end{split}
\end{equation*}
This is a standard control problem, and from (for example) \cite{MR2533355} we obtain the HJB equation for $v$, suppressing function arguments,
\begin{equation*}
    0  = v_t + \max_{\pi}\bra{-\gamma v + \pi'\left(v_w \mu +  v_w \Upsilon \nabla\right) + \frac{1}{2}v_{ww} \pi'\Sigma \pi + Lv - \gamma e^{-\alpha(w-\pi'\ell + \psi)}},
\end{equation*}
with boundary condition $v(T,x,w) = e^{-\alpha(w + \phi(x))}$.  Writing $v(t,x,w) = -e^{-\alpha(w+G(t,x))}$ and simplifying gives the HJB equation \eqref{E:G_HJB} for $G$.

\section{Proofs from Section \ref{S:opt_invest}}\label{AS:opt_invest}

\begin{proof}[Proof of Proposition \ref{P:cds_dynamics}]

Recall the extended generator $\wt{L}$ in \eqref{E:wtL_def} for $X$ under $\tprob$. By \cite[Theorem 1, Lemma 2]{Heath-Schweizer} and our given assumptions, we know that $\wt{u}, \wt{v}$ respectively solve the PDEs
\begin{equation*}
    \begin{split}
        0 & = \wt{u}_s + \wt{L}\wt{u} -\wt{\gamma}(\wt{u}-1),\hspace{30pt} \wt{u}(T,\cdot) = 0,\\
        1 &= \wt{v}_s + \wt{L}\wt{v} - \wt{\gamma}\wt{v},\hspace{60pt} \wt{v}(T,\cdot) = 0.
    \end{split}
\end{equation*}
Next, using the notation of \cite{MR2474544} we have $r_s \equiv 0$, $B_s \equiv 1$, $\delta_s \equiv 1$, $M_s = H_s - \int_t^{\tau \wedge s} \wt{\gamma}(u,X_u)du$, $G_s = e^{-\int_t^s \wt{\gamma}(u,X_u)}$ and $\kappa(s,T) = \wt{u}(s,X_s)/\wt{v}(s,X_s)$.  Additionally,
\begin{equation*}
    m^1_s = 1 - G_s(1-\wt{u}_s);\qquad  m^2_s = G_s \wt{v}_s + \int_t^s G_u du.
\end{equation*}
The dynamics in \eqref{E:cds_dynamics} now follow from \cite[Lemma 2.4]{MR2474544} by direct computations.

\end{proof}


\section{Proofs from Section \ref{S:HJB_CE}}\label{AS:HJB_CE}

We begin with Proposition \ref{P:general_hamil_11}.  Throughout, we suppress $(t,x)$ keeping only the dependence upon $(p,g)$ and $\sigma = \sigma_r(t,x)$ explicit. We also write $\pi =(\theta,\delta)$. As identification of the optimal equity policy function $\wh{\theta}$ for fixed $(p,g,\delta)$ is the same under both Assumptions \ref{A:complete_mkt}, \ref{A:incomplete_mkt}, we start with this identification. To ease the notational burden we  define
\begin{equation}\label{E:ACE_def}
    \begin{split}
        \Afn(\sigma) &\dfn \sigma'\bAfn\sigma;\hspace{20pt}\Bfn(p,\sigma) \dfn -\wt{\gamma} + \sigma'\bBfn(p);\hspace{20pt} \Cfn \dfn \ell_e'\Sigma_e^{-1}\ell_e;\\
        \Dfn(g,p) &\dfn \Cfn \times \gamma e^{\alpha(g-\psi) + \ell_e'\Sigma_e^{-1}\left(\mu_e - \alpha\Upsilon_e p\right)}; \hspace{20pt} \Efn(\sigma) \dfn 1 + \sigma'\bEfn,
    \end{split}
\end{equation}
where
\begin{equation}\label{E:BD_def}
    \begin{split} 
         \bAfn &\dfn A - \Upsilon_e' \Sigma_e^{-1} \Upsilon_e;\hspace{10pt}\bBfn(p) \dfn a\tprobnu - \Upsilon_e' \Sigma_e^{-1} \mu_e - \alpha \bAfn p\hspace{10pt} \bEfn \dfn \Upsilon_e' \Sigma_e^{-1} \ell_e.
    \end{split}
\end{equation}
Note from \eqref{E:SigE_UpsE} and \eqref{E:E_barA_def} that $\Efn(\sigma_r) = \Efnalt$ when $k=d,\rho = \idmat{d}$. Now, fix $\delta$ and consider maximizing the right side of \eqref{E:hamil} over $\theta$. Using \eqref{E:SigE_UpsE} and noting that $\Sigma_e$ is invertible, direct computation yields the optimal equity policy function
\begin{equation}\label{E:opt_theta_1}
    \wh{\theta}(p,g,\delta) = \frac{1}{\alpha}\Sigma_e^{-1}\left(\mu_e -\alpha \Upsilon_e p - \frac{1}{\Cfn}\plog{\Dfn(g,p) e^{-\alpha\Efn(\sigma_r) \delta}}\ell_e - \alpha\delta \Upsilon_e \sigma_r\right).
\end{equation}
Plugging this into the right side of \eqref{E:hamil} and simplifying yields (recall \eqref{E:MF_hamil})
\begin{equation}\label{E:mid_hcal}
    \begin{split}
        \Hcal(g,p) &= \frac{1}{2\alpha}\left(\mu_e - \alpha\Upsilon_e p\right)'\Sigma_e^{-1}\left(\mu_e - \alpha\Upsilon_e p\right)\\
        &\qquad + \sup_{\delta}\bigg( -\frac{1}{2}\alpha \Afn(\sigma_r) \delta^2 + \Bfn(p,\sigma_r) \delta - \frac{1}{2\alpha\Cfn}\bigg(\plog{\Dfn(g,p) e^{-\alpha\Efn(\sigma_r)\delta}}^2\\
        &\qquad\qquad + 2\plog{\Dfn(g,p) e^{-\alpha \Efn(\sigma_r) \delta}}\bigg)\bigg).
    \end{split}
\end{equation} 

\subsection{Proof of Proposition \ref{P:general_hamil_11}}

When Assumption \ref{A:complete_mkt} holds, $\Afn(\sigma_r) =0$ and $\Bfn(\sigma,p) = -\wt{\gamma}\Efn(\sigma)$ (because $\tprobnu = \sigma_e^{-1}(\mu_e-\wt{\gamma}\ell_e)$).  This implies we are trying to maximize
\begin{equation*}
     -\wt{\gamma}\Efn(\sigma_r) \delta - \frac{1}{2\alpha\Cfn}\left(\plog{\Dfn(g,p) e^{-\alpha\Efn(\sigma_r)\delta}}^2 + 2\plog{\Dfn(g,p) e^{-\alpha \Efn(\sigma_r) \delta}}\right).
\end{equation*}
Using \eqref{E:plog_deriv} below, direct computations give the first order conditions and  solution
\begin{equation}\label{E:p1_foc_opt_delta}
    \wt{\gamma}\Cfn = \plog{\Dfn(g,p) e^{-\alpha\Efn(\sigma_r)\wh{\delta}}} \quad \Longrightarrow \quad \wh{\delta} = \frac{1}{\alpha\Efn(\sigma_r)}\left(-\wt{\gamma}\Cfn  + \log\left(\frac{\Dfn(g,p)}{\wt{\gamma}\Cfn}\right)\right),
\end{equation}
which is well defined because $\Efn(\sigma_r)\neq 0$.  Using $\wh{\delta}$ in \eqref{E:opt_theta_1} yields the formula for $\wh{\theta}$ in \eqref{E:gen_opt_strat_1}, and plugging in for $\Cfn$ and $\Dfn(g,p)$ yields the formula for $\wh{\delta}$ in \eqref{E:gen_opt_strat_1}. Continuing, plugging for $\wh{\delta}$ in \eqref{E:mid_hcal} gives
\begin{equation*}
    \begin{split}
        \Hcal(g,p) &= \frac{1}{2\alpha}\left(\mu_e - \alpha\Upsilon_e p\right)'\Sigma_e^{-1}\left(\mu_e - \alpha\Upsilon_e p\right)+ \frac{\wt{\gamma}^2 \Cfn}{2\alpha} - \frac{\wt{\gamma}}{\alpha} - \frac{\wt{\gamma}}{\alpha}\log\left(\frac{\Dfn(g,p)}{\wt{\gamma}\Cfn}\right).
    \end{split}
\end{equation*}
Plugging in again for $\Cfn,\Dfn(g,p)$ and using \eqref{E:Qval_complete} gives \eqref{E:Hcal_is_ok_1}.  The PDE \eqref{E:PDE_1} follows from \eqref{E:G_HJB} using $\wt{L}$ from \eqref{E:wtL_def} and \eqref{E:SigE_UpsE} at $\rho = \idmat{d}$ which shows $\Upsilon_e'\Sigma_e^{-1}\Upsilon_e = A$.

\subsection{Proof of Proposition \ref{P:general_hamil_2}}
We use the notation of \eqref{E:ACE_def}, \eqref{E:BD_def} as well as the optimal equity policy function in \eqref{E:opt_theta_1} for fixed $\delta$ and the Hamiltonian in \eqref{E:mid_hcal}.  Define the residual function
\begin{equation}\label{E:GR_def_new}
    \begin{split}
        R_{\Hcal}(\sigma,g,p) &\dfn \sup_{\delta}\bigg( -\frac{1}{2}\alpha \Afn(\sigma) \delta^2 + \Bfn(p,\sigma) \delta - \frac{1}{2\alpha\Cfn}\bigg(\plog{\Dfn(g,p) e^{-\alpha\Efn(\sigma)\delta}}^2\\
        &\qquad\qquad + 2\plog{\Dfn(g,p) e^{-\alpha \Efn(\sigma) \delta}}\bigg)\bigg) - \frac{\wt{\gamma}^2 \Cfn}{2\alpha} + \frac{\wt{\gamma}}{\alpha} + \frac{\wt{\gamma}}{\alpha}\log\left(\frac{\Dfn(g,p)}{\wt{\gamma}\Cfn}\right).
    \end{split}
\end{equation}
From \eqref{E:mid_hcal} we see
\begin{equation*}
    \begin{split}
        \Hcal(g,p) = \frac{1}{2\alpha}\left(\mu_e - \alpha\Upsilon_e p\right)'\Sigma_e^{-1}\left(\mu_e - \alpha\Upsilon_e p\right) + \frac{\wt{\gamma}^2 \Cfn}{2\alpha} - \frac{\wt{\gamma}}{\alpha} - \frac{\wt{\gamma}}{\alpha}\log\left(\frac{\Dfn(g,p)}{\wt{\gamma}\Cfn}\right) + R_{\Hcal}(\sigma,g,p).
    \end{split}
\end{equation*}
Using \eqref{E:Qval_complete}, a direct computation shows 
\begin{equation*}
    \begin{split}
        &\frac{1}{2\alpha}\left(\mu_e - \alpha\Upsilon_e p\right)'\Sigma_e^{-1}\left(\mu_e - \alpha\Upsilon_e p\right) + \frac{\wt{\gamma}^2 \Cfn}{2\alpha} - \frac{\wt{\gamma}}{\alpha} - \frac{\wt{\gamma}}{\alpha}\log\left(\frac{\Dfn(g,p)}{\wt{\gamma}\Cfn}\right)\\
        &\qquad = \frac{1}{\alpha}(Q_c-\gamma) + \wt{\gamma}(\psi-g)  + \frac{\alpha}{2}p'\Upsilon_e'\Sigma_e^{-1}\Upsilon_e p - p'\Upsilon_e'\Sigma_e^{-1}(\mu_e-\wt{\gamma}\ell_e).
    \end{split}
\end{equation*}
Therefore, the PDE \eqref{E:PDE_2} is obtained by substituting in for $\Hcal(G,\nabla G)$, and noting $\Upsilon_e'\Sigma_e^{-1}\Upsilon_e = a\rho'\rho a'$ and $\Upsilon_e'\Sigma_e^{-1} = a\rho'\sigma_e^{-1}$. We now show $R_{\Hcal}(0_d,p,g) = 0$. If $\sigma_r = 0_d$ then $\Afn(0_d) = 0, \Bfn(p, 0_d) = -\wt{\gamma}, \Efn(0_d) = 1$ and we solve
\begin{equation}\label{E:sup_delt_sigr_0}
    \sup_{\delta}\left( - \wt{\gamma}\delta - \frac{1}{2\alpha\Cfn}\left(\plog{\Dfn(g,p) e^{-\alpha\delta}}^2 + 2\plog{\Dfn(g,p) e^{-\alpha \delta}}\right)\right).
\end{equation}
Similarly to \eqref{E:p1_foc_opt_delta}, the first order conditions and optimizer are
\begin{equation}\label{E:opt_delta_sigr_0}
    \wt{\gamma}\Cfn = \plog{\Dfn(g,p) e^{-\alpha\wh{\delta}}} \quad \Longrightarrow \wh{\delta} = \frac{1}{\alpha}\left(-\wt{\gamma}\Cfn  + \log\left(\frac{\Dfn(g,p)}{\wt{\gamma}\Cfn}\right)\right).
\end{equation}
Plugging in for $\wh{\delta}$ in \eqref{E:sup_delt_sigr_0} gives the value
\begin{equation*}
    \frac{\wt{\gamma}^2 \Cfn,}{2\alpha} - \frac{\wt{\gamma}}{\alpha} - \frac{\wt{\gamma}}{\alpha}\log\left(\frac{\Dfn(g,p)}{\wt{\gamma}\Cfn}\right),
\end{equation*}
so that from \eqref{E:GR_def_new} we see $R_{\Hcal}(0_d,g,p) = 0$. Note from \eqref{E:opt_theta_1} and \eqref{E:opt_delta_sigr_0} we also obtain the formula for $\wh{\theta}_i$ in \eqref{E:gen_opt_strat_2} when $\sigma_r = 0$.  Next, consider when $\abs{\sigma_r}>0$.  Using \eqref{E:plog_deriv} below the first order conditions in \eqref{E:GR_def_new} are
\begin{equation*}
    0 = -\alpha\Afn(\sigma_r) \wh{\delta} + \Bfn(p,\sigma_r) + \frac{\Efn(\sigma_r)}{\Cfn}\plog{\Dfn(g,p) e^{-\alpha\Efn(\sigma_r)\wh{\delta}}}. 
\end{equation*}
This has unique solution
\begin{equation}\label{E:sigmaR_not_0_foc}
    \wh{\delta}(g,p) = \frac{\Bfn(p, \sigma_r)}{\alpha\Afn(\sigma_r)} +\frac{\Efn(\sigma_r)}{\alpha\Afn(\sigma_r)\mbf{K}_6(\sigma_r)}\plog{\frac{\Dfn(g,p)\mbf{K}_6(\sigma_r)}{\Cfn} e^{-\frac{\Efn(\sigma_r)\Bfn(p,\sigma_r)}{\Afn(\sigma_r)}}}.
\end{equation}
where
\begin{equation*}
\mbf{K}_6(\sigma) = \frac{\Afn(\sigma)\Cfn + \Efn(\sigma)^2}{\Afn(\sigma)}
\end{equation*}
In view of the equation for $\wh{\delta}$ in \eqref{E:opt_delta_sigr_0} we can write
\begin{equation}\label{E:wgdel_sigr_pos}
    \wh{\delta}(g,p) = \frac{1}{\alpha}\left(-\wt{\gamma}\Cfn  + \log\left(\frac{\Dfn(g,p)}{\wt{\gamma}\Cfn}\right)\right) + \wh{R}_{\delta}(\sigma_r,g,p), 
\end{equation}
where 
\begin{equation}\label{E:Rdelta_def}
    \begin{split}
        &\wh{R}_{\delta}(\sigma,g,p)  \dfn \frac{1}{\alpha}\left(\wt{\gamma}\Cfn - \log\left(\frac{\Dfn(g,p)}{\wt{\gamma}\Cfn}\right)\right) + \frac{\Bfn(p,\sigma)}{\alpha\Afn(\sigma)}\\
        &\qquad + \frac{\Efn(\sigma)}{\alpha\Afn(\sigma)\mbf{K}_6(\sigma)}\plog{\frac{\Dfn(g,p)\mbf{K}_6(\sigma)}{\Cfn} e^{-\frac{\Efn(\sigma_r)\Bfn(p,\sigma_r)}{\Afn(\sigma_r)}}}.
    \end{split}
\end{equation}
Plugging in for $\Cfn,\Dfn(g,p)$ we obtain the formula for $\wh{\delta}_i$ in \eqref{E:gen_opt_strat_2}, and the formula for $\wh{\theta}_i$ in \eqref{E:gen_opt_strat_2} follows using the first order condition \eqref{E:sigmaR_not_0_foc}. Next, plugging in $\wh{\delta}$ from \eqref{E:sigmaR_not_0_foc} into \eqref{E:GR_def_new} and simplifying leaves the explicit formula for $R_{\Hcal}$
\begin{equation}\label{E:GR_def}
    \begin{split}
        R_{\Hcal}(\sigma,g,p) &\dfn  - \frac{\wt{\gamma}^2\Cfn}{2\alpha} + \frac{\wt{\gamma}}{\alpha} + \frac{\wt{\gamma}}{\alpha}\log\left(\frac{\Dfn(g,p)}{\wt{\gamma}\Cfn}\right) + \frac{\Bfn(p,\sigma)^2}{2\alpha \Afn(\sigma)}\\
        &\qquad - \frac{1}{2\alpha\mbf{K}_6(\sigma)}\bigg( \plog{\frac{\Dfn(g,p)\mbf{K}_6(\sigma)}{\Cfn} e^{-\frac{\Efn(\sigma_r)\Bfn(p,\sigma_r)}{\Afn(\sigma_r)}}}^2 + 2\plog{\dots}\bigg)
    \end{split}
\end{equation}
where the second product log function is evaluated at the same argument as the first. Now, the formula for $\wh{\delta}$ in \eqref{E:opt_delta_sigr_0} was defined when $\abs{\sigma_r} = 0$, and the formula for $\wh{\delta}$ in \eqref{E:wgdel_sigr_pos} (and hence $\wh{R}_{\delta}$ in \eqref{E:Rdelta_def}) was defined when $\abs{\sigma_{r}} >0$.  If we combine the two cases we obtain
\begin{equation*}
    \wh{\delta}(g,p) =  \frac{1}{\alpha}\left(-\wt{\gamma}\Cfn  + \log\left(\frac{\Dfn(g,p)}{\wt{\gamma}\Cfn}\right)\right) + R_{\delta}(\sigma_r,g,p),
\end{equation*}
where for all $\sigma$ we define
\begin{equation}\label{E:Rdelta_def_new}
R_{\delta}(\sigma,g,p) = 1_{\abs{\sigma}>0} \wh{R}_{\delta}(\sigma,g,p).
\end{equation}
This  clearly shows $R_{\delta}(0_d,g,p) = 0$\footnote{In fact, one can show $\wh{R}_\delta$ may be continuously extended to $0_d$ by setting $\wh{R}_{\delta}(0_d,g,p) = 0$, but as in Assumption \ref{A:incomplete_mkt} we either assume $\sigma_r \equiv 0_d$ or $\abs{\sigma}_r > 0$ continuity as $0$ is not needed for our main results to go through.} and finishes the result.


\section{Properties of the Product Log Function}\label{AS:PL}

In this section we collect facts about the product log function $\plog{}$, defined as the inverse of $y e^y$ on $(-1,\infty)$. First, direct calculation using $\plog{z}e^{\plog{z}} = y$ shows that
\begin{equation}\label{E:plog_deriv}
\partial_z \plog{z} = \frac{\plog{z}}{z(1+\plog{z})}.
\end{equation}

Next, we state a lemma  used to provide gradient estimates when solving the PDE in \eqref{E:PDE_2}.

\begin{lemma}\label{L:plog_quad}
For $z\in\reals$ and $K>0$ we have $\plog{Ke^z}^2 + 2\plog{Ke^z}\leq 2K + z^2$ with equality if and only if $K=z$.
\end{lemma}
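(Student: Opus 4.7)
The plan is to eliminate the product-log by an implicit substitution and reduce the bound to an elementary one-variable calculus inequality. Set $w := \plog{Ke^z}$, so that by definition $we^w = Ke^z$ and $w > 0$. Introducing $u := z - w$, the identity becomes $w = Ke^u$, from which $z = Ke^u + u$.

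Substituting these relations into $F(z,K) := 2K + z^2 - w^2 - 2w$, the $K^2 e^{2u}$ contributions cancel and a short expansion yields
\begin{equation*}
F = 2K \bigl((u-1)e^u + 1\bigr) + u^2.
\end{equation*}
The proof thus reduces to the scalar bound $h(u) := (u-1)e^u + 1 \geq 0$ for all $u \in \reals$, with equality iff $u = 0$. But $h(0) = 0$ and $h'(u) = u e^u$ shares the sign of $u$, so $u = 0$ is the unique global minimum of $h$, and $h(u) \geq 0$ with equality only at $u = 0$.

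Consequently $F \geq 0$, and equality forces both $2K h(u) = 0$ and $u^2 = 0$. Since $K > 0$, each identifies $u = 0$, which unwinds to $z = w$; combined with $we^w = Ke^z$, this reads $ze^z = Ke^z$, i.e., $K = z$. There is essentially no real obstacle beyond picking the right change of variable that decouples the $\plog$ term; a more mechanical alternative would differentiate $F$ in $z$ via the formula $\partial_z \plog{Ke^z} = w/(1+w)$ from \eqref{E:plog_deriv} to obtain $F'(z) = 2(z-w)$ and then track the single sign change of $z - w$ at $z = K$, but the substitution $u = z - w$ above bypasses this and reduces the claim to a one-line minimisation of $h$.
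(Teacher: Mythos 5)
Your proof is correct. After checking: with $w=\plog{Ke^z}$ and $u=z-w$, the identity $we^w=Ke^z$ gives $w=Ke^u$ and $z=u+Ke^u$, so $z^2-w^2=u^2+2uKe^u$ (the $K^2e^{2u}$ terms cancel as you say), and indeed $2K+z^2-w^2-2w = 2K\bigl((u-1)e^u+1\bigr)+u^2$. Since $(u-1)e^u+1\ge 0$ with equality only at $u=0$, and $K>0$, the bound and its equality case follow, with $u=0$ unwinding to $z=K$.

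The paper's proof is the ``more mechanical alternative'' you sketch at the end: it fixes $K$, regards $h(K,z)=2K+z^2-w^2-2w$ as a function of $z$, uses the derivative rule \eqref{E:plog_deriv} to get $\partial_z h=2(z-w)$ and $\partial_{zz}h=2/(1+w)>0$, and concludes that the unique minimum of $h(K,\cdot)$ is attained at $z=w$ (i.e.\ $z=K$), where $h=0$. What your substitution buys is that the dependence on $K$ becomes linear and manifestly nonnegative ($2K\cdot h(u)$ with $h\ge 0$), so the whole claim collapses to the one-variable bound $(u-1)e^u+1\ge 0$ — no second derivative of the product log, and no need to appeal to strict convexity in $z$. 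The paper's route is shorter to write given that \eqref{E:plog_deriv} is already on hand, but yours is more self-contained and transparently separates the roles of $K$ and the distance $z-w$.
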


\begin{proof}[Proof of Lemma \ref{L:plog_quad}]

Set $h(K,z) = 2k + z^2 - \plog{Ke^z}^2 - 2\plog{Ke^z}$. From \eqref{E:plog_deriv} we see that $\partial_z h(K,z) = 2(z - \plog{Ke^z})$ and $\partial_{zz} h(K,z) = 2/(1+\plog{Ke^z}) > 0$. Thus, for $K$ fixed, $h(K,z)$ has a unique minimum (in $z$) at $z = \plog{Ke^z}$ or equivalently when $z=K$, and when $z=K$ we have $h(K,z) =0$.

\end{proof}


\section{Proof of Theorem \ref{T:main_result}}\label{AS:main_result}

Analogous to the proof of \cite[Theorem 2.10]{MR4086602}, the outline for proving Theorem \ref{T:main_result} is
\begin{enumerate}
    \item Under Assumption \ref{A:complete_mkt} (respectively \ref{A:incomplete_mkt}), identify a localized version of the PDE \eqref{E:PDE_1} (resp. \eqref{E:PDE_2}),  and use classic results on quasi-linear PDEs to prove existence of solutions. To treat \eqref{E:PDE_1} and \eqref{E:PDE_2} in a unified manner, we use \eqref{E:G_HJB} to express both as
    \begin{equation}\label{E:G_HJB_proof}
        0 = G_t + LG - \frac{\alpha}{2}\nabla G'A \nabla G + \frac{\gamma}{\alpha} + \Hcal(G,\nabla G),\qquad \phi = G(T,\cdot),
    \end{equation}
    where under Assumption \ref{A:complete_mkt} we take $\Hcal = \Hcal_c$ from \eqref{E:Hcal_is_ok_1}, and under Assumption \ref{A:incomplete_mkt} we take $\Hcal = \Hcal_i$ from \eqref{E:Hcal_is_ok_2}.

    \item Define a localized optimal investment problem for which the localized PDE is verified to be the certainty equivalent. 
    
    \item Obtain a global solution to the PDE by showing local uniformly boundedness of the local PDEs.

    \item Show that global solution is the certainty equivalent to the global optimal investment problem. Identify the optimal trading strategies in equity and CDS market, and equivalent local martingale measure.
\end{enumerate}

Below, several of the steps almost exactly follow those in \cite{MR4086602}, but several other steps require different proofs or approaches.  To simplify the presentation, all references to \cite{MR4086602} are given in italics, and wherever possible, we will describe how one may use a corresponding result in \cite{MR4086602}.

\subsection*{Mollifiers and H\"{o}lder  Space}

The mollifiers are denoted $\cbra{\chi_n}$. They are precisely constructed in \textit{Appendix A.1}, and for each $n$, $\chi_n \in C^{\infty}(\OO; [0,1])$ with $\chi_n = 1$ on $\OO_{n - 1}$, $\chi_n = 0$ on $\ol{\OO}_n^c$,  and  $\chi_n > 0$ on $\OO_n$. For every $0 < \eta \in \reals$, there exists a unique decomposition $\eta = k + \beta$ such that $k$ is a nonnegative integer and $\beta \in(0, 1]$. The H\"{o}lder spaces $H_{\eta,Q}$ and $H_{\eta,Q,loc}$ for a given region $Q$ in $\reals^{1+d}$ are defined exactly as $H_{k + \beta,Q}$ and $H_{k + \beta,Q,loc}$ respectively in \textit{Appendix A.1}. For $\eta = 1$ and bounded domain $K \in \reals^{1 + d}$, the space $H_{1, K}$ consists of functions that are Lipschitz continuous in $K$. In particular, $C^{(1, 1)}(K; \reals) \subset H_{1, K} \subset H_{\beta', K}$ for all $\beta' \in (0, 1)$ and bounded domains $K \in \reals^{1 + d}$.

\subsection*{Localized PDE} In view of \eqref{E:G_HJB_proof}, we introduce the following localized PDE defined on $[0, T] \times \OO_n$, analogous to the one in \textit{Appendix A.2}. 
\begin{equation}\label{E:G_HJB_local}
    \begin{split}
        0 &= G^n_s + L G^n -\frac{\alpha}{2}\nabla (G^n)' A \nabla (G^n) + \chi_n \left( \frac{\gamma}{\alpha} + \Hcal(G^n, \nabla G^n )\right),\quad \chi_n \phi = G^n(T,\cdot).
    \end{split}
\end{equation}
In accordance with \cite{MR1465184}, we adjust \eqref{E:G_HJB_local} with the following notations. First, define $v^n(s, x) \dfn G^n(T - s, x)$ and $\Omega_n \dfn (0, T - t) \times \OO_n$. Next, set $\Gamma_n \dfn B \Omega_n \cup S\Omega_n \cup \C \Omega_n$ as the parabolic boundary \footnote{See \cite[Section II.1]{MR1465184} for a definition of parabolic boundary.} of $\Omega_n$ where 
\begin{equation*}
    B \Omega_n \dfn \{0\} \times \OO_n;\quad S \Omega_n \dfn (0, T - t) \times \partial \OO_n;\quad C \Omega_n = \{0\} \times \partial \OO_n. 
\end{equation*}
The definition of $\Gamma_n \in H_{\eta}$ for some $\eta \geq 1$ can be found in \cite[Section IV.7]{MR1465184}. Finally, define $\varphi^n(s, x) \dfn \chi_n(x) \phi(x)$ for all $(s, x) \in \ol{\Omega}_n$ as the boundary condition on $\Gamma_n$. Then, the PDE for $v^n$ is
\begin{align}
    0 = P v^n & \dfn -v^n_s + \frac{1}{2}\tr\left( A D^2 v^n \right) + \ca^n(v^n, \nabla v^n), \label{E:localized_PDE} \\
    \ca^n(g, p) & \dfn b' p - \frac{\alpha}{2} p' A p + \chi_n \left( \frac{\gamma}{\alpha} + \Hcal(g, p)\right). \label{E:checka_def}
\end{align}
with $v^n = \varphi^n$ on $\Gamma_n$. 

Lastly, we briefly discuss the smoothness of $\Gamma_n$ and $\varphi^n$ here. For $\Gamma_n$, given $B \Omega_n = \{ 0 \} \times \Omega_n \subseteq \{ 0 \} \times \reals^d$ and $S \Omega_n = (0, T - t) \times \partial \OO_n$ with $\partial \OO_n \in C^{2, \beta}$ by Assumption \ref{A:region}, one can easily verify that $\Gamma_n \in H_{\eta}$ for $1 \leq \eta \leq 2 + \beta$. For $\varphi^n$, the independence of $s$ and $\chi_n \phi \in C^{2, \beta}(\OO; \reals)$ by Assumption \ref{A:phi_psi_alt} imply $\varphi^n \in H_{\eta, \Omega_n}$ for $1 \leq \eta \leq 2 + \beta$.

\begin{proposition}\label{P:localied_PDE_existence}
There exists a unique solution $v^n \in H_{2 + \beta, \Omega_n}$ to \eqref{E:localized_PDE}.
\end{proposition}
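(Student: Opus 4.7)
The plan is to view \eqref{E:localized_PDE}–\eqref{E:checka_def} as a semilinear parabolic Cauchy--Dirichlet problem on the bounded parabolic cylinder $\Omega_n$ with uniformly elliptic principal part $\tfrac{1}{2}\tr(A D^2\cdot)$ and a lower-order nonlinearity $\ca^n(v^n,\nabla v^n)$ having at most quadratic growth in the gradient, and to invoke the Leray--Schauder existence theory for such problems from \cite[Ch.\ V--VI]{MR1465184}. The required setup is favorable: the parabolic boundary $\Gamma_n$ lies in $H_{2+\beta}$ since $\partial\OO_n\in C^{2,\beta}$ by Assumption \ref{A:region}; the boundary datum $\varphi^n(s,x)=\chi_n(x)\phi(x)$ lies in $H_{2+\beta,\Omega_n}$ by Assumption \ref{A:phi_psi_alt} and vanishes identically on $S\Omega_n$, making the parabolic compatibility condition at $C\Omega_n$ automatic; and $A$ is uniformly elliptic on $\ol\Omega_n$ with $C^{(1,1)}$ coefficients. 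Moreover, the mollifier $\chi_n$ localizes every $\Hcal$-contribution to the compact set $\ol\OO_n$, on which $\gamma,\wt\gamma,\mu_e,\sigma_e,\rho,\ell_e,\psi$ and their spatial derivatives are bounded by Assumptions \ref{A:SE_reg}, \ref{A:alt_gamma}, and \ref{A:phi_psi_alt}.

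The argument proceeds in three a priori estimates. First, I would establish a uniform bound $|v^n|\le K_n$ on $\Omega_n$ by comparison with spatially constant super- and sub-solutions: the lower bound uses $\phi\ge\ul{\phi}$ and $\psi\ge 0$, while the upper bound exploits the compact support of $\chi_n$ to dominate the source terms on $\ol\OO_n$. Once $v^n$ is uniformly bounded, the exponential-in-$g$ dependence in $\Hcal$ entering through the product-log \eqref{E:plf_def} is uniformly bounded, so $\ca^n$ reduces to a nonlinearity with continuous coefficients on $\ol\Omega_n\times\reals\times\reals^d$ and at most quadratic growth in $p$. Second, I would obtain an a priori $H_{1+\beta,\Omega_n}$ estimate by invoking \cite[Theorem V.6.1]{MR1465184}: the required one-sided Bernstein structure condition is met because the explicit quadratic term $-\tfrac{\alpha}{2}p'Ap$ in $\ca^n$ has the \emph{correct sign}, and any residual quadratic-in-$p$ growth produced by the product-log contributions in $\Hcal$ is absorbed via Lemma \ref{L:plog_quad} together with the $L^\infty$ bound on $v^n$ from the first step. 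This delivers the gradient bound, and Hölder continuity of $\nabla v^n$ then follows from standard De Giorgi--Nash--Moser estimates. Third, with $v^n,\nabla v^n\in H_{\beta,\Omega_n}$, the composite forcing $\ca^n(v^n,\nabla v^n)$ is Hölder continuous in $(s,x)$, and linear parabolic Schauder theory \cite[Theorem IV.5.2]{MR1465184} upgrades $v^n$ to $H_{2+\beta,\Omega_n}$. Existence then follows from the standard Leray--Schauder degree argument in \cite[Ch.\ V]{MR1465184}.

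Uniqueness is then straightforward: if $v^{n,1},v^{n,2}$ are two classical solutions, their difference satisfies a linear parabolic equation with zero boundary data (the coefficients being controlled using that $\ca^n$ is locally Lipschitz in $(g,p)$ on bounded sets, together with the uniform $L^\infty$ bound from the first step), and the maximum principle forces $v^{n,1}\equiv v^{n,2}$. The hard part will be the second step: verifying that the quadratic gradient growth of $\ca^n$ fits into the Bernstein framework of \cite{MR1465184} despite the nonstandard product-log nonlinearity appearing in $\Hcal$. The key analytic inputs are the good sign of $-\tfrac{\alpha}{2}p'Ap$ and Lemma \ref{L:plog_quad}, which together yield the one-sided structural inequality needed for the a priori gradient estimate; beyond that the proof reduces to a routine application of the classical quasilinear parabolic theory.
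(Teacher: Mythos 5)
Your proposal is correct in substance, but it reconstructs by hand a machinery that the paper obtains in one stroke: the paper simply invokes \cite[Theorem 12.16]{MR1465184}, a packaged existence theorem for quasilinear parabolic Cauchy--Dirichlet problems, and verifies its five hypotheses (Hölder regularity of $\Gamma_n$ and $\varphi^n$; Hölder regularity of $A$ and $\ca^n$; the one-sided growth bound $g\,\ca^n(g,0)\le C(n)(1+g^2)$, delegated to Lemma~\ref{L:g_checka_bnd}; the quadratic gradient growth $\limsup_{|p|\to\infty}|\ca^n(g,p)|/|p|^2<\infty$, delegated to Lemma~\ref{L:p_checka_bnd}; and boundedness of $\nabla_x A_{ij}$). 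What you propose --- $L^\infty$ bound by comparison, Bernstein gradient bound via \cite[Theorem V.6.1]{MR1465184}, Hölder continuity of $\nabla v^n$ via De Giorgi--Nash--Moser, Schauder bootstrap, Leray--Schauder degree --- is precisely the internal skeleton of that packaged theorem, so both routes rest on identical analytic inputs. Your approach buys transparency about where each ingredient enters; the paper's buys economy and insulation from the subtler points of the Bernstein structure conditions.

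Two small imprecisions are worth flagging. First, your claim that the ``explicit quadratic term $-\tfrac{\alpha}{2}p'Ap$ in $\ca^n$ has the correct sign'' glosses over the fact that $\Hcal$ itself contributes $+\tfrac{\alpha}{2}\chi_n\, p'\Upsilon_e'\Sigma_e^{-1}\Upsilon_e p$ (which equals $+\tfrac{\alpha}{2}\chi_n\, p'Ap$ in the complete case), so the net quadratic form is $-\tfrac{\alpha}{2}p'(A-\chi_n\Upsilon_e'\Sigma_e^{-1}\Upsilon_e)p$; in the complete case this degenerates entirely on $\OO_{n-1}$ where $\chi_n=1$, and there $\ca^n$ is in fact \emph{linear} in $p$. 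The paper's Lemma~\ref{L:p_checka_bnd} handles this by checking the quadratic-growth ratio directly (in three regimes corresponding to $\sigma_r\equiv 0$, $|\sigma_r|>0$, and the complete case), relying on Lemma~\ref{L:L_inv_bnd} for the $\Sigma^{-1}$ bounds, rather than on a sign argument; Lemma~\ref{L:plog_quad}, which you also cite, is the device that controls the product-log contributions. Second, for your $L^\infty$ bound by comparison with constants: the upper-barrier calculation does go through on $\ol\OO_n$ (since $\wt\gamma$ is bounded away from zero and $\phi,F$ are bounded there, and $\ca^n(M,0)=\chi_n(-\wt\gamma M+F)\le 0$ for $M$ large), and the lower barrier $v\equiv\ul\phi$ works because $F\ge 0$ and $\ul\phi\le 0$; this is consistent with the one-sided condition $g\,\ca^n(g,0)\le C(n)(1+g^2)$ used in the paper. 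Finally, you are right that the first-order compatibility condition $P\varphi^n=0$ on $C\Omega_n$ is automatic here because $\chi_n$ vanishes to second order near $\partial\OO_n$; the paper spells this out explicitly, as it is what lifts the solution from the weighted space $H^{-1-\beta'}_{2+\beta,\Omega_n}$ produced by Theorem 12.16 to $H_{2+\beta,\Omega_n}$.
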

\begin{proof}[Proof of Proposition \ref{P:localied_PDE_existence}]
Analogous to the proof of \textit{Proposition A.1}, the existence of a solution to localized PDE \eqref{E:localized_PDE} is based on \cite[Theorem 12.16]{MR1465184}. To invoke this theorem, we need to verify the following:
\begin{enumerate}[(1)]
    \item  $\Gamma_n \in H_{1 + \beta'}$ and $\varphi^n \in H_{1 + \beta', \Omega_n}$ for some $\beta' \in (0, 1)$.
    
    \item $A^{i j} \in H_{1, K}$ for all bounded subsets $K$ of $\Omega_n$, and $\ca^n(g, p) \in H_{\beta, K}$ for all bounded subsets $K$ of $\Omega_n \times \reals \times \reals^d$.

    \item There exists a constant $C(n)$ such that $g \ca^n(g, 0) \leq C(n)(1 + g^2)$ for $(s, x) \in [0, T] \times \OO_n$.

    \item For any given interval $[g_1, g_2]$,
    \begin{equation*}
        \limsup_{\abs{p} \uparrow \infty} \sup_{\substack{(s, x) \in [0, T] \times \ol{\OO}_n, \\ g \in [g_1, g_2]}} \frac{\abs{\ca^n(g, p)}}{\abs{p}^2} < \infty.
    \end{equation*}
    \item
    \begin{equation*}
        \limsup_{\abs{p} \uparrow \infty}  \sup_{(s, x) \in [0, T] \times \ol{\OO}_n} \frac{\abs{\nabla_x A_{i j}}}{\abs{p}^2} < \infty,
    \end{equation*} 
    
\end{enumerate}

Since $1 \leq 1 + \beta' \leq 2 + \beta$, the previous discussion on the smoothness of $\Gamma_n$ and $\varphi_n$ implies part (1).  By Assumption \ref{A:region}, $A^{i j} \in C^{(1, 1)}(K; \reals) \subset H_{1, K}$ for all bounded subsets $K \in \Omega_n$. As $b$, $A$, $\chi_n$, $\gamma$ and $\Hcal$ are all continuously differentiable functions with respect to their variables, it follows that $\ca^n$ is continuously differentiable with respect to $(s, x, g, p)$. This implies that $\ca^n \in H_{1, K} \subseteq H_{\beta, K}$ for all bounded subsets $K$ of $\Omega_n \times \reals \times \reals^d$ and part (2) follows. Parts (3) and (4) are shown in Lemmas \ref{L:g_checka_bnd},  \ref{L:p_checka_bnd} respectively below. As for part (5), $A$ being independent of $p$ implies $\abs{\nabla_x A_{i j}}$ is of order 1 with respect to $\abs{p}$, giving the result. 

Putting all this together, \cite[Theorem 12.16]{MR1465184} yields a solution $v^n \in H^{- 1 - \beta'}_{2 + \beta, \Omega_n}$\footnote{The space $H^{-1-\beta'}_{2+\beta,\Omega_n}$ is defined in \cite[Chapter IV.1]{MR1465184} and is written $H^{(-1-\beta')}_{2+\beta,\Omega_n}$.}. By the remark at the end of \cite[Theorem 12.16]{MR1465184}, this solution is unique as $A$ is independent of $g$, and $\ca$ is Lipschitz with respect to $g$ and $p$. 

Lastly, we need to verify that $\varphi^n$ satisfies the compatibility condition of the first order, $P \varphi^n = 0$ on $C \Omega_n$. Indeed, $\nabla \varphi^n = 0$ and $\chi_n = 0$ on $C \Omega_n$ imply $\ca = 0$ on $C \Omega_n$. Moreover, as $\varphi^n_s = 0$ and $D^2 \varphi^n = 0$ on $C \Omega_n$, it follows $P \varphi^n = 0$ on $C \Omega_n$. Then, since $\Gamma_n \in H_{2 + \beta}$ and $\varphi^n \in H_{2 + \beta}$, we can further conclude that $v^n \in H_{2 + \beta, \Omega_n}$.

\end{proof}
\begin{remark}\label{R:gradient_bnd}
    Analogous to \cite[Remark A.2]{MR4086602}, $\abs{G^n}_{2 + \beta, \Omega_n}$ implies $$\sup_{(s, x) \in [0, T] \times \ol{\OO}_n} \abs{\nabla G^n} < \infty.$$
\end{remark}

\subsection{Localized optimal investment problem} We now define a localized optimal investment problem for the localized PDE defined above.  To do so, fix $t\leq T, x\in\OO$ and take $n$ large enough so that $x\in\OO_n$. Define the localized default time
\begin{equation}\label{E:tau_n_def}
    \tau^n = \inf \cbra{ s \geq t \such \int_t^s (\chi_n \gamma)(u, X_u) \dif u = -\log(U)}.
\end{equation}
and the localized default indicator process $H^n_{\cdot} \coloneqq \mathbbm{1}_{\tau^n \leq \cdot}$. To the define the localized equity process, by enlarging the probability space, assume there is a $d$-dimensional Brownian motion $\wh{W}$ independent of $U,W$ and $B$. The filtration  $\filtg^n$ is the $\prob$ augmented version of $\filt^{W, B,\wh{W}} \vee \filt^{H^n}$. The equity market in this local investment problem has the following price dynamics:
\begin{equation*}
    \frac{dS_{s}^{e,n}}{S_{s-}^{e,n}}  = 1_{s \leq \tau^n} \Bigl( (\chi_n \mu_e)(s, X_s) ds + (\sqrt{\chi_n} \sigma_e)(s, X_s) dZ^n_s \Bigr) - \ell_e(s, X_s) dH^n_s,
\end{equation*}
where
\begin{equation*}
    d Z^n_s = \sqrt{\chi_n} \rho d W_s + \ol{\rho} d B_s + \sqrt{1 - \chi_n} \rho d \wh{W}_s.
\end{equation*}
The fictitious asset $S^{r,n}$ related to the CDS market in this local investment problem has dynamics
\begin{align*}    \frac{\dif S^{r,n}_{s}}{S^{r,n}_{s-}} = 
    & 1_{s \leq \tau^n} \bigl( \chi_n ( \sigma_r' a \tprobnu -\wt{\gamma} )(s, X_s)ds + \chi_n ( \sigma_r' a )(s, X_s) d W_s \\  
    & \qquad + \sqrt{\chi_n (1 - \chi_n)} ( \sigma_r' a  )(s, X_s) d \wh{W}_s \bigr) + d H^n_s. \nonumber
\end{align*}
The joint market dynamics is thus
\begin{align*}
\frac{dS^n_s}{S^n_{s-}} = 
& 1_{s\leq \tau^n}\bigl( (\chi_n \mu)(s,X_s)ds + (\chi_n \sigma_W)(s,X_s)dW_s + ( \sqrt{\chi_n} \sigma_B )(s,X_s)d B_s \\
& \qquad + ( \sqrt{\chi_n(1 - \chi_n)} \sigma_W )(s, X_s) d \wh{W}_s \bigr) - \ell(s,X_s)dH^n_s. \nonumber
\end{align*}


Next, define $\zeta^n \coloneqq \inf \left\{ \left. s \geq t \right| X_s \in \partial \OO_n \right\}$ as the first time $X$ exits $\OO_n$. The local investment problem has horizon $T\wedge\zeta^n$, and we define $\M^n$ as the equivalent local martingale measures on $\G^n_{T \wedge \zeta^n}$, $\wt{\M}^n$ the subset with finite relative entropy with respect to $\prob$ on $\G^n_{T \wedge \zeta^n}$, and  $\A^n$ the class of $\filtg^n$ predictable trading strategies $\pi^n$ whose wealth process $\We^{\pi^n}$ is a $\mathbb{Q}^n$ supermartingale for all $\mathbb{Q}^n \in \wt{\M}$. Here, investment stops at $T\wedge\tau^n\wedge \zeta^n$ and $\We^{\pi^n}$ has dynamics
\begin{equation}\label{E:loc_wealth_dynamics}
    \begin{split}
    d \We^{\pi^n}_s = & 1_{s \leq \tau^n \wedge \zeta^n} \left( \pi^n_s \right)' \bigl( (\chi_n \mu) (s, X_s) d s + (\chi_n \sigma_W)(s, X_s)' d W_s + ( \sqrt{\chi_n} \sigma_B ) (s, X_s) d B_s \\
    & \qquad + ( \sqrt{\chi_n(1 - \chi_n)} \sigma_W) (s, X_s) d \wh{W}_s \bigr) - \mathbbm{1}_{s \leq \zeta^n } \left( \pi^n_s \right)' l(s, X_s) d H^n_s.
    \end{split}
\end{equation}
The random endowment is $(\chi_n \phi)(X_{T \wedge \zeta^n}) \mathbbm{1}_{\tau^n > T \wedge \zeta^n} + \Psi(\tau^n , X_{\tau^n}) \mathbbm{1}_{\tau^n \leq T \wedge \zeta^n}$. The value function is
\begin{equation} \label{local_utility_function}
    u^n(t, x) \coloneqq \sup_{\pi^n \in \A^n} \E \left[ - \e^{- \alpha \left( \We^{\pi^n}_{T \wedge \zeta^n} + (\chi_n \phi)(X_{T \wedge \zeta^n}) \mathbbm{1}_{\tau^n > T \wedge \zeta^n} + \Psi(\tau^n , X_{\tau}^n) \mathbbm{1}_{\tau^n \leq T \wedge \zeta^n} \right) } \right].
\end{equation}

Analogously to \eqref{certainty equivalent}, and with an eye towards \eqref{E:G_HJB_local} we define the certainty equivalent $G^n(t,x) = -(1/\alpha)\log(-u^n(t,x))$, which analogously to \eqref{E:G_HJB} is expected to solve
\begin{equation*}
    0 = G^n_t + LG - \frac{\alpha}{2}\nabla (G^n)'A\nabla G^n + \frac{\chi_n\gamma}{\alpha} +  \sup_{\pi} H^n(\pi,G^n,\nabla G^n),\quad \chi_n\phi = G^n(T,\cdot).
\end{equation*}
To obtain \eqref{E:G_HJB_local} we must first obtain the local Hamiltonian, which has the same form as the right side of \eqref{E:hamil}, but for the localized assets.  As such, we note the local analog of \eqref{E:SigE_UpsE} is
\begin{equation*}
    \begin{split}
        \Sigma^n_{e} &= d\langle S^{e,n}, S^{e,n}\rangle / ds = \chi_n \Sigma_e;\quad \Upsilon^n_{e} = d\langle S^{e,n}, X \rangle /ds = \chi_n \Upsilon_e,\\
        \Sigma^n &= d\langle S^n, S^n \rangle /ds = \chi_n\Sigma;\quad \Upsilon^n = d\langle S^n, X\rangle / ds = \chi_n \Upsilon.
    \end{split}
\end{equation*}
Note that $\chi_n$ factors our of each quantity above.  Therefore, from \eqref{E:hamil} we obtain $H^n = \chi_n H$ and hence \eqref{E:G_HJB_local} follows.

Next, we recall \textit{Lemmas A.3 and A.4} specified to our notation. The first one is regarding the structure of $\qprob^n \sim \prob$ on $\G^n_{T \wedge \zeta^n}$, and the second one is regarding when $\qprob^n \in \M^n$. Below, our notation follows \eqref{E:mm_den}.

\begin{lemma}\label{L:local_ELLM_density_rep}
    Any measure $\qprob^n \sim \prob$ on $\G^n_{T \wedge \zeta^n}$ has representation
    \begin{equation}\label{E:local_elmm}
        \left. \frac{d \qprob^n}{d \prob} \right|_{T \wedge \zeta^n} = \Ecal\left(\int (\mbf{A}^n_s) 'dW_s + (\mbf{B}^n_s)' d B_s + \mbf{C}^n_s d M_s+ (\mbf{D}^n_s)' d \wh{W}_s \right)_t^{T \wedge \zeta^n}
    \end{equation}
    where $\mbf{A}^n$, $\mbf{B}^n$, $\mbf{C}^n$, and $\mbf{D}^n$ are $\filtg^n$ predictable processes. Additionally, $(W^{\qprob^n}, B^{\qprob^n}, \wh{W}^{\qprob^n})$ is a $\qprob^n$ Brownian motion over $[t,T \wedge \zeta^n]$ where $dW^{\qprob^n}_s = dW_s - \mbf{A}^n_s d s$, $dB^{\qprob^n}_{s} = dB_{s} -  \mbf{B}^n_s ds$, $d\wh{W}^{\qprob^n}_{s}  = d\wh{W}_{s} - \mbf{D}^n_s ds$, and $dM^{\qprob^n}_{s} = dM_{s} - 1_{s\leq \tau^n}(\chi_n \gamma)(s, X_s) \mbf{C}^n_s ds$ is a $\qprob^n$ martingale stopped at $T \wedge \zeta^n$.
\end{lemma}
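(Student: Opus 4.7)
\textbf{Proof plan for Lemma \ref{L:local_ELLM_density_rep}.} The result is the localized analogue of the density representation in the CDS-free setting of \cite{MR4086602}, and the plan is to follow the same recipe: apply the predictable representation property (PRP) in $\filtg^n$ to the Radon-Nikodym density process, then read off the $\qprob^n$-compensators via Girsanov for the continuous and purely discontinuous parts simultaneously.

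First, I would introduce the density process
\begin{equation*}
    Z^n_s \dfn \condexpvs{d\qprob^n/d\prob}{\G^n_s}, \qquad s\in[t,T\wedge \zeta^n].
\end{equation*}
Since $\qprob^n\sim \prob$ on $\G^n_{T\wedge \zeta^n}$, $Z^n$ is a strictly positive $(\prob,\filtg^n)$-martingale on this interval.

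Second, I would invoke the PRP in $\filtg^n$. Recall that $\filtg^n$ is the $\prob$-augmentation of $\filt^{W,B,\wh{W}}\vee \filt^{H^n}$, that $(W,B,\wh{W})$ remain Brownian motions in $\filtg^n$ by independence, and that $M^n_s \dfn H^n_s - \int_t^{s\wedge \tau^n}(\chi_n\gamma)(u,X_u)du$ is a $(\prob,\filtg^n)$-local martingale thanks to the canonical construction \eqref{E:tau_n_def}, which uses $U\indep (W,B,\wh{W})$. This independence yields the (H)-hypothesis between $\filt^{W,B,\wh{W}}$ and $\filtg^n$ under $\prob$ (see \cite{bielecki2009credit, bielecki2013credit}), which combined with the classical Brownian PRP implies that every $(\prob,\filtg^n)$-local martingale on $[t,T\wedge \zeta^n]$ has a representation as a stochastic integral against $W$, $B$, $\wh{W}$ and $M^n$. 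Applying this to $Z^n$ and invoking It\^{o}--Dol\'{e}ans-Dade on the strictly positive $Z^n$ produces the stochastic exponential form \eqref{E:local_elmm}, with $\mbf{C}^n > -1$ because the only possible jump of $Z^n$ occurs at $\tau^n$ with relative size $\mbf{C}^n_{\tau^n}$ and $Z^n>0$ forces $1+\mbf{C}^n_{\tau^n}>0$.

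Third, I would read off Girsanov: computing $\langle Z^n,W\rangle/Z^n_-$, and analogously for $B$ and $\wh{W}$, produces the drift shifts $\mbf{A}^n ds$, $\mbf{B}^n ds$, $\mbf{D}^n ds$, so that $W^{\qprob^n}$, $B^{\qprob^n}$, $\wh{W}^{\qprob^n}$ are $\qprob^n$-Brownian motions on $[t,T\wedge \zeta^n]$. For the jump part, the $\qprob^n$-$\filtg^n$-intensity of $\tau^n$ becomes $(1+\mbf{C}^n)(\chi_n\gamma)(\cdot,X_\cdot)$ on $\{s\leq \tau^n\}$, so that $M^{\qprob^n}$ as displayed in the statement is a $\qprob^n$-local martingale stopped at $T\wedge \zeta^n$.

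The main (mild) technical point is the PRP in the enlarged filtration $\filtg^n$; once (H)-immersion is secured from the independence built into the canonical construction \eqref{E:tau_n_def}, the remaining steps are purely computational and follow verbatim from the proof of the corresponding result in \cite{MR4086602}.
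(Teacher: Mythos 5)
Your proposal is correct and follows the same route the paper takes: in the text, this lemma is imported from \cite{MR4086602} (their Lemma A.3) rather than reproved, and the underlying argument there is exactly the one you outline — establish the predictable representation property in $\filtg^n$ via the immersion/(H)-hypothesis coming from the independence of $U$ in the canonical construction \eqref{E:tau_n_def}, write the strictly positive density martingale $Z^n$ as a Dol\'eans-Dade exponential of integrals against $W$, $B$, $\wh{W}$, $M^n$, and then read off the $\qprob^n$-compensators from Girsanov. The only cosmetic point is that the PRP is most cleanly stated on the fixed interval $[t,T]$ and then stopped at $T\wedge\zeta^n$, but this does not affect the substance of the argument.
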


\begin{lemma}
Let $\qprob^n \sim \prob$ on $\G^n_{T \wedge \zeta^n}$, and $\mbf{A}^n$, $\mbf{B}^n$, $\mbf{C}^n$, $\mbf{D}^n$ be in Lemma \ref{L:local_ELLM_density_rep}. Then $\qprob^n \in \M^n$ if and only if for $\prob \times \text{Leb}_{[t, T]}$ almost all $(\omega, s)$ on the set $\{ s \leq \tau^n \wedge T \wedge \zeta^n \}$, we have (suppressing the $(s,X_s)$ function arguments)
\begin{align}\label{E:risk_neutral_drift_condition}
    0 = 
    & \chi_n(\mu - \gamma \ell) + \chi_n \sigma_W \mbf{A}^n_s + \sqrt{\chi_n} \sigma_B \mbf{B}^n_s + \sqrt{\chi_n(1 - \chi_n)} \sigma_W \mbf{D}^n_s - \chi_n \gamma \mbf{C}^n_s \ell.
\end{align}
\end{lemma}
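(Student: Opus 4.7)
The plan is a direct application of Girsanov's theorem to the density in \eqref{E:local_elmm}, followed by an identification of when the induced $\qprob^n$-drift of $S^n$ vanishes. Using the representation in Lemma \ref{L:local_ELLM_density_rep}, the measure change yields the Brownian motion shifts $dW_s = dW^{\qprob^n}_s + \mbf{A}^n_s ds$, $dB_s = dB^{\qprob^n}_s + \mbf{B}^n_s ds$, and $d\wh{W}_s = d\wh{W}^{\qprob^n}_s + \mbf{D}^n_s ds$, together with the jump Girsanov identity
\begin{equation*}
dH^n_s = dM^{\qprob^n}_s + 1_{s \leq \tau^n}\chi_n\gamma(s,X_s)(1+\mbf{C}^n_s)\, ds,
\end{equation*}
so that the $\qprob^n$ intensity of $\tau^n$ is $(1+\mbf{C}^n)\chi_n\gamma$.

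Next, I would substitute these identifications into the localized dynamics
\begin{equation*}
\frac{dS^n_s}{S^n_{s-}} = 1_{s\leq \tau^n}\bigl(\chi_n\mu\, ds + \chi_n\sigma_W dW_s + \sqrt{\chi_n}\sigma_B dB_s + \sqrt{\chi_n(1-\chi_n)}\sigma_W d\wh{W}_s\bigr) - \ell\, dH^n_s,
\end{equation*}
(suppressing $(s,X_s)$) and gather absolutely continuous terms. On $\{s \leq \tau^n\}$ the $ds$-coefficient becomes
\begin{equation*}
\chi_n\mu + \chi_n\sigma_W \mbf{A}^n_s + \sqrt{\chi_n}\sigma_B \mbf{B}^n_s + \sqrt{\chi_n(1-\chi_n)}\sigma_W \mbf{D}^n_s - \chi_n\gamma(1+\mbf{C}^n_s)\ell,
\end{equation*}
while the remaining stochastic terms combine into a $\qprob^n$ local martingale. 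The elementary rearrangement $-\chi_n\gamma(1+\mbf{C}^n_s)\ell = -\chi_n\gamma\ell - \chi_n\gamma\mbf{C}^n_s\ell$ regroups this drift into precisely the right-hand side of \eqref{E:risk_neutral_drift_condition}.

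Since by definition $\qprob^n \in \M^n$ holds if and only if $S^n$ is a $\qprob^n$ local martingale on $[t, T \wedge \zeta^n]$, this is equivalent to the above drift vanishing $\prob \times \text{Leb}_{[t,T]}$-a.e.\ on $\{s \leq \tau^n \wedge T \wedge \zeta^n\}$, which is exactly \eqref{E:risk_neutral_drift_condition}. The only technical subtlety, and not a real obstacle, is keeping careful track of the mollifier $\chi_n$ inside the jump compensator (recall that $\tau^n$ has $\prob$ intensity $\chi_n\gamma$, not $\gamma$), and of the extra channel $\sqrt{\chi_n(1-\chi_n)}\sigma_W\, d\wh{W}_s$, introduced precisely so the localized CDS asset has a well-defined set of martingale measures despite the mollifier-induced degeneracy, which contributes the $\mbf{D}^n$ term. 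The argument otherwise follows \textit{Lemma A.4} of \cite{MR4086602} verbatim.
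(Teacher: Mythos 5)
Your proof is correct and takes essentially the approach the paper intends (the paper states the lemma as a counterpart to \textit{Lemma A.4} of \cite{MR4086602} without spelling out a proof): you apply the Girsanov shifts from Lemma \ref{L:local_ELLM_density_rep} to the localized dynamics of $S^n$, collect the $ds$ terms, and observe that vanishing of that drift on $\{s\leq\tau^n\wedge T\wedge\zeta^n\}$ is equivalent to $S^n$ being a $\qprob^n$ local martingale, i.e.\ $\qprob^n\in\M^n$. The bookkeeping of the mollifier in the jump compensator ($\prob$-intensity $\chi_n\gamma$, hence $\qprob^n$-intensity $(1+\mbf{C}^n)\chi_n\gamma$) and the extra $\wh{W}$-channel yielding the $\mbf{D}^n$ term are handled correctly.
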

Now, we give the counterpart to \textit{Proposition A.5}. It shows that $G^n$ from Proposition \ref{P:localied_PDE_existence} is the certainty equivalent to \eqref{local_utility_function}. 

\begin{proposition}\label{P:localized_verification}
There is a unique solution $G_n \in H_{2 + \beta', \Omega_n}$ to the PDE in \eqref{E:G_HJB_local} that takes the form $G^n(t, x) = -\frac{1}{\alpha} \log \bigl( -u^n(t, x) \bigr)$, for $u^n$ defined in \eqref{local_utility_function}. The optimal policy functions are given in \eqref{E:gen_opt_strat_1} and \eqref{E:gen_opt_strat_2} except with $G$ and $\nabla G$ replaced by $G^n$ and $\nabla G^n$. The optimal martingale density process for $s \in [t, T \wedge \zeta^n]$ is
\begin{equation}\label{E:optimal_loc_martingale_density_local}
    \wh{Z}^n_s \dfn \wh{Z}^{\wh{\qprob}^n}_s = \e^{-\alpha \Bigl( \We^{\wh{\pi}^n} + 1_{\tau^n > s} G^n(s, X_s) + 1_{\tau^n \leq s} \Psi(\tau^n , X_{\tau^n}) - G^n(t, x) \Bigr)}.
\end{equation}
\end{proposition}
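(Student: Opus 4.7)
The plan is a standard verification argument: combine It\^{o}'s formula with the HJB equation \eqref{E:G_HJB_local} to establish a (local) supermartingale property for a candidate value process, with equality along the conjectured optimizer $\wh{\pi}^n$, and simultaneously identify $\wh{\qprob}^n$ as the optimal dual measure.

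Existence and regularity of $G^n$ follow from Proposition \ref{P:localied_PDE_existence}: setting $G^n(s,x) \dfn v^n(T-s,x)$ produces $G^n \in H_{2+\beta,\Omega_n} \subseteq H_{2+\beta',\Omega_n}$ solving \eqref{E:G_HJB_local}, and uniqueness is inherited. By Remark \ref{R:gradient_bnd}, $\nabla G^n$ is uniformly bounded on $[0,T]\times\ol{\OO}_n$, so the candidate strategy $\wh{\pi}^n = (\wh{\theta}^n,\wh{\delta}^n)$, defined by substituting $(G^n,\nabla G^n)$ into \eqref{E:gen_opt_strat_1} and \eqref{E:gen_opt_strat_2}, is bounded, $S^n$-integrable on $[t,T\wedge\zeta^n]$, and manifestly in $\A^n$.

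The core of the verification is to apply It\^{o}'s formula to
\begin{equation*}
    Y^{\pi^n}_s \dfn -\e^{-\alpha\left(\We^{\pi^n}_s + 1_{\tau^n > s} G^n(s,X_s) + 1_{\tau^n \leq s}\Psi(\tau^n, X_{\tau^n})\right)},\qquad s \in [t,T\wedge\zeta^n],
\end{equation*}
for an arbitrary $S^n$-integrable $\pi^n$. Using \eqref{E:loc_wealth_dynamics}, the dynamics of $X$, and the $\filtg^n$ compensator $\chi_n\gamma$ of $H^n$'s jump at $\tau^n$, one expands the continuous drift and the predictable jump compensator; substituting the PDE \eqref{E:G_HJB_local} then collapses the total drift on $\{s\leq \tau^n\wedge\zeta^n\}$ to $Y^{\pi^n}\alpha\chi_n\bigl[\Hcal(G^n,\nabla G^n) - H(\pi^n, G^n,\nabla G^n)\bigr]\leq 0$, since $Y^{\pi^n}\leq 0$ and $\Hcal = \sup_\pi H$, with equality exactly when $\pi^n=\wh{\pi}^n$. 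The boundary relations $G^n(T,\cdot)=\chi_n\phi$ together with $\chi_n \equiv 0$ on $\partial\OO_n$ imply $Y^{\pi^n}_{T\wedge\zeta^n}$ coincides with the terminal utility in \eqref{local_utility_function}.

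For $\wh{\pi}^n$, uniform boundedness of $\wh{\pi}^n$, $G^n$, $\Psi$ and the coefficients on $[0,T]\times\ol{\OO}_n$ upgrades $Y^{\wh{\pi}^n}$ from a local to a bounded true martingale, so $\E[Y^{\wh{\pi}^n}_{T\wedge\zeta^n}] = -\e^{-\alpha G^n(t,x)}$ and $u^n(t,x)\geq -\e^{-\alpha G^n(t,x)}$. The identical It\^{o} computation, read as the SDE for $\wh{Z}^n$ of \eqref{E:optimal_loc_martingale_density_local}, shows $\wh{Z}^n$ is a bounded positive martingale with $\wh{Z}^n_t = 1$; matching its characteristics against the representation of Lemma \ref{L:local_ELLM_density_rep} and verifying \eqref{E:risk_neutral_drift_condition} places the induced $\wh{\qprob}^n$ in $\wt{\M}^n$. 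For the reverse inequality, concavity of $U$ gives, with $R$ the random endowment,
\begin{equation*}
    -\e^{-\alpha(\We^{\pi^n}_{T\wedge\zeta^n}+R)} \leq Y^{\wh{\pi}^n}_{T\wedge\zeta^n} - \alpha Y^{\wh{\pi}^n}_{T\wedge\zeta^n}\bigl(\We^{\pi^n}_{T\wedge\zeta^n} - \We^{\wh{\pi}^n}_{T\wedge\zeta^n}\bigr);
\end{equation*}
taking expectations and switching to $\wh{\qprob}^n$ via the bounded density $\wh{Z}^n$, admissibility of $\pi^n$ against $\wh{\qprob}^n\in\wt{\M}^n$ makes $\We^{\pi^n}$ a $\wh{\qprob}^n$-supermartingale while $\We^{\wh{\pi}^n}$ is a $\wh{\qprob}^n$-martingale (by boundedness), so the cross term is non-positive and $\E[Y^{\pi^n}_{T\wedge\zeta^n}]\leq -\e^{-\alpha G^n(t,x)}$. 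Combining both directions yields $u^n = -\e^{-\alpha G^n}$, identifies $\wh{\pi}^n$ as optimal and $\wh{\qprob}^n$ as the dual optimizer. The main obstacle is this last step, namely upgrading the local supermartingale inequality to hold in expectation uniformly over $\pi^n\in\A^n$; it is precisely the admissibility class \eqref{E:admiss} in tandem with the boundedness of $\wh{Z}^n$ that allows the concavity argument to close.
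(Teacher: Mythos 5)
Your overall architecture — It\^{o} plus the HJB equation \eqref{E:G_HJB_local} to get a supermartingale with equality at $\wh{\pi}^n$, then concavity against the dual measure $\wh{\qprob}^n$ for the reverse inequality — matches the paper's proof, which delegates the generic verification to \textit{Proposition A.5} of \cite{MR4086602} and supplies only what changes because of the extra Brownian motion $\wh{W}$.

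There is, however, a genuine error in precisely the step you flag as the main obstacle: the assertion that $Y^{\wh{\pi}^n}$ and $\wh{Z}^n$ are \emph{bounded} is false. Bounded integrands do not produce a bounded stochastic integral — $\int_t^{\cdot} c\,\dif W$ with constant $c$ is already unbounded on $[t,T]$ — so $\We^{\wh{\pi}^n}$ is unbounded even though $\wh{\pi}^n$, $\Sigma$ and $\ell$ are bounded on $[t,T]\times\ol{\OO}_n$, and since $\wh{Z}^n_s = \e^{-\alpha\We^{\wh{\pi}^n}_s + (\text{bounded terms})}$ this unboundedness passes to $\wh{Z}^n$. The facts that are actually true, and that the paper uses, are: (i) the stochastic exponential $\wh{Z}^n$ has \emph{bounded predictable coefficients} with its jump coefficient bounded away from $-1$, which makes it a strictly positive \emph{true} martingale by \cite[Theorem 9]{MR2574236} — true, not bounded; and (ii) the quadratic variation $\left[\We^{\wh{\pi}^n}\right]_{T\wedge\zeta^n}$ is almost surely bounded by a constant depending on $n$, because $\wh{\pi}^n$, $\Sigma$ and $\ell$ are bounded on $[t,T]\times\ol{\OO}_n$ and the horizon is bounded; this upgrades $\We^{\wh{\pi}^n}$ from a local to a true martingale under every $\qprob^n\in\wt{\M}^n$, which simultaneously gives admissibility and the martingale property under $\wh{\qprob}^n$. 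Once you replace ``bounded density'' by ``true martingale via bounded coefficients'' and ``by boundedness'' by ``by bounded quadratic variation,'' your change-of-measure and concavity step closes exactly as written. You should also state explicitly that $\wh{\qprob}^n\in\wt{\M}^n$, as opposed to merely $\M^n$, requires $\relent{\wh{\qprob}^n}{\prob}<\infty$; this again follows from the bounded coefficients but deserves a line (the paper refers to \textit{Lemma C.7} of \cite{MR4086602} for it).
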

\begin{proof}[Proof of Proposition \ref{P:localized_verification}]
Due to the existence of the Brownian motion $\wh{W}$ in the localized optimal investment problem, we need to change the proof of \textit{Proposition A.5} slightly. Denoting $\wh{\We}^n \dfn \We^{\wh{\pi}_n}$, we need to show (1) $\wh{\We}^n$ is a $\qprob^n$ martingale for all $\qprob^n \in \wt{\M}^n$; (2) $\wh{Z}^n$ given by \eqref{E:optimal_loc_martingale_density_local} is a strictly positive martingale and $\wh{\qprob}^n$ defined by $\wh{Z}^n_{T \wedge \zeta^n}$ is in $\wt{\M}^n$. We begin with proof of (1). For $\qprob^n \in \wt{\M}^n$, by \eqref{E:risk_neutral_drift_condition} we know
\begin{align*}
    d \wh{\We}^n_s = 
    & \mathbbm{1}_{s \leq \tau^n \wedge \zeta^n} \Bigl( (\chi_n \sigma_W)(s, X_s)' d W^{\qprob^n}_s + ( \sqrt{\chi_n} \sigma_B ) (s, X_s) d B^{\qprob^n}_s \\
    & \qquad + ( \sqrt{\chi_n(1 - \chi_n)} \sigma_W) (s, X_s) d \wh{W}^{\qprob^n}_s \Bigr) - 1_{s \leq \zeta^n } \left( \pi^n_s \right)' l(s, X_s) d M^{\qprob^n}_s.
\end{align*}
Hence,
\begin{equation*}
    \left[ \wh{\We}^{\pi^n} \right]_{T \wedge \zeta^n} = \int_{t}^{T \wedge \zeta^n \wedge \tau^n} \bigl( \chi_n (\wh{\pi}^n)' \Sigma \wh{\pi}^n \bigr)(s,X_s) ds + 1_{\tau^n \leq T \wedge \zeta^n} \bigl( l' \wh{\pi}^n \bigr)^2(\tau^n,X_{\tau^n}).
\end{equation*}
Remark \ref{R:gradient_bnd} shows $\abs{\wh{\pi}^n}$ is bounded on $[t, T] \times \ol{\OO}_n$. As $\Sigma$ and $l$ are bounded on $[t, T] \times \ol{\OO}_n$ and $\chi_n$ is bounded on $\ol{\OO}_n$, we conclude that  
$\left[ \wh{\We}^{\pi^n} \right]_{T \wedge \zeta^n}$ is $\qprob^n$ almost surely bounded by a constant depending on $n$, which implies that $\wh{\We}^{\pi^n}$ is a $\qprob^n$ martingale.

Next, we move to prove (2). Using the arguments in the proof of \textit{Proposition A.5}, we know that $\wh{Z}^n$ has dynamics on $[t, T \wedge \zeta^n]$ of 
\begin{equation*}
    \frac{d \wh{Z}^n_s}{\wh{Z}^n_{s -}} = \mathbbm{1}_{s \leq \tau^n} \bigl( (\mbf{A}^n_s)' d W_s + (\mbf{B}^n_s)' d B_s + (\mbf{D}^n_s)' d \wh{W}_s \bigr) + \mbf{C}^n_s d M^n_s,
\end{equation*}
where $\mbf{A}^n_s = -\alpha \left( \chi_n \sigma_W' \wh{\pi}^n + a' \nabla G^n \right)_s$, $\mbf{B}^n_s = -\alpha \left( \sqrt{\chi_n} \sigma_B' \wh{\pi}^n \right)_s$, $\mbf{D}^n_s = - \alpha \left( \sqrt{\chi_n(1 - \chi_n)} \sigma_W' \wh{\pi}^n \right)_s$, and $\mbf{C}^n_s = \left( \e^{\alpha G^n - \alpha \psi + \alpha (\wh{\pi}^n)' l} - 1 \right)_s$. Remark \ref{R:gradient_bnd} shows $\abs{\mbf{A}^n_s}$, $\abs{\mbf{B}^n_s}$, $\abs{\mbf{C}^n_s}$, $\abs{\mbf{D}^n_s}$ are all bounded and $\filt^W$ predictable. Also, there is also a $\eps_n$ such that $\mbf{D}^n_s > - (1 - \eps_n)$. By \cite[Theorem 9]{MR2574236}, we know that $\wh{Z}^n$ is a strictly positive martingale. Then we can apply the same argument in the proof of \textit{Proposition A.5 and Lemma C.7}\footnote{The proof therein assumes $\Sigma$ does not degenerate, but calculation shows the result still holds even wtih degeneracy. Also, there is a typo in the second to last line of the proof of \textit{Lemma C.7}, as $+\gamma\ell$ should be $-\gamma\ell$.} to conclude $\wh{\qprob}^n$ defined by $\wh{Z}^n_{T \wedge \zeta^n}$ is in $\wt{\M}^n$.

\end{proof}


\subsection{Unwinding the localization: analytical results}
Now, we need to unwind the existence result of local PDE to the global one. 

\begin{proposition}\label{P:analytical_unwind_lower_bnd}
Let $G^n$ from Proposition \ref{P:localied_PDE_existence} and \ref{P:localized_verification} and $\underline{\phi}$ from Proposition \ref{A:phi_psi_alt}. Then $G^n(s, y) \geq \underline{\phi}$ on $[0, T] \times \ol{\OO}_n$.
\end{proposition}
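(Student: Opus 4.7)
The plan is to exhibit $G^n(t,x) \geq \underline{\phi}$ through the probabilistic representation $G^n(t,x) = -(1/\alpha)\log(-u^n(t,x))$ established in Proposition \ref{P:localized_verification}, by lower bounding $u^n$ using the trivial admissible strategy $\pi^n \equiv 0$. Since $\We^{\pi^n} \equiv 0$ is trivially a supermartingale under every $\qprob^n \in \wt{\M}^n$, we have $\pi^n \equiv 0 \in \A^n$.

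The key observation is that the random endowment in the localized problem is pointwise bounded below by $\underline{\phi}$. Recall $\underline{\phi} = 0 \wedge \inf_{x \in \OO} \phi(x) \leq 0$. For the pre-default payoff, $\chi_n(x)\phi(x) \geq \underline{\phi}$ for every $x \in \ol{\OO}_n$: indeed, if $\phi(x) \geq 0$ then $\chi_n(x)\phi(x) \geq 0 \geq \underline{\phi}$, while if $\phi(x) < 0$ then $\chi_n(x)\phi(x) \geq \phi(x) \geq \underline{\phi}$ (as $\chi_n \in [0,1]$). For the default payoff, $\psi \geq 0 \geq \underline{\phi}$ by Assumption \ref{A:phi_psi_alt}.

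Combining these bounds in \eqref{local_utility_function} with the strategy $\pi^n \equiv 0$ gives
\begin{equation*}
    u^n(t,x) \geq \expvs{-\xpn{-\alpha\bra{(\chi_n\phi)(X_{T\wedge\zeta^n})1_{\tau^n > T\wedge\zeta^n} + \psi(\tau^n,X_{\tau^n})1_{\tau^n\leq T\wedge\zeta^n}}}} \geq -\xpn{-\alpha\underline{\phi}}.
\end{equation*}
Taking $-(1/\alpha)\log(-\cdot)$ and noting this map is increasing on $(-\infty,0)$ yields $G^n(t,x) \geq \underline{\phi}$. There is no real obstacle here beyond verifying the endowment bound case by case and noting that the trivial strategy is admissible; the argument works uniformly in $n$ and $(t,x) \in [0,T]\times\ol{\OO}_n$.
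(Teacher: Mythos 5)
Your proof is correct, and it takes a genuinely different route from the paper's. The paper proves the lower bound via a PDE minimum--principle argument on the localized HJB equation: it first invokes \textit{Proposition A.6} of \cite{MR4086602} for the incomplete case with $|\sigma_r|>0$, and then, for the complete case and the case $|\sigma_r|\equiv 0$, examines an interior minimizer $(s_0,y_0)$ of $G^n$, uses $\nabla G^n(s_0,y_0)=0$, the sign of $G^n_s$, and ellipticity of $A$ to show the zeroth--order term of $\Hcal_c$ forces $G^n(s_0,y_0)\geq\psi\geq 0\geq\underline{\phi}$, and then concludes via the boundary data. In contrast, you bypass the PDE entirely and use the stochastic-control characterization $G^n = -\frac{1}{\alpha}\log(-u^n)$ from Proposition \ref{P:localized_verification}: the zero strategy is admissible, the localized endowment is pointwise bounded below by $\underline{\phi}$ (with the correct case-by-case check that $\chi_n\phi\geq\underline{\phi}$ because $\chi_n\in[0,1]$ and $\underline{\phi}\leq 0$, and that $\psi\geq 0\geq\underline{\phi}$), so $u^n\geq -\e^{-\alpha\underline{\phi}}$ and monotonicity of $-\frac{1}{\alpha}\log(-\,\cdot\,)$ finishes it. Your argument is shorter, handles all cases (complete, $|\sigma_r|\equiv 0$, $|\sigma_r|>0$) in one stroke, and relies only on results already established before Proposition \ref{P:analytical_unwind_lower_bnd} is invoked, so there is no circularity; what the paper's PDE proof buys is independence from the probabilistic verification (it would remain valid even if one only had the analytic existence in Proposition \ref{P:localied_PDE_existence}), at the cost of case-splitting and the external reference.
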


\begin{proof}[Proof of Proposition \ref{P:analytical_unwind_lower_bnd}]
Under Assumption \ref{A:incomplete_mkt} with $\abs{\sigma_r} > 0$, we can apply the same argument in \textit{Proposition A.6} to prove this result. Therefore, we only need to show $G^n$ is bounded from below by $\underline{\phi}$ on $[0, T] \times \ol{\OO}_n$ under Assumption \eqref{A:complete_mkt} or under Assumption \ref{A:incomplete_mkt} with $\abs{\sigma_r} \equiv 0$. Note that in both cases, we have
\begin{equation*}
    \check{a}^n(g, 0) = \chi_n \left( \frac{\gamma}{\alpha} + \Hcal_c(g, 0) \right).
\end{equation*}
Assume $G^n$ has the minimum in $[0, T) \times \OO_n$ at $(s_0, y_0)$. If $s_0 > 0$, then $G^n_s(s_0, y_0)= 0$; otherwise, $G^n_s(s_0, y_0) \geq 0$. Also, $\nabla G^n(s_0, y_0) = 0$. By the ellipticity of $A$ in $\OO_n$, we know that at $(s_0, y_0)$,
\begin{equation*}
    \begin{split}
    0 \geq & \chi_n \left( \frac{\gamma}{\alpha} + \Hcal_c(G^n, 0)\right)\\
    \geq & \frac{\chi_n}{2\alpha}\left(\mu_e - \wt{\gamma}\ell_e \right)'\Sigma_e^{-1}\left(\mu_e - \wt{\gamma}\ell_e \right) + \frac{\wt{\gamma}}{\alpha}\left(\frac{\gamma}{\wt{\gamma}} - 1 - \log\left(\frac{\gamma}{\wt{\gamma}}\right)\right)- \wt{\gamma}(G^n - \psi) \\
    \geq & - \wt{\gamma}(G^n - \psi).
    \end{split}
\end{equation*}
Therefore, we have $G^n(s_0, y_0) \geq \psi \geq 0 \geq \underline{\phi}$ by Assumption \ref{A:phi_psi_alt}. Moreover, as $G^n(T, \cdot) = \chi_n \phi \geq 0 \geq \underline{\phi}$ and $G^n(t, \partial \OO_n) = 0 \geq \underline{\phi}$ for every $t \in [0, T]$, we can conclude that $G^n(s, y) \geq \underline{\phi}$ on $[0, T] \times \ol{\OO}_n$.
\end{proof}


\begin{proposition}\label{P:analytical_unwind}
Let $G^n$ from Proposition \ref{P:localied_PDE_existence} and \ref{P:localized_verification}. Assume for each $k \in \mathbb{N}$ that
\begin{equation}\label{E:Gn_bnd_assumption}
    \sup_{n \geq k + 1} \sup_{0 \leq s \leq T, y \in \ol{\OO}_k} G^n(s, y) = C(k) < \infty.
\end{equation}
Then there exists a solution $G$ to \eqref{E:PDE_2}. In particular, there is a subsequence (still labeled $n$) such that $G^n$ converges to $G$ in $H_{2 + \beta, (0, T) \times \OO, loc}$.
\end{proposition}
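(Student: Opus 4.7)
\textbf{Proof proposal for Proposition \ref{P:analytical_unwind}.} The plan is a standard compactness and diagonalization argument, where the localized PDE solutions $G^n$ are shown to be precompact in $H_{2+\beta,\text{loc}}$, and any limit point is shown to satisfy \eqref{E:PDE_2}. The key observation is that once $n \geq k + 1$ we have $\chi_n \equiv 1$ on $\OO_k$, so that on each compact strip $[0, T] \times \ol{\OO}_k$ the function $G^n$ solves the \emph{unlocalized} PDE \eqref{E:G_HJB_proof} with no mollifier contamination. Only the terminal/boundary data carry the mollifier $\chi_n$, but these do not enter the interior estimates.

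First I would combine the lower bound $G^n \geq \ul{\phi}$ from Proposition \ref{P:analytical_unwind_lower_bnd} with the hypothesized upper bound \eqref{E:Gn_bnd_assumption} to get that $\cbra{G^n}_{n \geq k + 1}$ is uniformly bounded on $[0, T] \times \ol{\OO}_k$ for each $k$. With a uniform $L^{\infty}$ bound in hand, the next step is to obtain a uniform interior gradient bound. Here the PDE has the schematic form
\begin{equation*}
    G^n_t + L G^n - \frac{\alpha}{2} \nabla (G^n)' A \nabla (G^n) + \text{(bounded semilinear terms in } G^n, \nabla G^n\text{)} = 0,
\end{equation*}
with quadratic growth in $\nabla G^n$. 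For this one invokes the interior gradient estimates of Lieberman \cite{MR1465184} (e.g.\ Theorem 11.3 or Theorem 12.2 therein), exactly as in the proof of the analogous step in \cite{MR4086602}; the ellipticity of $A$ on $\ol{\OO}_k$ together with the $L^{\infty}$ bound on $G^n$ and the structural growth bounds on $\Hcal$ (cf.\ Lemmas \ref{L:g_checka_bnd}, \ref{L:p_checka_bnd}) give $\sup_{n \geq k + 2} \norm{\nabla G^n}_{L^{\infty}([0, T] \times \ol{\OO}_k)} \leq C(k)$. With $G^n$ and $\nabla G^n$ both locally bounded, the nonlinear coefficient $-\frac{\alpha}{2} \nabla (G^n)' A \nabla (G^n) + \chi_n(\gamma/\alpha + \Hcal(G^n, \nabla G^n))$ is H\"older continuous on each $[0, T] \times \ol{\OO}_k$, uniformly in $n$. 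Classical interior Schauder estimates for linear parabolic equations (\cite[Theorem IV.10.1]{MR1465184}) then upgrade this to a uniform $H_{2 + \beta}$ bound for $\cbra{G^n}$ on each compactly contained subdomain.

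Having these uniform $H_{2 + \beta, \text{loc}}$ bounds, Arzel\`a--Ascoli produces, for each $k$, a subsequence converging in $H_{2 + \beta', [0, T] \times \ol{\OO}_k}$ for every $\beta' < \beta$. A standard diagonal extraction over $k$ yields a single subsequence (still indexed $n$) and a limit $G \in H_{2 + \beta, (0, T) \times \OO, \text{loc}}$ with $G^n \to G$, $\nabla G^n \to \nabla G$, $D^2 G^n \to D^2 G$, and $G^n_t \to G_t$ locally uniformly. Passing to the limit in \eqref{E:G_HJB_local} is then immediate on any fixed $(s, y) \in (0, T) \times \OO_k$: for $n \geq k + 1$ we have $\chi_n(y) = 1$, so the mollifier drops out and the limit equation is precisely \eqref{E:G_HJB_proof}, i.e.\ \eqref{E:PDE_2}. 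The terminal condition $G(T, \cdot) = \phi$ follows on each $\OO_k$ from $\chi_n \phi \to \phi$ locally in $C^{2, \beta}$ (since $\chi_n \equiv 1$ on $\OO_{n - 1}$) together with the local convergence of $G^n(T, \cdot)$.

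The main obstacle is the interior gradient bound in the second step: the PDE has quadratic growth in $\nabla G$ coming both from the explicit $-\frac{\alpha}{2} \nabla G' A \nabla G$ term and, in the incomplete-market case, from the residual $R_\Hcal(\sigma_r, G, \nabla G)$ inside $\Hcal_i$. The latter involves the product-log function $\textrm{PL}$ and a-priori may grow faster than quadratically in $\nabla G$. This is exactly the reason Assumption \ref{A:incomplete_mkt} dichotomizes $\abs{\sigma_r}$ to be either identically zero or strictly positive throughout $[0, T] \times \OO$: in the former case $R_\Hcal$ vanishes and one is in the setting of \cite{MR4086602}, while in the latter Lemma \ref{L:plog_quad} (applied to $\plog{\cdot}^2 + 2\plog{\cdot}$) supplies the quadratic-in-$\nabla G$ majorization of $R_\Hcal$ needed to invoke Lieberman's gradient estimate; mixed degeneracy of $\sigma_r$ would make the constant $B_\infty$ in \cite[Theorem 11.3]{MR1465184} infinite and break this step. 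Modulo this structural point, the remainder of the argument is a routine mollification-and-pass-to-the-limit scheme.
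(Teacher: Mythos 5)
Your high-level plan matches the paper's: uniform $L^\infty$ bounds from Proposition \ref{P:analytical_unwind_lower_bnd} together with the hypothesized upper bound \eqref{E:Gn_bnd_assumption}, then a uniform interior gradient estimate via \cite[Theorem 11.3(b)]{MR1465184}, followed by interior Schauder estimates, Arzel\`a--Ascoli, diagonal extraction, and passage to the limit in the unlocalized PDE (correctly exploiting that $\chi_n \equiv 1$ on $\OO_k$ for $n \geq k+1$, so the mollifier drops out of the interior equation). You also correctly flag the gradient bound as the crux and correctly explain the role of the $\sigma_r$ dichotomy: a partially degenerate $\sigma_r$ would make the relevant Lieberman constant infinite.

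The gap is in how you justify the gradient bound itself. You cite Lemmas \ref{L:g_checka_bnd} and \ref{L:p_checka_bnd} as the ``structural growth bounds'' feeding into Theorem 11.3(b); those are the hypotheses for the \emph{existence} theorem (\cite[Theorem 12.16]{MR1465184}) in Proposition \ref{P:localied_PDE_existence} and control only the size of $\ca^n(g,p)$. The hypotheses of Theorem 11.3(b) are different in kind: one must check the four Bernstein-type quantities $A_k^\infty$, $B_k^\infty$, $C_k^\infty$, $D_k^\infty$ of \eqref{E:ABCD_def}, \eqref{E:ABCD_k_infty}, which involve the operators $\delta(p)$ and $\ol{\delta}(p)$ acting on $\ca$ --- i.e.\ combinations of $\partial_g \ca$, $\nabla_x\ca$ and $\nabla_p\ca$, not just $\ca$ itself. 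Bare quadratic growth of $\ca$ in $p$ does not imply these, particularly in the incomplete case where the residual $R_\Hcal$ is built from $\textrm{PL}$ composed with exponentials in $p$ and one must control $\nabla_p R_\Hcal$. The paper carries out this verification explicitly: $A_k^\infty=1$ is immediate since $A$ is $p$-independent, while $B_k^\infty<\infty$, $C_k^\infty=0$, $D_k^\infty<\infty$ are established in Lemma \ref{L:BCD_k_infty_finite_1} (complete case, $\Hcal_c$ linear in $(g,p)$) and Lemma \ref{L:BD_k_infty_finite_2} (incomplete case, invoking Lemma \ref{L:L_inv_bnd} on the block inverse of $\Sigma$ and then the analogous lemma of \cite{MR4086602} once $\abs{\sigma_r}$ is bounded away from zero on $\ol{\OO}_k$). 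Without checking these quantities the appeal to Theorem 11.3(b) is unsupported; the rest of your scheme (Schauder, compactness, passage to the limit) is fine and is indeed what the paper invokes as Steps 2--4 of the corresponding proof in \cite{MR4086602}.
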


\begin{proof}[Proof of Proposition \ref{P:analytical_unwind}]
The proof of \textit{Proposition A.7} is divided into $4$ steps. Here, we need only verify the quantities $A_k^{\infty}$, $B_k^{\infty}$, $C_k^{\infty}$, and $D_k^{\infty}$ defined in \textit{Step $1$, equations (A.16), (A.17)} are all finite with $A_k^{\infty} = 1$ and $C_k^{\infty} = 0$. Then we can apply the same arguments in the remaining part of \textit{Step $1$}, with \cite[Theorem 11.3(b)]{MR1465184}, Proposition \ref{P:analytical_unwind_lower_bnd}, and Equation \eqref{E:Gn_bnd_assumption} to show
\begin{equation*}
    \sup_{n \geq k + 2} \sup_{(s, y) \in [0, T] \times \ol{\OO}_k} \abs{\nabla G^n(s, y)} < \infty.
\end{equation*} 
Using this bound, we can apply the same arguments in \textit{Steps 2-4} to finish the proof. To define $A_k^{\infty}$, $B_k^{\infty}$, $C_k^{\infty}$, and $D_k^{\infty}$ we first define the  the Bernstein function
\begin{equation*}
\Ecal(s, y, p) \dfn \frac{1}{2} p' A(s, y) p;\qquad (s,y,p)\in [0,T]\times \OO \times \reals^d.
\end{equation*}
Under our assumptions, for each $k$ there exists $0 < \lambda_k < \Lambda_k$ such that $\lambda_k p' p \leq \Ecal(s, y, p) \leq \Lambda_k p' p$ on $[0, T] \times \ol{\OO}_k \times \reals^d$. Next, we define the operators $\delta(p)$, and $\ol{\delta}(p)$, which act on functions $f$ of $(s,y,g,p)$ by 
\begin{equation*}
    \delta(p)[f] \dfn  f_g + \frac{1}{p' p} p' \nabla_x f; \qquad \ol{\delta}(p)[f] \dfn p' \nabla_p f.
\end{equation*}
Lastly, we note that for $n \geq k + 1$, $\chi_n(x) = 1$ on $\ol{\OO}_k$, and hence in \eqref{E:checka_def} we write  $\ca(g, p)$ for $\ca^n(g, p)$ as it no longer depends upon $n$. With all this preparation, \textit{equations (A.16), (A.17)} take the form
\begin{equation}\label{E:ABCD_def}
    \begin{split}
    A_k& \dfn  \frac{1}{\Ecal} \Bigl( \frac{p' p}{8 \lambda_k} \sum_{i, j = 1}^{d} \bigl( \ol{\delta}(p)[A^{i j}] \bigr)^2 + \bigl( \ol{\delta}(p) - 1 \bigr) [\Ecal] \Bigr)\\
    B_k& \dfn \frac{1}{\Ecal} \Bigl( \delta(p) [\Ecal] +  \bigl( \ol{\delta}(p) - 1 \bigr)[\check{a}] \Bigr),\\
    C_k& \dfn \frac{1}{\Ecal} \Bigl( \frac{p' p}{8 \lambda_k} \sum_{i, j = 1}^{d} \bigl( \delta(p) [A^{i j}] \bigr)^2 + \delta(p) [\check{a}] \Bigr),\\
    D_k& \dfn \frac{1}{\Ecal} \Bigl( \Lambda_k p' p + \abs{p} \bigl( \abs{\nabla_p \Ecal} + \abs{\nabla_p \check{a}} \bigr) \Bigr).
    \end{split}
\end{equation}
Next, for any constant $C(k) > 0$, and $Y \in \cbra{A,B,C,D}$ we define
\begin{equation}\label{E:ABCD_k_infty}
    \begin{split}
        Y^\infty_k & \dfn \limsup_{\abs{p} \uparrow \infty} \sup_{\substack{(s, y) \in [0, T] \times \ol{\OO}_k, \\ g \in [-C(k), C(k)]}} \abs{Y_k(s, y, g, p)}.
    \end{split}
\end{equation}
As mentioned above, our goal is to verify 
\begin{equation*}
    A_k^{\infty} = 1, \quad B^{\infty}_k < \infty, \quad C^{\infty}_k = 0, \quad D^{\infty}_k < \infty.
\end{equation*}
As $A=A(s,y)$ does not depend on $p$ and $(\ol{\delta}(p)-1)[\Ecal] = \Ecal$ we obtain $A_k \equiv  1$, and hence $A^{\infty}_k = 1$. As for the other quantities, we show $B^{\infty}_k < \infty$, $C^{\infty}_k = 0$, and $D^{\infty}_k < \infty$ in Lemmas \ref{L:BCD_k_infty_finite_1} and \ref{L:BD_k_infty_finite_2} below. As the rest of the proof holds by repeating the steps in the proof of \textit{Proposition A.7}, we obtain the result.
\end{proof}


\subsection{Unwinding the localization: probabilistic results}\text{}

We now unwind the localization from a probabilistic perspective, culminating in Proposition \ref{P:unwind_prob}, which will finish the proof of Theorem \ref{T:main_result}.  A key result from \cite{MR4086602} which we use throughout, and which holds in our setting with no changes to the proof\footnote{Presently we have the independent Brownian motion $\wh{W}$ which was absent in \cite{MR4086602}. However, it is clear from \eqref{E:loc_wealth_dynamics} and the proof of \textit{Proposition A.9} that the integral with respect to $\wh{W}$ vanishes in probability as $n\to\infty$.}, is \textit{Proposition A.9}, which we repeat here for ease of reference. 

\begin{proposition}\label{P:unwind_prob_0}
Assume \eqref{E:Gn_bnd_assumption} and let $G$ be the solution to \eqref{E:G_HJB} obtained in Proposition \ref{P:analytical_unwind}. Define $\wh{\pi}$ using \eqref{E:hat_pi} (more precisely \eqref{E:gen_opt_strat_1}, \eqref{E:gen_opt_strat_2}) and $\wh{Z}$ as in \eqref{E:opt_mm_dens}. Then
\begin{enumerate}[(i)]
\item $\wh{Z}$ defines a measure $\wh{\qprob} \in \tM$.
\item $\We^{\wh{\pi}}$ is a $\wh{\qprob}$ sub-martingale.
\item For $u$ defined in \eqref{E:vf_tx} we have $G(t,x) \geq -(1/\alpha)\log(-u(t,x,0))$.
\end{enumerate}
\end{proposition}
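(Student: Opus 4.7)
The plan is to transfer the verification result in Proposition \ref{P:localized_verification} from the localized problem to the global one via a passage-to-the-limit along the sequence $n\to \infty$, using the local convergence $G^n\to G$ in $H_{2+\beta,(0,T)\times \OO,loc}$ established in Proposition \ref{P:analytical_unwind}. That convergence implies uniform convergence of $\nabla G^n$ on compact subsets of $(0,T)\times \OO$, hence, via the formulas \eqref{E:gen_opt_strat_1}--\eqref{E:gen_opt_strat_2}, pointwise convergence $\wh{\pi}^n\to \wh{\pi}$. The crucial structural observation is that $\chi_n\equiv 1$ on $\ol{\OO}_{n-1}$, so on the event $\{\zeta^{n-1}>T\}$ the localized density $\wh{Z}^n$ from \eqref{E:optimal_loc_martingale_density_local} reduces to the global $\wh{Z}$ from \eqref{E:opt_mm_dens}: the auxiliary $\wh{W}$-integral has diffusion coefficient proportional to $\sqrt{\chi_n(1-\chi_n)}$ and therefore vanishes, and $\tau^n,H^n$ coincide with $\tau,H$.

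For part (i), Proposition \ref{P:localized_verification} provides $\wh{\qprob}^n\in \wt{\M}^n$, so $\E[\wh{Z}^n_{T\wedge \zeta^n}]=1$, and (using the duality identity for the localized problem together with Proposition \ref{P:analytical_unwind_lower_bnd} and \eqref{E:Gn_bnd_assumption}) $H(\wh{\qprob}^n|\prob)$ is uniformly bounded in $n$ on each $\ol{\OO}_k$. Non-explosion of $X$ under $\prob$ gives $\zeta^n\uparrow \infty$ a.s., and the identification on $\{\zeta^{n-1}>T\}$ yields $\wh{Z}^n_{T\wedge \zeta^n}\to \wh{Z}_T$ a.s. Fatou produces $\E[\wh{Z}_T]\leq 1$; the reverse direction and the local martingale property of $S$ under $\wh{\qprob}$ follow from uniform integrability via the de la Vallée--Poussin criterion applied to $x\log x$, combined with the drift relation \eqref{E:risk_neutral_drift_condition} in the limit (the $\sqrt{\chi_n(1-\chi_n)}$ contribution disappears, leaving the correct market price of risk system for the global $S$).

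For part (ii), an Ito computation applied to $G(s,X_s)$ with the PDE \eqref{E:G_HJB_proof} and the explicit form of $\wh{\pi},\wh{Z}$ shows $\wh{Z}\We^{\wh{\pi}}$ is a $\prob$ local martingale (localized along $T\wedge \zeta^n\wedge \tau$ and then globalized via $\zeta^n\uparrow\infty$), hence $\We^{\wh{\pi}}$ is a $\wh{\qprob}$ local martingale; boundedness from below of the random endowment (Assumption \ref{A:phi_psi_alt} and Proposition \ref{P:analytical_unwind_lower_bnd}) then upgrades this to the sub-martingale property via Fatou. For part (iii), the same Ito computation, evaluated at $T$, yields the identity
\begin{equation*}
-\e^{-\alpha G(t,x)} = \E\left[-\e^{-\alpha\left(\We^{\wh{\pi}}_T + \phi(X_T)1_{\tau>T} + \psi(\tau,X_\tau)1_{\tau\leq T}\right)}\right];
\end{equation*}
after checking $\wh{\pi}\in \A$ (the $\qprob$-supermartingale property for $\qprob\in\wt{\M}$ follows from the standard duality inequality $\E^{\qprob}[\We^{\wh{\pi}}_T]\leq \E^{\qprob}[\We^{\wh{\pi}}_{T\wedge \zeta^n}]$ passed to the limit in $n$), the right-hand side is bounded above by $u(t,x,0)$, which gives $G(t,x)\geq -(1/\alpha)\log(-u(t,x,0))$.

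The main obstacle will be part (i): propagating finite relative entropy from the localized measures to $\wh{\qprob}$ while simultaneously identifying $\wh{Z}$ as the density of a genuine element of $\wt{\M}$. The two delicate ingredients are (a) the jump at $\tau$ with intensity $\gamma$ versus the localized intensity $\chi_n\gamma$, which must be reconciled on the exploding set $\{\zeta^n\leq T\}$, and (b) the absence of any uniform deterministic bound on $\wh{Z}_T$, forcing us to rely on uniform integrability (rather than bounded convergence) and on the careful localization by the reducing sequence $T\wedge \zeta^n\wedge \tau$.
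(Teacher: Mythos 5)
Your overall framework — passing to the limit along the localization, with the key observation that the $\wh{W}$-integral has diffusion coefficient $\sqrt{\chi_n(1-\chi_n)}$ and hence vanishes where $\chi_n\equiv 1$ — matches the paper's approach: the paper defers to the proof of the analogous result in \cite{MR4086602} precisely with the remark that the $\wh{W}$-contribution disappears as $n\to\infty$. Your treatment of parts (i) and (ii) is broadly sound, though in part (i) the claim that $\wh{Z}^n$ ``reduces to'' $\wh{Z}$ on $\{\zeta^{n-1}>T\}$ is an overstatement: even where $\chi_n\equiv 1$ and $\tau^n=\tau$ one still has $G^n\neq G$ and $\wh{\pi}^n\neq\wh{\pi}$ (these only converge); what you actually have is identical algebraic structure plus locally uniform convergence of $G^n$ and $\nabla G^n$, which together with $\zeta^n\uparrow\infty$ yields $\wh{Z}^n_{T\wedge\zeta^n}\to\wh{Z}_T$ a.s., which is what you then feed into Fatou and de la Vall\'ee--Poussin.

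Part (iii), however, contains a genuine error: the argument proves the \emph{wrong inequality}. Your identity $-\e^{-\alpha G(t,x)}=\E[-\e^{-\alpha(\We^{\wh{\pi}}_T+\phi(X_T)1_{\tau>T}+\psi(\tau,X_\tau)1_{\tau\leq T})}]$, combined with $\wh{\pi}\in\A$, gives $-\e^{-\alpha G(t,x)}\leq u(t,x,0)$; since $y\mapsto -(1/\alpha)\log(-y)$ is increasing on $(-\infty,0)$, this yields $G(t,x)\leq -(1/\alpha)\log(-u(t,x,0))$, the reverse of the assertion. That is precisely the upper bound on $G$ established later in the proof of Proposition~\ref{P:unwind_prob} via the dual representation; the present proposition is supposed to supply the lower bound. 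The lower bound comes from the dual side, and uses exactly parts (i) and (ii) rather than $\wh{\pi}\in\A$. Writing $E_T\dfn \phi(X_T)1_{\tau>T}+\psi(\tau,X_\tau)1_{\tau\leq T}$, the explicit form of $\wh{Z}$ in \eqref{E:opt_mm_dens} together with the terminal condition $G(T,\cdot)=\phi$ gives $\log\wh{Z}_T=\alpha G(t,x)-\alpha\We^{\wh{\pi}}_T-\alpha E_T$, so once $\wh{\qprob}\in\tM$ (part (i)),
\begin{equation*}
\relent{\wh{\qprob}}{\prob}+\alpha\,\E^{\wh{\qprob}}\!\left[E_T\right]
=\alpha G(t,x)-\alpha\,\E^{\wh{\qprob}}\!\left[\We^{\wh{\pi}}_T\right]\leq\alpha G(t,x),
\end{equation*}
where the last step uses part (ii) ($\We^{\wh{\pi}}$ a $\wh{\qprob}$ sub-martingale started at $0$). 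The standard exponential-utility duality estimate $u(t,x,0)\leq -\exp\!\left(-\relent{\qprob}{\prob}-\alpha\,\E^{\qprob}[E_T]\right)$ for every $\qprob\in\tM$, applied at $\qprob=\wh{\qprob}$, then gives $u(t,x,0)\leq -\e^{-\alpha G(t,x)}$, which inverts to $G(t,x)\geq -(1/\alpha)\log(-u(t,x,0))$ as desired. Note also that your sketch of why $\wh{\pi}\in\A$ (``the inequality $\E^{\qprob}[\We^{\wh{\pi}}_T]\leq\E^{\qprob}[\We^{\wh{\pi}}_{T\wedge\zeta^n}]$ passed to the limit'') presupposes the supermartingale property it is trying to establish; but since $\wh{\pi}\in\A$ is not needed for part (iii), this circularity does not affect the fix.
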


Net, we introduce the inequalities that will appear frequently below. For $x > 0$, $y \in \reals$, $K > 0$ and $p > 1$, $q = \frac{p}{p - 1}$,
\begin{equation}\label{E:inequalities_prob_result}
    x y \leq \frac{1}{K} \bigl( x \log x + \e^{K y} \bigr),\quad \e^{y} \leq \frac{1}{p} \e^{p y} + \frac{1}{q} \e^{q y}.
\end{equation}

\begin{proposition}\label{P:unwind_prob}
Under Assumption \ref{A:complete_mkt} the conclusions of Theorem \ref{T:main_result_complete} follow.  Similarly, under Assumption \ref{A:incomplete_mkt} the conclusions of Theorem \ref{T:main_result} follow.
\end{proposition}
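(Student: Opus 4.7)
The plan is to combine the analytical existence result of Proposition \ref{P:analytical_unwind} with the probabilistic unwinding of Proposition \ref{P:unwind_prob_0} and a matching lower bound on $u(t,x,0)$. Both of the former presuppose the locally uniform estimate \eqref{E:Gn_bnd_assumption} on the localized certainty equivalents $G^n$, so the first task is to verify that estimate; the second is to show that $\wh\pi$ defined via \eqref{E:gen_opt_strat_1} or \eqref{E:gen_opt_strat_2} is admissible and attains the supremum in \eqref{E:vf_tx}; the third, in the incomplete setting, is to verify that $\wh\qprob$ is dual optimal for \eqref{E:dual_problem}.

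I would establish \eqref{E:Gn_bnd_assumption} by weak duality applied to the local problem. From Proposition \ref{P:localized_verification}, $G^n(t,x) = -(1/\alpha)\log(-u^n(t,x))$, and the standard exponential-utility duality inequality yields
\begin{equation*}
G^n(t,x) \leq \expv{\qprob^n}{}{(\chi_n\phi)(X_{T\wedge\zeta^n})1_{\tau^n > T\wedge\zeta^n} + \psi(\tau^n, X_{\tau^n})1_{\tau^n \leq T\wedge\zeta^n}} + \frac{1}{\alpha}\relent{\qprob^n}{\prob}
\end{equation*}
for any $\qprob^n\in\wt{\M}^n$. Specializing to $\qprob^n$ equal to $\tprob$ restricted to $\G^n_{T\wedge\zeta^n}$ (a local martingale measure by \eqref{E:tprob_mpr_c} or \eqref{E:tprob_mpr_ic}), a Girsanov computation expresses $\relent{\qprob^n}{\prob}$ as a $\tprob$-expectation involving $Q_c$ (resp.\ $Q_i$) together with the default-intensity contribution $\wt\gamma(\gamma/\wt\gamma - \log(\gamma/\wt\gamma) - 1)$; Assumption \ref{A:complete_mkt} (resp.\ \ref{A:incomplete_mkt}(i)) together with Assumption \ref{A:phi_psi_alt} then yields a bound on $G^n(t,x)$ that is locally uniform in $(t,x)$ and uniform in $n$, giving \eqref{E:Gn_bnd_assumption}.

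With $G\in C^{1,2}((0,T)\times\OO;\reals)$ solving \eqref{E:G_HJB_proof} from Proposition \ref{P:analytical_unwind}, and with Proposition \ref{P:unwind_prob_0} in hand --- in particular $\wh\qprob\in\tM$, $\We^{\wh\pi}$ a $\wh\qprob$-submartingale, and $G(t,x)\geq -(1/\alpha)\log(-u(t,x,0))$ --- the matching lower bound $u(t,x,0)\geq -\e^{-\alpha G(t,x)}$ follows from the identity $\wh{Z}_T^{-1} = \exp(\alpha(\We^{\wh\pi}_T + B - G(t,x)))$ read off from \eqref{E:opt_mm_dens}, where $B \dfn \phi(X_T)1_{\tau > T} + \psi(\tau, X_\tau)1_{\tau \leq T}$. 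Combined with $\expvs{\wh{Z}_T} = 1$ this gives $\expvs{-\e^{-\alpha(\We^{\wh\pi}_T + B)}} = -\e^{-\alpha G(t,x)}$, and it remains only to check $\wh\pi \in \A$, that is $\We^{\wh\pi}$ is a $\qprob$-supermartingale for every $\qprob\in\tM$. This reduces to a local-to-true-supermartingale upgrade: $\We^{\wh\pi}$ is a $\qprob$-local-martingale as a stochastic integral against $S$, and the explicit exponential form of $\wh{Z}$ in \eqref{E:opt_mm_dens} produces the required one-sided bound, along the lines of the analogous argument in \cite{MR4086602}.

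For the dual optimality of $\wh\qprob$ in the incomplete setting, evaluating $\log\wh{Z}_T$ and taking $\wh\qprob$-expectations yields $\relent{\wh\qprob}{\prob} + \alpha\expv{\wh\qprob}{}{B} = \alpha G(t,x) - \alpha\expv{\wh\qprob}{}{\We^{\wh\pi}_T}$. The identification $u(t,x,0) = -\e^{-\alpha G(t,x)}$ just established, combined with weak duality for \eqref{E:dual_problem}, forces $\expv{\wh\qprob}{}{\We^{\wh\pi}_T} = 0$; this upgrades the submartingale property of Proposition \ref{P:unwind_prob_0}(ii) to a true $\wh\qprob$-martingale property and identifies $\wh\qprob$ as the infimizer in \eqref{E:dual_problem}. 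The main obstacle throughout is the first task: obtaining a locally uniform bound on $\relent{\tprob}{\prob}$ on $\G^n_{T\wedge\zeta^n}$ that is independent of both $n$ and the exit time $\zeta^n$, which is precisely the motivation behind the integrability hypotheses on $Q_c$ and $Q_i$ in Assumptions \ref{A:complete_mkt}(ii) and \ref{A:incomplete_mkt}(i); the degeneracy dichotomy imposed on $\sigma_r$ in Assumption \ref{A:incomplete_mkt} is in turn what makes the local PDE step (Proposition \ref{P:localied_PDE_existence}) and the subsequent gradient bound in Proposition \ref{P:analytical_unwind} go through.
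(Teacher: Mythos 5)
Your overall skeleton is right: the proof hinges on establishing the locally uniform estimate \eqref{E:Gn_bnd_assumption}, then invoking Propositions \ref{P:analytical_unwind} and \ref{P:unwind_prob_0} to get the PDE solution $G$ and the inequality $G\geq -(1/\alpha)\log(-u)$, and finally closing the gap by the reverse inequality together with the first-order-condition identification of the optimizers. Your Step (1) is conceptually the same as the paper's --- use the spot pricing measure as a reference element of $\wt\M^n$, bound its relative entropy by a $\tprob$-expectation of $Q_c$ (resp.\ $Q_i$) and the $\wt\gamma(\gamma/\wt\gamma - \log(\gamma/\wt\gamma)-1)$ term, and control the endowment term via Assumption \ref{A:phi_psi_alt} --- though note that $\tprob$ restricted to $\G^n_{T\wedge\zeta^n}$ is \emph{not} a local martingale measure for the localized asset $S^n$ (the default intensity and Brownian coefficients are cut off by $\chi_n$ and $\sqrt{\chi_n}$), so the paper carefully constructs a separate measure $\ol\qprob^n$ via \eqref{E:bar_qprob_n_1}--\eqref{E:bar_qprob_n_2}; it only agrees with $\tprob$ on $\F^W_{\wh{b}_n}$, which is exactly enough. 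Your Step (3) also matches the paper's conclusion: identify the dual value at $\wh\qprob$ via $\log\wh{Z}_T$, combine with the submartingale property from Proposition \ref{P:unwind_prob_0}(ii) and weak duality to force $\expv{\wh\qprob}{}{\We^{\wh\pi}_T}=0$ and hence the martingale property and dual optimality.

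The genuine gap is in your Step (2). You propose to obtain the reverse inequality $u(t,x,0)\geq -\e^{-\alpha G(t,x)}$ primally, by showing $\wh\pi\in\A$ and reading off $J(\wh\pi)=-\e^{-\alpha G}$ from $\expvs{\wh{Z}_T}=1$. But admissibility in the sense of \eqref{E:admiss} requires $\We^{\wh\pi}$ to be a $\qprob$-\emph{super}martingale for \emph{every} $\qprob\in\wt\M$, and the ``one-sided bound'' you extract from \eqref{E:opt_mm_dens} goes the wrong way: writing $\We^{\wh\pi}_s = G(t,x) - \tfrac{1}{\alpha}\log\wh{Z}_s - \bigl(1_{\tau>s}G(s,X_s)+1_{\tau\leq s}\psi\bigr)$ and using $G\geq\underline\phi$, $\psi\geq 0$, one gets an \emph{upper} bound on $\We^{\wh\pi}$ in terms of $-\log\wh{Z}$, whereas the local-martingale-to-supermartingale upgrade needs a \emph{lower} bound by a $\qprob$-integrable random variable. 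No such lower bound is available in general (neither $G$ nor $\log\wh{Z}$ is bounded), and \cite{MR4086602} does not argue this way either --- \textit{Proposition A.11} there, like the present paper's Step (2), works entirely on the dual side. Concretely, the paper takes an arbitrary $\qprob\in\wt\M$, modifies its density on $(\tau\wedge\wh{a}_n,\tau\wedge\wh{b}_n]$ using $\ol\qprob^n$ to produce $\qprob^n\in\wt\M^n$, invokes weak duality for the local problem to get $G^n\leq$ (dual value of $\qprob^n$), and then passes to the limit $n\to\infty$ using Lemma \ref{L:long_cond_exp} and the uniform-integrability estimates built from Assumptions \ref{A:complete_mkt}/\ref{A:incomplete_mkt}. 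This yields $G\leq\inf_{\qprob\in\wt\M}(\dots)$, which equals $-(1/\alpha)\log(-u)$ by the standard duality theory, with no need to verify $\wh\pi\in\A$ a priori; admissibility and optimality of $\wh\pi$ then come out as consequences of the dual identification. Your proposal replaces this technically heavy but robust step with an assertion that does not hold.
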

\begin{proof}[Proof of Proposition \ref{P:unwind_prob}] To ease the notational burden, define
\begin{equation}\label{E:wh_an_bn_def}
    \wh{a}_n \dfn T\wedge \zeta^{n-1};\qquad \wh{b}_n \dfn T\wedge \zeta^n.
\end{equation}
According to the outline in the proof of \textit{Proposition A.11} the result in each respective case will follow provided (for a starting time $t$ and location $x$)
\begin{enumerate}[(1)]
\item For each $k$ and $n\geq k+1$, find a probability measure $\ol{\qprob}^n \in \wt{\M}^n$ such that
\begin{equation}\label{E:dual_quant}
    \frac{1}{\alpha} \expvs{Z^{\ol{\qprob}^n}_{\wh{b}_n} \log Z^{\ol{\qprob}^n}_{\wh{b}_n}} + \expvs{Z^{\ol{\qprob}^n}_{\wh{b}_n} \Bigl( 1_{\tau^n > \wh{b}_n}(\chi_n \phi)(X_{\wh{b}_n}) + 1_{\tau^n \leq \wh{b}_n} \psi(\tau^n, X_{\tau^n}) \Bigr)},
\end{equation}
is bounded from above by a constant $C(k)$ when $x\in\ol{\OO}_k$, $t\leq T$ and $n\geq k+1$. By duality, this will establish \eqref{E:Gn_bnd_assumption} and allow us to invoke Propositions \ref{P:analytical_unwind} and \ref{P:unwind_prob_0}.

\item For each $\qprob \in \wt{\M}$, adjust the density process $Z^{\qprob}$ in $( \tau \wedge \wh{a}_n, \tau \wedge \wh{b}_n]$ using $Z^{\ol{\qprob}^n}$ to obtain a new density process $Z^n$, and show that $\qprob^n$ defined by $Z^n$ is in $\wt{\M}^n$. Next, show that as $n \to \infty$
\begin{equation*}
    \begin{split}
        &\expvs{Z^n_{\wh{b}_n} \log Z^n_{\wh{b}_n}} \to \expvs{Z^{\qprob}_{T \wedge \tau}\log \left( Z^{\qprob}_{T \wedge \tau} \right)},\\
        &\expvs{Z^n_{\wh{b}_n} \bigl(1_{\tau^n > \wh{b}_n} (\chi_n \phi)(X_{\wh{b}_n}) + 1_{\tau^n \leq \wh{b}_n} \psi(\tau^n,X_{\tau^n}) \bigr) } \to \expvs{Z^{\qprob}_{T\wedge\tau} \left( 1_{\tau > T} \phi(X_T) + 1_{\tau \leq T} \psi(\tau,X_{\tau}) \right)}.
    \end{split}
\end{equation*}
This will give us the upper bound on $G$,
\begin{equation*}
    \begin{split}
        G(t, x) &\leq \inf_{\qprob \in \wt{\M}} \Bigl( \frac{1}{\alpha} \expvs{Z^{\qprob}_{\tau \wedge T}\log \left( Z^{\qprob}_{\tau \wedge T} \right)} + \expvs{Z^{\qprob} \left( 1_{\tau > T} \phi_T + 1_{\tau \leq T} \psi_\tau \right)} \Bigr),\\
        &= -\frac{1}{\alpha}\log(-u(t,x,0)),
    \end{split}
\end{equation*}
where the last equality follows according to the standard duality theory.  In view of Proposition \ref{P:unwind_prob_0} we conclude that $G$ is the certainty equivalent, and $\wh{\pi},\wh{\qprob}$ are optimal, as \eqref{E:opt_mm_dens} verifies the first order optimality conditions, and  $\wh{\qprob}\in\tM$ implies $\We^{\wh{\pi}}$ is a $\wh{\qprob}$ super-martingale, hence martingale by Proposition \ref{P:unwind_prob_0}. This will yield Theorems \ref{T:main_result_complete} and \ref{T:main_result}.
\end{enumerate}

We start with step $(1)$ and use the notation of \eqref{E:local_elmm}. The idea is to set $\ol{\qprob}^n = \tprob$, but we have to make slight adjustments due to the localization (in particular because of the martingale $M^n$). First, under Assumption \ref{A:complete_mkt} we set (see \eqref{E:tprob_mpr_c})
\begin{equation}\label{E:bar_qprob_n_1}
    \ol{\mbf{A}}^n = \ol{\mbf{A}} = -\tprobnu = -\sigma_e^{-1}(\mu_e - \wt{\gamma}\ell_e);\quad \ol{\mbf{B}}^n \equiv 0;\quad \ol{\mbf{C}}^n = \ol{\mbf{C}} =  \frac{\wt{\gamma}}{\gamma}-1;\quad \ol{\mbf{D}}^n \equiv 0.
\end{equation}
The market price of risk equations \eqref{E:risk_neutral_drift_condition} are easily seen to hold. Alternatively, under Assumption \ref{A:incomplete_mkt}, we set
\begin{equation}\label{E:bar_qprob_n_2}
    \ol{\mbf{A}}^n = \ol{\mbf{A}} = -\tprobnu ;\quad \ol{\mbf{B}}^n =  -\sqrt{\chi}_n(\ol{\rho})^{-1}\left(\sigma_e^{-1}(\mu_e -\wt{\gamma}\ell_e) - \rho\tprobnu\right);\quad \ol{\mbf{C}}^n = \ol{\mbf{C}} = \frac{\wt{\gamma}}{\gamma}-1;\quad \ol{\mbf{D}}^n \equiv 0.
\end{equation}
The market price of risk equations \eqref{E:risk_neutral_drift_condition} again hold. With these assignments, we now show \eqref{E:dual_quant}. 
By first conditioning on $\F^{W,H^n}_{\tau^n\wedge \wh{b}_n}$ we obtain
\begin{equation*}
    \begin{split}
        &\expv{}{}{Z^{\ol{\qprob}^n}_{\wh{b}_n} \Bigl( 1_{\tau^n > \wh{b}_n}(\chi_n \phi)(X_{\wh{b}_n}) + 1_{\tau^n \leq \wh{b}_n} \psi(\tau^n, X_{\tau^n}) \Bigr)} = \expv{\ol{\qprob}^n}{}{\Bigl( 1_{\tau^n > \wh{b}_n}(\chi_n \phi)(X_{\wh{b}_n}) + 1_{\tau^n \leq \wh{b}_n} \psi(\tau^n, X_{\tau^n}) \Bigr)},\\
        & \qquad = \expv{\ol{\qprob}^n}{}{(\chi_n\phi)(X_{\wh{b}_n})e^{-\int_t^{\wh{b}_n}(\chi_n\wt{\gamma})(v,X_v)dv} + \int_t^{\wh{b}_n} \psi(u,X_u)(\chi_n\wt{\gamma})(u,X_u)e^{-\int_t^u (\chi_n\wt{\gamma})(v,X_v)dv}du}, \\
        & \qquad = \wtexpv{}{}{(\chi_n\phi)(X_{\wh{b}_n})e^{-\int_t^{\wh{b}_n}(\chi_n\wt{\gamma})(v,X_v)dv} + \int_t^{\wh{b}_n} \psi(u,X_u)(\chi_n\wt{\gamma})(u,X_u)e^{-\int_t^u (\chi_n\wt{\gamma})(v,X_v)dv}du}.
    \end{split}
\end{equation*}
The last equality follows as $\ol{\qprob}^n$ and $\tprob$ agree on $\F^{W}_{\wh{b}_n}$. As $\chi_n$ vanishes on $\partial \OO_n$ and $\chi_n,\wt{\gamma}\geq 0$ the first term is bounded above by
\begin{equation*}
    \wtexpv{}{}{1_{\zeta^n>T}(\chi_n\phi)(X_{\wh{b}_n})e^{-\int_t^{\wh{b}_n}(\chi_n\wt{\gamma})(v,X_v)dv}} \leq -\ul{\phi} + \wtexpv{}{}{\phi(X^{t,x}_T)},
\end{equation*}
where we brought back in the $(t,x)$ dependency.  Therefore, from part (1) of Assumption \ref{A:phi_psi_alt}, for $t\leq T$ and $x\in \ol{\OO_k}$ this term is bounded from above by some $c(k)$. As for the second term, if $\psi$ is bounded from above by a constant $K$ then so is the expected value.  Else, using $0\leq \chi_n\leq 1$ and $\wt{\gamma}\geq 0$ we deduce
\begin{equation*}
    \wtexpv{}{}{\int_t^{\wh{b}_n} \psi(u,X_u)(\chi_n\wt{\gamma})(u,X_u)e^{-\int_t^u (\chi_n\wt{\gamma})(v,X_v)dv}du} \leq \wtexpv{}{}{\int_t^T (\psi\wt{\gamma})(u,X^{t,x}_u)du}.
\end{equation*}
This term is also bounded from above by a constant $C(k)$ when $t\leq T$ and $x\in\ol{\OO}_k$ by Assumption \ref{A:phi_psi_alt}. It remains to bound the relative entropy. To this end, and because $\ol{\mbf{D}}^n$ vanishes in each case, we write
\begin{equation*}
    \ol{Z}^n_{\cdot} = \Ecal \Bigl(  \int_t\ol{\mbf{A}}_u' d W_u + (\ol{\mbf{B}}^n_u)' d B_u + \ol{\mbf{C}}_u d M^n_u \Bigr)_{\cdot},
\end{equation*}
Using the identity $\Ecal \left( U + V + [U, V] \right) = \Ecal \left( U \right) \Ecal \left( V \right)$ we can write $\ol{Z}^n = \ol{Z}^{\mbf{A}}\ol{Z}^{\mbf{B},n}\ol{Z}^{\mbf{C},n}$ where
\begin{equation}\label{E:bZ_ABD_n}
    \ol{Z}^{\mbf{A}}_{\cdot} = \Ecal \Bigl( \int_t^{\cdot} \ol{\mbf{A}}_u' d W_u \Bigr);\quad \ol{Z}^{\mbf{B},n}_{\cdot} = \Ecal \Bigl( \int_t^{\cdot} (\ol{\mbf{B}}^n_u)' d B_u \Bigr);\quad 
    \ol{Z}^{\mbf{C}, n}_{\cdot} = \Ecal \Bigl( \int_t^{\cdot} \ol{\mbf{C}}_u d M^n_u \Bigr).
\end{equation}
Below, we will verify the identity
\begin{equation}\label{E:loc_rel_ent_ident}
    \begin{split}
        &\expvs{Z^{\ol{\qprob}^n}_{\wh{b}_n} \log Z^{\ol{\qprob}^n}_{\wh{b}_n}} = \mathbb{E}\bigg[\ol{Z}^{\mbf{A}}_{\wh{b}_n}\bigg(\log\left(\ol{Z}^{\mbf{A}}_{\wh{b}_n}\right) + \int_t^{\wh{b}_n} \bigg(\frac{1}{2}|\ol{\mbf{B}}^n_u|^2 \\
        &\qquad + (\chi_n\wt{\gamma})(u,X_u) e^{-\int_t^u (\chi_n\wt{\gamma})(v,X_v)dv}\left(\frac{\gamma}{\wt{\gamma}} - \log\left(\frac{\gamma}{\wt{\gamma}}\right) -1 \right)(u,X_u)\bigg)du\bigg)\bigg].
    \end{split}
\end{equation}
Admitting this, using the non-negativity of $\wt{\gamma}$ and $z\to z - \log(z) - 1$ on $(0,\infty)$; that $0\leq \chi_n \leq 1$; the definition of $\tprob$ and $\ol{\mbf{A}}$ (in particular that $\tprobnu$ is bounded on $\OO_n$); and the definitions of $Q =  Q_c$ or $Q = Q_i$ in \eqref{E:Qval_complete},\eqref{E:Qval_incomplete} respectively we obtain
\begin{equation*}
    \expvs{Z^{\ol{\qprob}^n}_{\wh{b}_n} \log Z^{\ol{\qprob}^n}_{\wh{b}_n}} \leq \wtexpv{}{}{\int_t^{\wh{b}_n} Q(u,X^{t,x}_u)du} \leq \wtexpv{}{}{\int_t^{T} Q(u,X^{t,x}_u)du}.
\end{equation*}
The upper bound for each $t\leq T, x\in\ol{\OO}_k$ and all $n\geq k+1$ now follows from Assumption \ref{A:complete_mkt} when $Q=Q_c$ and (using Jensen's inequality) from Assumption \ref{A:incomplete_mkt} when $Q=Q_i$. To show \eqref{E:loc_rel_ent_ident}, by conditioning on $\F^{W,H^n}_{\wh{b}_n}$ and using the conditional normality of $\int_t^{\wh{b}_n} (\ol{\mbf{B}}^n_u)'dB_u$ we obtain
\begin{equation*}
    \begin{split}
        &\expvs{Z^{\ol{\qprob}^n}_{\wh{b}_n} \log Z^{\ol{\qprob}^n}_{\wh{b}_n}} = \mathbb{E}\bigg[\ol{Z}^{\mbf{A}}_{\wh{b}_n}\ol{Z}^{\mbf{C},n}_{\wh{b}_n}\bigg(\log\left(\ol{Z}^{\mbf{A}}_{\wh{b}_n}\ol{Z}^{\mbf{C},n}_{\wh{b}_n}\right) + \frac{1}{2}\int_t^{\wh{b}_n}\abs{\ol{\mbf{B}}^n_u}^2du\bigg)\bigg].
    \end{split}
\end{equation*}
The stochastic exponential $\ol{Z}^{\mbf{C},n}$ satisfies
\begin{equation*}
    \begin{split}
        \ol{Z}^{\mbf{C},n}_{\wh{b}_n} &= 1_{\tau^n > \wh{b}_n} e^{-\int_t^{\wh{b}_n} (\ol{\mbf{C}}\chi_n\gamma)(v,X_v)dv} + 1_{\tau^n\leq \wh{b}_n} e^{-\int_t^{\tau^n} (\ol{\mbf{C}}\chi_n\gamma)(v,X_v)dv}\left(1+ \ol{\mbf{C}}(\tau^n,X_{\tau^n})\right),\\
        &= 1_{\tau^n > \wh{b}_n} e^{-\int_t^{\wh{b}_n} ((\wt{\gamma}-\gamma)\chi_n)(v,X_v)dv} + 1_{\tau^n\leq \wh{b}_n} e^{-\int_t^{\tau^n} ((\wt{\gamma}-\gamma)\chi_n)(v,X_v)dv}\left(\frac{\wt{\gamma}}{\gamma}\right)(\tau^n,X_{\tau^n}).
    \end{split}
\end{equation*}
Using \eqref{E:tau_n_def} we first obtain
\begin{equation}\label{E:jump_part_good}
    \begin{split}
        \condexpvs{\ol{Z}^{\mbf{C},n}_{\wh{b}_n}}{\F^W_{\wh{b}_n}} &= e^{-\int_t^{\wh{b}_n} ((\wt{\gamma}-\gamma)\chi_n)(v,X_v)dv - \int_t^{\wh{b}_n} (\chi_n\gamma)(v,X_v)dv}\\
        &\qquad + \int_t^{\wh{b}_n} e^{-\int_t^u ((\wt{\gamma}-\gamma)\chi_n)(v,X_v)dv}\left(\frac{\wt{\gamma}}{\gamma}\right)(u,X_{u})(\chi_n\gamma)(u,X_u) e^{-\int_t^u (\chi_n\gamma)(v,X_v)dv}du,\\
        &= e^{-\int_t^{\wh{b}_n} (\chi_n\wt{\gamma})(v,X_v)dv}  + \int_t^{\wh{b}_n} e^{-\int_t^v (\chi_n\wt{\gamma})(v,X_v)dv}(\chi_n\wt{\gamma})(u,X_{u})du = 1,
    \end{split}
\end{equation}
using integration by parts. This implies
\begin{equation*}
    \begin{split}
        &\expvs{Z^{\ol{\qprob}^n}_{\wh{b}_n} \log Z^{\ol{\qprob}^n}_{\wh{b}_n}} = \mathbb{E}\bigg[\ol{Z}^{\mbf{A}}_{\wh{b}_n}\bigg(\log\left(\ol{Z}^{\mbf{A}}_{\wh{b}_n}\right) + \frac{1}{2}\int_t^{\wh{b}_n}\abs{\ol{\mbf{B}}^n_u}^2du\bigg)\bigg]\\
        &\qquad + \mathbb{E}\bigg[\ol{Z}^{\mbf{A}}_{\wh{b}_n}\condexpvs{\ol{Z}^{\mbf{C},n}_{}\log\left(\ol{Z}^{\mbf{C},n}_{\wh{b}_n}\right)}{\F^{W}_{\wh{b}_n}}\bigg].
    \end{split}
\end{equation*}
Again using \eqref{E:tau_n_def} and similar calculations we obtain 
\begin{equation*}
    \begin{split}    &\condexpvs{\ol{Z}^{\mbf{C},n}_{\wh{b}_n}\log\left(\ol{Z}^{\mbf{C},n}_{\wh{b}_n}\right)}{\F^{W}_{\wh{b}_n}} = - e^{-\int_t^{\wh{b}_n} (\chi_n\wt{\gamma})(v,X_v)dv}\times \int_t^{\wh{b}_n} ((\wt{\gamma}-\gamma)\chi_n)(u,X_u)du\\
    &\qquad + \int_t^{\wh{b}_n} (\chi_n\wt{\gamma})(u,X_u)e^{-\int_t^v (\chi_n\wt{\gamma})(v,X_v)dv}\left(-\int_t^u ((\wt{\gamma}-\gamma)\chi_n)(u,X_u) + \log\left(\frac{\wt{\gamma}}{\gamma}\right)(u,X_u)\right)du,\\
    &\quad = \int_t^{\wh{b}_n} (\chi_n\wt{\gamma})(u,X_u)e^{-\int_t^v (\chi_n\wt{\gamma})(v,X_v)dv}\left(\frac{\gamma}{\wt{\gamma}} - \log\left(\frac{\gamma}{\wt{\gamma}}\right) - 1\right)(u,X_u)d u,
    \end{split}
\end{equation*}
where we again used integration by parts to obtain the last equality.  This gives \eqref{E:loc_rel_ent_ident} and finishes the proof of step (1).

Moving to step $(2)$, recall the Brownian motion $\wh{W}$ is absent in the non-localized market. As such, for each $\qprob \in \M$, the density process $Z^{\qprob}$ takes the form on $[t,T]$
\begin{equation*}
    Z^{\qprob}_{\cdot} = \Ecal\left(\int_t^{\cdot} \mbf{A}_u'dW_u + \mbf{B}_u' d B_u +  \mbf{C}_u d M_u \right),
\end{equation*}
where from \eqref{E:S_def} we deduce $\mbf{A},\mbf{B},\mbf{C}$ must satisfy the market price of risk equations
\begin{equation*}
    0 = (\mu - \gamma\ell)(s,X_s) + \sigma_W(s,X_s) \mbf{A}_s + \sigma_B(s,X_s) \mbf{B}_s - (\gamma\ell)(s,X_s) \mbf{C}_s,  
\end{equation*}
almost surely on the stochastic interval $[t,T\wedge\tau]$. From  \textit{Lemma A.8} we deduce that any optimizer $\wh{\qprob}\in\tM$ to \eqref{E:dual_problem} must have $\filt^{W,B}$ predictable $\mbf{A},\mbf{B},\mbf{C}$, and density process $Z^{\wh{\qprob}}$ stopped at $\tau$. As such, without loss of generality we will assume these facts throughout.

Next, similarly to \eqref{E:wh_an_bn_def} (and as in \textit{equation (A.37)}), define
\begin{equation*}
a_n = \tau^n \wedge T \wedge \zeta^{n - 1} = \tau^n \wedge \wh{a}_n,\qquad b_n = \tau^n \wedge T \wedge \zeta^{n} = \tau^n\wedge\wh{b}_n.
\end{equation*}
We also use \textit{Lemma C.8} which shows almost surely that $a_n = \tau^n \wedge T \wedge \zeta^{n-1}$ so that $a_n < b_n$. Next, recalling $\ol{\qprob}^n$ defined via either \eqref{E:bar_qprob_n_1} or \eqref{E:bar_qprob_n_2}, we modify $Z^{\qprob}$ to create a density process $Z^n$ by setting 
\begin{equation*}
    \begin{split}
        \mbf{A}^n_u &= \mbf{A}_u 1_{u \leq a_n} + \ol{\mbf{A}}_u 1_{a_n < u \leq b_n};\quad \mbf{B}^n_u = \mbf{B}_u 1_{u\leq a_n} + \ol{\mbf{B}}^n_u 1_{a_n < u \leq b_n};\quad \mbf{C}^n_u = \mbf{C}_u 1_{u \leq a_n} + \ol{\mbf{C}} 1_{a_n < u \leq b_n}.
    \end{split}
\end{equation*}
As $\chi_n = 1$ on $\OO_{n-1}$ it is clear that \eqref{E:risk_neutral_drift_condition} holds, and in what follows we show the associated measure $\qprob^n$ is well defined and in $\tM^n$. First, using the same arguments in the proof of \textit{Proposition A.11}, we obtain $\expvs{Z^n_{\wh{b}_n}} = 1$, which shows $\qprob^n$ is well defined an in $\M^n$. Next, we need to show $\expvs{Z^n_{\wh{b}_n} \log (Z^n_{\wh{b}_n})} < \infty$ so that $\qprob^n \in \tM^n$. To do so,  note that
\begin{equation*}
    \expvs{Z^n_{\wh{b}_n} \log (Z^n_{\wh{b}_n})} = \expvs{Z^n_{b_n} \log (Z^n_{b_n})} = \expvs{Z^{\qprob}_{a_n} \log Z^{\qprob}_{a_n}} + \expvs{Z^{\qprob}_{a_n} \condexpvs{\frac{\ol{Z}^n_{b_n}}{\ol{Z}^n_{a_n}} \log \left( \frac{\ol{Z}^n_{b_n}}{\ol{Z}^n_{a_n}} \right)}{\G^n_{a_n}} }.
\end{equation*}
As $s \to Z^{\qprob}_s\log(Z^{\qprob}_s)$ is a sub-martingale and $\qprob\in\tM$ we know
\begin{equation*}
    \expvs{Z^{\qprob}_{a_n} \log Z^{\qprob}_{a_n}} \leq \expvs{Z^{\qprob}_{T \wedge \tau} \log Z^{\qprob}_{T \wedge \tau}} < \infty.
\end{equation*}

For the second term, by Lemma \ref{L:long_cond_exp} with $Q=Q_c$ from \eqref{E:Qval_complete} or $Q=Q_i$ from \eqref{E:Qval_incomplete}, we have
\begin{equation}\label{E:cond_inequality_Q}
    \condexpvs{\frac{\ol{Z}^n_{b_n}}{\ol{Z}^n_{a_n}} \log \left( \frac{\ol{Z}^n_{b_n}}{\ol{Z}^n_{a_n}} \right)}{\G^n_{a_n}} \leq 1_{\tau^n>\wh{a}_n} \wtcondexpv{}{}{\int_{\wh{a}_n}^{\wh{b}_n} Q(u,X_u) du}{\F^W_{\wh{a}_n}}.
\end{equation}
Therefore, as $a_n = \wh{a}_n$ on $\cbra{\tau^n > \wh{a_n}}$ we have
\begin{equation*}
    \begin{split}
        &\expvs{Z^{\qprob}_{a_n} \condexpv{}{}{\frac{\ol{Z}^n_{b_n}}{\ol{Z}^n_{a_n}} \log \left( \frac{\ol{Z}^n_{b_n}}{\ol{Z}^n_{a_n}} \right)}{\G^n_{a_n}} } \leq  \expvs{Z^{\qprob}_{\wh{a}_n} 1_{\tau^n>\wh{a}_n}\wtcondexpv{}{}{\int_{\wh{a}_n}^{\wh{b}_n} Q(u,X_u) du}{\F^W_{\wh{a}_n}}}.
    \end{split}
\end{equation*}
Under Assumption \ref{A:complete_mkt}, the market is complete with unique martingale measure $\tprob$. Thus $Z^{\qprob}_{\wh{a}_n} = Z^{\tprob}_{\wh{a}_n}$ and
\begin{equation*}
    \begin{split}
        &\expvs{Z^{\qprob}_{a_n} \condexpv{}{}{\frac{\ol{Z}^n_{b_n}}{\ol{Z}^n_{a_n}} \log \left( \frac{\ol{Z}^n_{b_n}}{\ol{Z}^n_{a_n}} \right)}{\G^n_{a_n}} } \leq  \wtexpv{}{}{\int_{\wh{a}_n}^{\wh{b}_n} Q_c(u,X_u) du} \leq \wtexpv{}{}{\int_{t}^{T} Q_c(u,X^{t,x}) du}.
    \end{split}
\end{equation*}
The upper bound for each $t\leq T, x\in\ol{\OO}_k$ and all $n\geq k+1$ now follows form Assumption \ref{A:complete_mkt}. Alternatively, when Assumption \ref{A:incomplete_mkt} holds, we  use \eqref{E:inequalities_prob_result} to say for any $K>0$ that
\begin{equation*}
    \begin{split}
        \expvs{Z^{\qprob}_{a_n} \condexpv{}{}{\frac{\ol{Z}^n_{b_n}}{\ol{Z}^n_{a_n}} \log \left( \frac{\ol{Z}^n_{b_n}}{\ol{Z}^n_{a_n}} \right)}{\G^n_{a_n}} } &\leq \frac{1}{K}\left(\expvs{Z^{\qprob}_{\wh{a}_n} \log\left(Z^{\qprob}_{\wh{a}_n}\right)} + \frac{1}{e}\right)\\
        &\qquad + \frac{1}{K} \expvs{1_{\tau^n>\wh{a}_n} \e^{K \wtcondexpv{}{}{\int_{\wh{a}_n}^{\wh{b}_n} Q_i(u,X_u) d u}{\F^W_{\wh{a}_n}}} }.
    \end{split}
\end{equation*}
The first term on the right above is finite as $\qprob\in\tM$. As for the second term,  straight-forward computations (similar to those which established \eqref{E:jump_part_good}) show that on $\tau^n > \wh{a}^n$ we have $Z^{\tprob}_{\wh{a}_n} = \ol{Z}^{\mbf{A}}_{\wh{a}_n}$ for $\ol{Z}^{\mbf{A}}$ from \eqref{E:bZ_ABD_n}. Next, using Jensen's inequality, iterated conditioning and H\"{o}lder inequality with $p > 1$, we obtain
\begin{equation}\label{E:holder_inequality_Qi}
    \begin{split}
        \expvs{1_{\tau^n>\wh{a}_n} \e^{K \wtcondexpv{}{}{\int_{\wh{a}_n}^{\wh{b}_n} Q_i(u,X_u) d u}{\F^W_{\wh{a}_n}}} } & = \expv{}{}{1_{\tau^n>\wh{a}_n} Z^{\tprob}_{\wh{a}_n}\left(\ol{Z}^{\mbf{A}}_{\wh{a}_n}\right)^{-1} \e^{K \wtcondexpv{}{}{\int_{\wh{a}_n}^{\wh{b}_n} Q_i(u,X_u) d u}{\F^W_{\wh{a}_n}}} },\\
        &\leq \wtexpv{}{}{\left(\ol{Z}^{\mbf{A}}_{\wh{a}_n}\right)^{-1} \e^{ K \int_{\wh{a}_n}^{\wh{b}_n} Q_i(u,X_u) d u }},\\
        &\leq \wtexpv{}{}{\left(\ol{Z}^{\mbf{A}}_{\wh{a}_n}\right)^{-p}}^{\frac{1}{p}} \wtexpv{}{}{\e^{ \frac{K p}{p - 1} \int_{\wh{a}_n}^{\wh{b}_n} Q_i(u,X_u) d u } }^{\frac{p - 1}{p}}.
    \end{split}
\end{equation}
For the first term on the right above, we have
\begin{equation*}
    \wtexpv{}{}{\left(\ol{Z}^{\mbf{A}}_{\wh{a}_n}\right)^{-p}} = \expvs{\Ecal\Bigl(\int_t^{\cdot} (1 - p) \ol{\mbf{A}}_u 'dW_u \Bigr)_{\wh{a}_n}\e^{\frac{p(p - 1)}{2} \int_t^{\wh{a}_n} \abs{\tprobnu_u}^2 d u}}.
\end{equation*}
From part (ii) of Assumption \ref{A:incomplete_mkt}, we obtain by strong uniqueness (which holds under our assumptions) that the quantity on the right above is bounded from above by
\begin{equation*}
    \expvs{\e^{\frac{p(p - 1)}{2}\int_t^T \abs{\tprobnu(u,X^{(p),t,x})}^2 d u}},
\end{equation*}
which for $t\leq T, x\in \ol{\OO}_k$ is bounded by a constant $C(p, k)$. As for the second term in \eqref{E:holder_inequality_Qi}, we also have from Assumption \ref{A:incomplete_mkt} that for $K>0$ small enough 
\begin{equation*}
    \wtexpv{}{}{\e^{ \frac{K p}{p - 1} \int_{\wh{a}_n}^{\wh{b}_n} Q_i(u,X_u) d u } } \leq \wtexpv{}{}{\e^{\eps_2 \int_{t}^{T} Q_i(u,X^{t,x}) du } },
\end{equation*}
which for $t\leq T$ and $x\in\ol{\OO}_k$ is bounded by some constant $C(k)$.  This finishes showing $\qprob^n\in \tM^n$. 

We move to show that $\{ Z^n_{\wh{b}_n} \log (Z^n_{\wh{b}_n}) \}$ is uniformly integrable.  Once this is done, we can apply the same arguments in the proof of \textit{Proposition A.10}  to show that 
\begin{equation*}
\left\{ Z^n_{\wh{b}_n} \bigl(1_{\tau^n > \wh{b}_n} (\chi_n \phi)(X_{\wh{b}_n}) + 1_{\tau^n \leq \wh{b}_n} \psi(\tau^n, X_{\tau^n}) \bigr) \right\}
\end{equation*} 
is uniformly integrable. Given this uniform integrability and 
\begin{equation*} 
Z^n_{\wh{b}_n} \bigl( 1_{\tau^n > \wh{b}_n} (\chi_n \phi)(X_{\wh{b}_n}) + 1_{\tau^n \leq \wh{b}_n} \psi(\tau^n,X_{\tau^n}) \bigr) \cvas Z^{\qprob}_{T \wedge \tau} \bigl( 1_{\tau > T} \phi(X_{T}) + 1_{\tau \leq T} \psi(\tau,X_{\tau}) \bigr),
\end{equation*}
we obtain the limit
\begin{equation*}
    \lim_{n \to \infty} \expv{}{}{Z^n_{\wh{b}_n} \bigl(1_{\tau^n > \wh{b}_n} (\chi_n \phi)(X_{\wh{b}_n}) + 1_{\tau^n \leq \wh{b}_n} \psi(\tau^n,X_{\tau^n}) \bigr) } = \expv{}{}{Z^{\qprob}_{T \wedge \tau} \bigl( 1_{\tau > T} \phi(X_{T}) + 1_{\tau \leq T} \psi(\tau,X_{\tau}) \bigr)}.
\end{equation*}
To show the uniform integrability of $\{ Z^n_{\wh{b}_n} \log (Z^n_{\wh{b}_n}) \}$, note that as
\begin{equation*}
    Z^n_{\wh{b}_n} \log (Z^n_{\wh{b}_n}) = Z^{\qprob}_{a_n} \log(Z^{\qprob}_{a_n}) + Z^{\qprob}_{a_n} \condexpv{}{}{\frac{\ol{Z}^n_{b_n}}{\ol{Z}^n_{a_n}} \log \left( \frac{\ol{Z}^n_{b_n}}{\ol{Z}^n_{a_n}} \right)}{\G^n_{a_n}},
\end{equation*}
if we can show each of the families of random variables on the right above are uniformly integrable, then we have the uniform integrability of $\{ Z^n_{\wh{b}_n} \log (Z^n_{\wh{b}_n)} \}$. 

To obtain uniform integrability of $\{ Z^{\qprob}_{a_n} \log(Z^{\qprob}_{a_n})\}$, first we have $Z^{\qprob}_{a_n} \log (Z^{\qprob}_{a_n}) \cvas Z^{\qprob}_{T \wedge \tau} \log(Z^{\qprob}_{T \wedge \tau})$. Next, by non-negativity of $x \log x + \frac{1}{e}$ for $x \in (0, \infty)$, Jensen's inequality and Fatou's lemma imply
\begin{align*}
    \expvs{Z^{\qprob}_{a_n} \log(Z^{\qprob}_{a_n}) + \frac{1}{e}} & \leq \expvs{Z^{\qprob}_{T \wedge \tau} \log(Z^{\qprob}_{T \wedge \tau}) + \frac{1}{e}} < \infty, \\
    \lim_{n \to \infty} \expvs{Z^{\qprob}_{a_n} \log(Z^{\qprob}_{a_n}) + \frac{1}{e}} & = \expvs{Z^{\qprob}_{T \wedge \tau} \log(Z^{\qprob}_{T \wedge \tau}) + \frac{1}{e}}.
\end{align*} 
By Theorem 4.6.3 in \cite{MR3930614} we can conclude that $\{ Z^{\qprob}_{a_n} \log(Z^{\qprob}_{a_n}) + \frac{1}{e} \}$ is uniformly integrable, which further implies that $\{ Z^{\qprob}_{a_n} \log(Z^{\qprob}_{a_n}) \}$ is uniformly integrable. For the second family of random variables, by \eqref{E:inequalities_prob_result} and \eqref{E:cond_inequality_Q}, we have for $K > 0$,
\begin{equation*}
    Z^{\qprob}_{a_n} \condexpv{}{}{\frac{\ol{Z}^n_{b_n}}{\ol{Z}^n_{a_n}} \log \left( \frac{\ol{Z}^n_{b_n}}{\ol{Z}^n_{a_n}} \right)}{\G^n_{a_n}} \leq \frac{1}{K} Z^{\qprob}_{a_n} \log Z^{\qprob}_{a_n} + \frac{1}{K} \wtcondexpv{}{}{\e^{K \int_{\wh{a}_n}^{\wh{b}_n} Q(u, X_u) d u}}{\F^W_{\wh{a}_n}}. 
\end{equation*}
As we have already shown that $\{ Z^{\qprob}_{a_n} \log Z^{\qprob}_{a_n}\}$ is uniformly integrable, we just need to show that $\wtcondexpv{}{}{\e^{K \int_{\wh{a}_n}^{\wh{b}_n} Q(u, X_u) d u}}{\F^W_{\wh{a}_n}}$ is uniformly integrable. For $\wt{p} > 1$, and let $p > 1$ from part (ii) of Assumption \ref{A:incomplete_mkt},
\begin{equation*}
\begin{split}
& \expvs{\left( \wtcondexpv{}{}{\e^{K \int_{\wh{a}_n}^{\wh{b}_n} Q(u, X_u) d u}}{\F^W_{\wh{a}_n}} \right)^{\wt{p}}} 
\leq \expvs{\wtcondexpv{}{}{\e^{ \wt{p} K \int_{\wh{a}_n}^{\wh{b}_n} Q(u, X_u) d u}}{\F^W_{\wh{a}_n}}} \\
& \quad = \wtcondexpv{}{}{(Z^{\tprob}_{\wh{a}_n})^{-1} \e^{K \wt{p} \int_{\wh{a}_n}^{\wh{b}_n} Q(u, X_u) d u}}{\F^W_{\wh{a}_n}} \leq \left( \wtexpv{}{}{(Z^{\tprob}_{\wh{a}_n})^{-p}} \right)^{\frac{1}{p}} \left( \wtexpv{}{}{\e^{ \frac{K \wt{p} p}{p - 1} \int_{\wh{a}_n}^{\wh{b}_n} Q(u, X_u) d u } }\right)^{\frac{p - 1}{p}}.
\end{split}
\end{equation*}
We already showed that $\wtexpv{}{}{(Z^{\tprob}_{\wh{a}_n})^{-p}}$ is bounded, and pick $K > 0$ small enough such that $K \wt{p} p/(p - 1) < \eps_2$ in part (i) of Assumption \ref{A:incomplete_mkt}, we know that $\wtexpv{}{}{\e^{ \frac{K \wt{p} p}{p - 1} \int_{\wh{a}_n}^{\wh{b}_n} Q(u, X_u) d u } }^{\frac{p - 1}{p}}$ is also bounded. Hence, $\wtcondexpv{}{}{\e^{K \int_{\wh{a}_n}^{\wh{b}_n} Q(u, X_u) d u}}{\F^W_{\wh{a}_n}}$ is bounded in $L^{\wt{p}}$, which implies uniform integrability. Putting together, the uniform integrability of $\left\{Z_{a_n} \condexpv{}{}{\frac{\ol{Z}^n_{b_n}}{\ol{Z}^n_{a_n}} \log \left( \frac{\ol{Z}^n_{b_n}}{\ol{Z}^n_{a_n}} \right)}{\G^n_{a_n}} \right\}$ follows.

Finally, by applying the same arguments in the proof of \textit{Proposition A.11}, we can show that $Z_{a_n} \condexpv{}{}{\frac{\ol{Z}^n_{b_n}}{\ol{Z}^n_{a_n}} \log \left( \frac{\ol{Z}^n_{b_n}}{\ol{Z}^n_{a_n}} \right)}{\G^n_{a_n}} \cvprob 0$ as $\condexpv{}{}{\frac{\ol{Z}^n_{b_n}}{\ol{Z}^n_{a_n}} \log \left( \frac{\ol{Z}^n_{b_n}}{\ol{Z}^n_{a_n}} \right)}{\G^n_{a_n}} \cvprob 0$ and $Z_{a_n} \cvas Z_{T \wedge \tau}$. Along with uniform integrability, we obtain 
\begin{equation*}
    \lim_{n \to \infty} \expvs{Z_{a_n} \condexpv{}{}{\frac{\ol{Z}^n_{b_n}}{\ol{Z}^n_{a_n}} \log \left( \frac{\ol{Z}^n_{b_n}}{\ol{Z}^n_{a_n}} \right)}{\G^n_{a_n}} } = 0.
\end{equation*} 
\end{proof}


\section{Lemmas for Proposition \ref{P:localied_PDE_existence}}

Throughout, we continue to write all references to \cite{MR4086602} in italics.

\begin{lemma}\label{L:L_inv_bnd}
Under Assumption \ref{A:incomplete_mkt}, for any $\eps > 0$ when $\abs{\sigma_r} \geq \eps$, there exists constants $\ul{C}(\eps, n)$ and $\ol{C}(\eps, n)$ such that for all $(s,y) \in [0,T]\times \ol{\OO}_n$.
\begin{equation}\label{E:L_inv_bnd_res}
     \ul{C}(\eps, n) \leq \Sigma^{-1}_{i j} \leq  \ol{C}(\eps, n),\quad 1 \leq i,\ j \leq k + 1.
\end{equation}
\end{lemma}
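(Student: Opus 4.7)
My plan is to invoke the block inversion formula based on the Schur complement of the upper-left block of $\Sigma$. Writing
\[
\Sigma = \begin{pmatrix} \Sigma_e & \Upsilon_e \sigma_r \\ \sigma_r'\Upsilon_e' & \sigma_r' A \sigma_r \end{pmatrix},
\]
the Schur complement is the scalar
\[
S \dfn \sigma_r' A \sigma_r - \sigma_r'\Upsilon_e'\Sigma_e^{-1}\Upsilon_e\sigma_r.
\]
Using $\Upsilon_e = \sigma_e\rho a'$ and $\Sigma_e = \sigma_e\sigma_e'$ with $\sigma_e\in\spos{k}$, a direct computation gives $\Upsilon_e'\Sigma_e^{-1}\Upsilon_e = a\rho'\rho a'$, so $S = \sigma_r'\bAfn\sigma_r = \Afn(\sigma_r)$ in the notation of \eqref{E:ACE_def}.

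The crucial step will be a two-sided bound on $S$ on the set $\{(s,y)\in [0,T]\times\ol{\OO}_n : \abs{\sigma_r(s,y)}\geq\eps\}$. The upper bound is routine, since $A$, $\rho$ and $\sigma_r$ are continuous on the compact set $[0,T]\times\ol{\OO}_n$. For the lower bound, Assumption \ref{A:incomplete_mkt} gives $\rho\rho'\leq (1-\eps_1)\idmat{k}$, and because the non-zero eigenvalues of $\rho\rho'$ and $\rho'\rho$ coincide one obtains $\rho'\rho \leq (1-\eps_1)\idmat{d}$; hence $\bAfn = A - a\rho'\rho a' \succeq \eps_1 A$ in the positive semidefinite order. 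Combining this with the uniform positive-definiteness of $A$ on the compact set $\ol{\OO}_n$ (which follows from $A\in\sdpos$ pointwise together with continuity), there exists $\lambda_n>0$ such that $\sigma_r' A\sigma_r \geq \lambda_n\abs{\sigma_r}^2\geq \lambda_n\eps^2$, yielding
\[
S \geq \eps_1\lambda_n\eps^2 \rdfn c(\eps,n) > 0.
\]

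With the two-sided bound on $S$ in hand, I would then apply the block inversion formula
\[
\Sigma^{-1} = \begin{pmatrix} \Sigma_e^{-1} + S^{-1}\Sigma_e^{-1}\Upsilon_e\sigma_r\sigma_r'\Upsilon_e'\Sigma_e^{-1} & -S^{-1}\Sigma_e^{-1}\Upsilon_e\sigma_r \\ -S^{-1}\sigma_r'\Upsilon_e'\Sigma_e^{-1} & S^{-1} \end{pmatrix}.
\]
Every entry on the right is a finite sum and product of (i) entries of $\Sigma_e^{-1}$, which are bounded on $[0,T]\times\ol{\OO}_n$ by continuity and uniform positive-definiteness of $\Sigma_e$, (ii) entries of $\Upsilon_e$ and $\sigma_r$, bounded by continuity on the compact set, and (iii) $S^{-1}$, bounded by $c(\eps,n)^{-1}$. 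Taking the entry-wise maximum and minimum over $1\leq i,j\leq k+1$ over this compact set produces the desired constants $\ol{C}(\eps,n)$ and $\ul{C}(\eps,n)$ in \eqref{E:L_inv_bnd_res}.

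The only real obstacle is the uniform lower bound on the Schur complement $S$, and that is precisely where the strict incompleteness condition $(1-\eps_1)\idmat{k}-\rho\rho'\in\mathbb{S}^k_{+}$ in Assumption \ref{A:incomplete_mkt} is essential: without it, $\bAfn$ could collapse to a merely positive semidefinite matrix with a null direction aligned with $\sigma_r$, and then $S$ could vanish even though $\abs{\sigma_r}\geq\eps$, destroying the bound on $\Sigma^{-1}$. Everything else is a compactness/continuity argument combined with the explicit Schur inversion formula.
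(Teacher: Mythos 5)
Your proof is correct and takes essentially the same approach as the paper: both invert $\Sigma$ via the Schur complement of $\Sigma_e$, observe that every entry of the block-inverse is bounded on $[0,T]\times\ol{\OO}_n$ by continuity except for the factor $\Afn(\sigma_r)^{-1}$, and then obtain a uniform two-sided bound on the Schur complement $\Afn(\sigma_r)=\sigma_r'\bAfn\sigma_r$ using the strict ellipticity of $\bAfn$ together with $\abs{\sigma_r}\geq\eps$. The only cosmetic difference is that you spell out why $\bAfn=a(\idmat{d}-\rho'\rho)a'\succeq\eps_1 A$ (via the eigenvalue transfer $\rho\rho'\preceq(1-\eps_1)\idmat{k}\Rightarrow\rho'\rho\preceq(1-\eps_1)\idmat{d}$), whereas the paper simply asserts that $\bAfn$ is strictly positive definite on the compact set under Assumptions \ref{A:region} and \ref{A:incomplete_mkt}.
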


\begin{proof}[Proof of Lemma \ref{L:L_inv_bnd}]
Given $\Sigma_e$ is invertible, and $\Afn(\sigma_r)$ is the Schur complement of $\Sigma_e$ in $\Sigma$, by the standard result in linear algebra, we obtain
\begin{equation*}
    \Sigma^{-1} =
    \begin{pmatrix}
        \Sigma_e^{-1} + \frac{1}{\Afn(\sigma_r)} \Sigma_e^{-1} \Upsilon_e \sigma_r \sigma_r' \Upsilon_e' \Sigma_e^{-1} & - \frac{1}{\Afn(\sigma_r)} \Sigma_e^{-1} \Upsilon_e \sigma_r \\
        - \frac{1}{\Afn(\sigma_r)} \sigma_r' \Upsilon_e'\Sigma_e^{-1} & \frac{1}{\Afn(\sigma_r)}
    \end{pmatrix}.
\end{equation*}
Every term in $\Sigma^{-1}$ except $\frac{1}{\Afn(\sigma_r)}$ is bounded on $[0, T] \times \ol{\OO}_n$. By Assumption \ref{A:region} and \ref{A:incomplete_mkt}, we know that $\bAfn$ is strictly positive definite on $[0, T] \times \ol{\OO}_n$. Hence, there exists $\ol{\lambda}_n>0$ and $\ol{\Lambda}_n > 0$ such that $\ol{\lambda}_n z' z \leq z' \bAfn(s, y) z \leq \ol{\Lambda}_n z' z$ holds for any $z \neq 0 \in \reals^d$, $(s, y) \in [0, T] \times \ol{\OO}_n$. This indicates that when $\abs{\sigma_r} \geq \eps$,
\begin{equation*}
     \frac{1}{\eps^2 \ol{\Lambda}_n} \leq \frac{1}{\Afn(\sigma_r)} \leq \frac{1}{\eps^2 \ol{\lambda}_n}.
\end{equation*}
Therefore, we can find $\ul{C}(\eps, n)$ and $\ol{C}(\eps, n)$ such that equation \eqref{E:L_inv_bnd_res} holds when $\abs{\sigma_r} \geq \eps$.
\end{proof}


\begin{lemma}\label{L:g_checka_bnd}
For $\ca^n$ defined in \eqref{E:checka_def}, there exists a constant $C(n)$ such that $g \ca^n(g, 0) \leq C(n)(1 + g^2)$ for all $(s, x) \in [0, T] \times \OO_n$.
\end{lemma}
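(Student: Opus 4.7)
The plan is to evaluate $\ca^n$ at $p=0$, giving $\ca^n(g,0)=\chi_n(\gamma/\alpha + \Hcal(g,0))$ by \eqref{E:checka_def}, and then split into cases. Under Assumption \ref{A:complete_mkt}, or under Assumption \ref{A:incomplete_mkt} with $|\sigma_r|\equiv 0$, formula \eqref{E:Hcal_is_ok_1} (respectively \eqref{E:Hcal_is_ok_2} together with $R_{\Hcal}(0_d,g,0)=0$) reduces $\ca^n(g,0)$ to the affine expression $\chi_n(Q_c/\alpha + \wt{\gamma}\psi - \wt{\gamma}g)$. The coefficients are continuous on the compact set $[0,T]\times\ol{\OO}_n$, hence bounded there, and the desired estimate then follows at once from $|g|\leq (1+g^2)/2$.

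The substantive case is Assumption \ref{A:incomplete_mkt} with $|\sigma_r|>0$. Here I would expand $R_{\Hcal}(\sigma_r,g,0)$ using \eqref{E:GR_def}. The only $g$-dependent pieces of $R_{\Hcal}$ are the logarithm of $\Dfn(g,0)$ and the product-log term. Writing $\Dfn(g,0)=\Cfn\gamma e^{\alpha(g-\psi)+\ell_e'\Sigma_e^{-1}\mu_e}$ identifies the logarithmic contribution as $\wt{\gamma}g$ plus a function of $(s,x)$ bounded on $[0,T]\times\ol{\OO}_n$, and this $\wt{\gamma}g$ cancels exactly against the $-\wt{\gamma}g$ appearing in $\wt{\gamma}(\psi-g)$. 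Collecting terms leaves
\begin{equation*}
\ca^n(g,0)=\chi_n\!\left(F(s,x)-\frac{1}{2\alpha\mathbf{K}_6(\sigma_r)}\bigl(\plog{K_1 e^{\alpha g}}^2+2\plog{K_1 e^{\alpha g}}\bigr)\right),
\end{equation*}
where $F$ and $K_1$ are continuous (hence bounded) on $[0,T]\times\ol{\OO}_n$, and $\mathbf{K}_6(\sigma_r)\geq \Cfn$ is bounded below there by a positive constant.

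Two elementary facts about $\textrm{PL}$ then close the argument: (i) $\plog{y}\geq 0$ for $y\geq 0$, which forces $\plog{y}^2+2\plog{y}\geq 0$ and yields a uniform upper bound $\ca^n(g,0)\leq C_1(n)$; and (ii) $\plog{y}\leq y$ for $y\geq 0$ (immediate from $\plog{y}e^{\plog{y}}=y$ and $e^{\plog{y}}\geq 1$), which, combined with $e^{\alpha g}\leq 1$ when $g\leq 0$, gives $\plog{K_1 e^{\alpha g}}\leq K_1$ and hence a lower bound $\ca^n(g,0)\geq -C_4(n)$ for all $g\leq 0$. Multiplying by $g$ and treating the two signs separately then yields $g\ca^n(g,0)\leq \max(|C_1(n)|,|C_4(n)|)(1+g^2)$ in every case. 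The main obstacle is simply recognizing the $\wt{\gamma}g$ cancellation between the logarithmic piece of $R_{\Hcal}$ and the $-\wt{\gamma}g$ in $\wt{\gamma}(\psi-g)$; once that algebraic reduction is made, the full quadratic estimate of Lemma \ref{L:plog_quad} is not required for this lemma.
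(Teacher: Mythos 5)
Your proposal is correct, and it takes a genuinely different route from the paper's in the substantive case. Where Assumption~\ref{A:incomplete_mkt} holds with $\abs{\sigma_r}>0$, the paper does not touch the algebra of $R_{\Hcal}$ at all: it uses the regularity of $\sigma_r$ to get $\abs{\sigma_r}\geq\eps(n)$ on $[0,T]\times\ol{\OO}_n$, invokes Lemma~\ref{L:L_inv_bnd} to conclude the joint covariance $\Sigma$ is uniformly elliptic there, and then cites \emph{Lemma C.2} of \cite{MR4086602} (the no-CDS precursor, which handled a Hamiltonian of this form under full ellipticity of $\Sigma$) to finish. You instead expand $\ca^n(g,0)$ explicitly via \eqref{E:GR_def}, isolate the cancellation of the $\wt{\gamma}g$ contribution from $\log(\Dfn(g,0)/(\wt{\gamma}\Cfn))$ against the $-\wt{\gamma}g$ in $\wt{\gamma}(\psi-g)$, and reduce the problem to two elementary properties of $\textrm{PL}$ ($\plog{y}\geq 0$ and $\plog{y}\leq y$ on $y\geq 0$), avoiding both Lemma~\ref{L:plog_quad} and the external citation. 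Your calculation is sound: the non-$\plog$ residue $F$ and the prefactor $K_1$ inside the $\plog$ argument are indeed continuous on the compact set $[0,T]\times\ol{\OO}_n$ because $\Afn(\sigma_r)=\sigma_r'\bAfn\sigma_r$ is bounded below there (this uses $\abs{\sigma_r}\geq\eps(n)$ together with positive definiteness of $\bAfn$, which follows from $\rho\rho'\leq(1-\eps_1)\idmat{k}$ — worth stating explicitly), and $\mbf{K}_6(\sigma_r)=\Cfn+\Efn(\sigma_r)^2/\Afn(\sigma_r)\geq\Cfn>0$. The resulting one-sided bounds — a uniform upper bound from non-negativity of $\plog^2+2\plog$, a lower bound for $g\leq 0$ from monotonicity of $\plog$ — combined with $\abs{g}\leq(1+g^2)/2$ give the claim. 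The tradeoff is readability for self-containedness: the paper's route is shorter and reuses prior machinery, while yours is more transparent about \emph{why} the bound is one-sided quadratic in $g$ and makes visible that the full strength of Lemma~\ref{L:plog_quad} is not needed here (it is used for the gradient bound in Proposition~\ref{P:analytical_unwind}, not for this lemma). Both are valid.
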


\begin{proof}[Proof of Lemma \ref{L:g_checka_bnd}]
We will separate our proof based upon if Assumption \ref{A:complete_mkt} or Assumption \ref{A:incomplete_mkt} holds, and unless explicitly stated otherwise, omit the functional dependence on $(s,y)$. When Assumption \ref{A:complete_mkt} holds, from \eqref{E:checka_def} and \eqref{E:Hcal_is_ok_1} we obtain  $\ca^n(g, 0) = \chi_n \left(\gamma/\alpha + \wh{\Hcal}(g) \right)$ where 
\begin{equation*}
    \wh{\Hcal}(g) = -\frac{\gamma}{\alpha}-\wt{\gamma}g + F.
\end{equation*}
This clearly implies
\begin{equation*}
    \sup_{(s, y) \in [0, T] \times \ol{\OO}_n} \frac{g \ca^n(g, 0)}{1 + g^2} < \infty.
\end{equation*}
Similarly, under Assumption \ref{A:incomplete_mkt} from \eqref{E:Hcal_is_ok_2} we find
\begin{equation*}
    \ca^n(g, 0) = \chi_n \left( \frac{\gamma}{\alpha} + \wh{\Hcal}(g) + R_{\Hcal}(\sigma_r, g, 0) \right).
\end{equation*}
First consider when $\abs{\sigma}_r \equiv 0$. As $R_{\Hcal} = 0$ we similarly obtain 
\begin{equation*}
    \sup_{\substack{(s, y) \in [0, T] \times \ol{\OO}_n, \\ \abs{\sigma_r} \equiv 0}} \frac{g \ca^n(g, 0)}{1 + g^2}  < \infty.
\end{equation*}
Next, consider when $\abs{\sigma_r}(s,y) > 0$ for each fixed $(s,y)$.  Using the regularity of $(s,y)\to \sigma_r(s,y)$ we deduce the existence of an $\eps(n)>0$ such that $\abs{\sigma_r} \geq \eps(n)$ for $(s,y)\in [0,T]\times \ol{\OO_n}$. From Lemma \ref{L:L_inv_bnd} we see the matrix $\Sigma$ in \eqref{E:SigE_UpsE} is uniformly elliptic in $[0,T]\times \ol{\OO}_n$ (with ellipticity constant depending on $n$) and hence we can invoke \textit{Lemma C.2} to obtain
\begin{equation*}
    \sup_{\substack{(s, y) \in [0, T] \times \ol{\OO}_n, \\ \abs{\sigma_r} \geq \eps(n)}} \frac{g \ca^n(g, 0)}{1 + g^2} < \infty.
\end{equation*}
This finishes the result.
\end{proof}


\begin{lemma}\label{L:p_checka_bnd}
For $\ca^n$ defined in \eqref{E:checka_def}, and any interval $[g_1, g_2]$,
\begin{equation*}
    \limsup_{\abs{p} \uparrow \infty} \sup_{\substack{(s, y) \in [0, T] \times \ol{\OO}_n, \\ g \in [g_1, g_2]}} \frac{\abs{\ca^n(g, p)}}{\abs{p}^2} < \infty.
\end{equation*}
\end{lemma}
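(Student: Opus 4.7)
The plan is to decompose $\ca^n(g,p) = b'p - (\alpha/2)p'Ap + \chi_n(\gamma/\alpha+\Hcal(g,p))$ from \eqref{E:checka_def} and bound each piece uniformly by $\abs{p}^2$ on the compact set $[0,T]\times\ol{\OO}_n\times[g_1,g_2]$. Continuity of $b$, $A$, $\gamma$ and $\chi_n$ immediately gives $\abs{b'p}/\abs{p}^2\to 0$, $\abs{p'Ap}/\abs{p}^2$ bounded by the largest eigenvalue of $A$ on the compact, and $\abs{\chi_n\gamma}/(\alpha\abs{p}^2)\to 0$, so the task reduces to controlling $\chi_n\Hcal(g,p)/\abs{p}^2$ in the limsup.

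Under Assumption \ref{A:complete_mkt}, formula \eqref{E:Hcal_is_ok_1} writes $\Hcal_c(g,p)$ as the sum of (i) a $p$-free piece bounded on $[0,T]\times\ol{\OO}_n\times[g_1,g_2]$, (ii) a linear-in-$p$ piece $-p'a\sigma_e^{-1}(\mu_e-\wt{\gamma}\ell_e)$, and (iii) the quadratic piece $(\alpha/2)p'Ap$; each divided by $\abs{p}^2$ either vanishes or stays bounded by $\lambda_{\max}(A)$, giving a finite limsup. Under Assumption \ref{A:incomplete_mkt} the analogous decomposition of \eqref{E:Hcal_is_ok_2} is handled identically except for the residual $R_{\Hcal}(\sigma_r,g,p)$. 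For this residual I would split on the dichotomy in Assumption \ref{A:incomplete_mkt}: if $\abs{\sigma_r}\equiv 0$ then $R_{\Hcal}\equiv 0$ by Proposition \ref{P:general_hamil_2} and nothing remains, while if $\abs{\sigma_r}>0$ throughout $[0,T]\times\OO$ continuity gives $\abs{\sigma_r}\geq\eps(n)>0$ uniformly on the compact, and the uniform positive definiteness of $\bAfn$ (Assumption \ref{A:region}) forces $\Afn(\sigma_r)$ and $\mbf{K}_6(\sigma_r)$ to be bounded away from $0$ and above there.

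Plugging the explicit forms of $\Dfn(g,p)$ and $\Bfn(p,\sigma_r)$ from \eqref{E:ACE_def}--\eqref{E:BD_def} into \eqref{E:GR_def}, the two ``polynomial'' contributions $\wt{\gamma}\alpha^{-1}\log(\Dfn/(\wt{\gamma}\Cfn))$ and $\Bfn(p,\sigma_r)^2/(2\alpha\Afn(\sigma_r))$ are affine and quadratic in $p$ respectively, so their ratios with $\abs{p}^2$ are at most bounded. The main obstacle is the pair of product-log terms: a-priori the Product-Log function is only logarithmic, but its argument here is exponential in $p$, of the form $K(s,y,g)\e^{z(s,y,g,p)}$ with $K$ bounded above and away from zero on our compact and $z$ affine in $p$. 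Applying Lemma \ref{L:plog_quad} yields the sharp upper estimate $\plog{K\e^z}^2+2\plog{K\e^z}\leq 2K+z^2 = O(\abs{p}^2)$, while the lower bound $\plog{K\e^z}^2+2\plog{K\e^z}\geq 0$ follows from positivity of the Product-Log function on $(0,\infty)$. Dividing by the uniformly positive $\mbf{K}_6(\sigma_r)$ produces a two-sided $O(\abs{p}^2)$ bound on this piece, which together with the earlier estimates completes the required bound on $\abs{\ca^n(g,p)}/\abs{p}^2$.
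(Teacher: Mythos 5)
Your proposal is correct, and in the complete case and the $\abs{\sigma_r}\equiv 0$ case it matches the paper exactly: in both of those cases $\ca^n(g,p)$ is a polynomial of degree at most $2$ in $p$ with coefficients continuous on $[0,T]\times\ol{\OO}_n\times[g_1,g_2]$, which immediately gives the bound.

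Where you diverge is the case $\abs{\sigma_r}>0$. The paper's route is indirect: it invokes Lemma~\ref{L:L_inv_bnd} to show the combined covariance matrix $\Sigma$ is uniformly elliptic on $[0,T]\times\ol{\OO}_n$ once $\abs{\sigma_r}\geq\eps(n)$, and then simply cites the corresponding growth lemma (\emph{Lemma C.3} of \cite{MR4086602}), treating the reduced Hamiltonian \eqref{E:MF_hamil} as a whole. You instead work directly from the explicit residual formula \eqref{E:GR_def}: the log term is affine in $p$, the $\Bfn^2/\Afn$ term is quadratic in $p$ with $\Afn(\sigma_r)$ bounded away from $0$ on the compact, and for the Product-Log terms you correctly identify the argument as $K\e^z$ with $K$ bounded above and below and $z$ affine in $p$, so Lemma~\ref{L:plog_quad} (together with nonnegativity of $\plog{}$ on $(0,\infty)$) yields a two-sided $O(\abs{p}^2)$ bound, and the prefactor $1/\mbf{K}_6(\sigma_r)\leq 1/\Cfn$ is uniformly bounded. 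The two approaches are equivalent in content—the cited Lemma C.3 ultimately rests on the same kind of product-log quadratic estimate—but yours is self-contained in this paper and makes the dependence on $\eps(n)$ explicit, at the cost of unpacking \eqref{E:GR_def} term by term. One small imprecision: in the complete case you note the quadratic piece of $\Hcal_c$ is $(\alpha/2)p'Ap$ and bound it by $\lambda_{\max}(A)$, but it is worth observing (as the paper does) that it partially cancels against the $-(\alpha/2)p'Ap$ from \eqref{E:checka_def}, leaving $-(\alpha/2)(1-\chi_n)p'Ap$; your cruder bound still suffices for the lemma.
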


\begin{proof}[Proof of Lemma \ref{L:p_checka_bnd}]
We will separate our proof based on Assumptions \ref{A:complete_mkt} and \ref{A:incomplete_mkt}, and omit the functional dependence on $(s,y)$. First, under Assumption \eqref{A:complete_mkt}, using \eqref{E:checka_def} and \eqref{E:Hcal_is_ok_1} we see that $\ca^n(g, p)$ is a quadratic function with respect to $p$, with quadratic term $-\frac{\alpha}{2} p' (A - \chi_n \Upsilon_e' \Sigma_e^{-1} \Upsilon_e) p$. Since $\rho = \idmat{d}$ we have $A - \chi_n \Upsilon_e' \Sigma_e^{-1} \Upsilon_e = (1 - \chi_n) A$ and hence
\begin{equation*}
    \limsup_{\abs{p} \uparrow \infty} \sup_{\substack{(s, y) \in [0, T] \times \ol{\OO}_n, \\ g \in [g_1, g_2]}} \frac{\abs{\ca^n(g, p)}}{\abs{p}^2} < \infty.
\end{equation*}
Under Assumption \ref{A:incomplete_mkt} from \eqref{E:Hcal_is_ok_2} we see
\begin{equation*}
    \ca^n(g, p) = b' p - \frac{\alpha}{2} p' A p + \chi_n \left( - \wt{\gamma}g + F + \frac{\alpha}{2}p'\Upsilon_e'\Sigma_e^{-1}\Upsilon_e p - p'\Upsilon_e'\Sigma_e^{-1}(\mu_e-\wt{\gamma}\ell_e) + R_{\Hcal}(\sigma_r,g,p) \right).
\end{equation*}
When $\abs{\sigma_r} \equiv 0$ we  know $R_{\Hcal} = 0$ and hence $\ca^n(g, p)$ is a quadratic function with respect to $p$. Therefore 
\begin{equation*}
    \limsup_{\abs{p} \uparrow \infty} \sup_{\substack{(s, y) \in [0, T] \times \ol{\OO}_n, \\ g \in [g_1, g_2],\ \abs{\sigma_r} \equiv 0}} \frac{\abs{\ca^n(g, p)}}{\abs{p}^2}  < \infty.
\end{equation*}
Alternatively, when $\abs{\sigma}_r(s,y) > 0$ for all $(s,y)$ by the regularity of $(s,y)\to\sigma_r(s,y)$ we know there is an $\eps(n)$ so that $\abs{\sigma_r}\geq \eps(n)$ on $[0,T]\times \ol{\OO}_n$. Therefore, using Lemma \ref{L:L_inv_bnd} we may invoke \textit{Lemma C.3} to obtain 
\begin{equation*}
    C(\eps, n) \dfn  \limsup_{\abs{p} \uparrow \infty} \sup_{\substack{(s, y) \in [0, T] \times \ol{\OO}_n, \\ g \in [g_1, g_2], \abs{\sigma_r} \geq \eps(n)}} \frac{\abs{\ca^n(g, p)}}{\abs{p}^2} < \infty,
\end{equation*}
finishing the result.
\end{proof}



\section{Lemmas for Proposition \ref{P:analytical_unwind}} Throughout, we continue to write all references to \cite{MR4086602} in italics.

\begin{lemma}\label{L:BCD_k_infty_finite_1}
Under Assumption \ref{A:complete_mkt}, for $B^{\infty}_k$, $C^{\infty}_k$, and $D^{\infty}_k$ defined in \eqref{E:ABCD_def}, \eqref{E:ABCD_k_infty}, we have
\begin{equation*}
    B^{\infty}_k < \infty, \quad C^{\infty}_k = 0, \quad D^{\infty}_k < \infty.
\end{equation*}
\end{lemma}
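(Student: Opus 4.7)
The heart of the lemma is the observation that under Assumption \ref{A:complete_mkt}, for $n\geq k+1$ we have $\chi_n\equiv 1$ on $\ol{\OO}_k$, so there $\check{a}^n$ coincides with $\check{a}(g,p)=b'p-\tfrac{\alpha}{2}p'Ap+\gamma/\alpha+\Hcal_c(g,p)$. Substituting $\Hcal_c$ from \eqref{E:Hcal_is_ok_1}, the quadratic term $-\tfrac{\alpha}{2}p'Ap$ cancels exactly against the $+\tfrac{\alpha}{2}p'Ap$ inside $\Hcal_c$, leaving the affine-in-$p$ expression
\[
\check{a}(g,p) = \bigl(b-a\sigma_e^{-1}(\mu_e-\wt{\gamma}\ell_e)\bigr)'p + \tfrac{1}{\alpha}Q_c + \wt{\gamma}(\psi-g).
\]
This is the only substantive input: in the complete-market case the Hamiltonian fully absorbs the Bernstein-style quadratic gradient term, so $\check{a}$ has no quadratic-in-$p$ component whatsoever.

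Given this, the plan is to read off the asymptotics as $|p|\to\infty$ directly from \eqref{E:ABCD_def}, using the ellipticity of $A$ on $\ol{\OO}_k$ (which gives $\Ecal(s,y,p)\geq\lambda_k|p|^2$) together with the fact that $A$ depends only on $(s,y)$. Since $\nabla_pA^{ij}=0$, we have $\ol{\delta}(p)[A^{ij}]=0$, so the first term in $A_k$ vanishes and $A_k\equiv 1$ (matching the value used in the proof of Proposition \ref{P:analytical_unwind}). Using $\nabla_p\check{a}=b-a\sigma_e^{-1}(\mu_e-\wt{\gamma}\ell_e)$, which is bounded on $[0,T]\times\ol{\OO}_k$, I get $\ol{\delta}(p)[\check{a}]=O(|p|)$ and $(\ol{\delta}(p)-1)[\check{a}]=-\tfrac{1}{\alpha}Q_c-\wt{\gamma}(\psi-g)$, bounded uniformly in $g\in[-C(k),C(k)]$. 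For the $x$-derivatives, $\nabla_x\check{a}$ is affine in $p$, $\nabla_x\Ecal$ is quadratic in $p$, and $\nabla_xA^{ij}$ is independent of $p$, hence $\delta(p)[\check{a}]=O(1)$, $\delta(p)[\Ecal]=O(|p|)$, and $\delta(p)[A^{ij}]=O(|p|^{-1})$, all uniformly on the compact set $[0,T]\times\ol{\OO}_k\times[-C(k),C(k)]$.

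Plugging these estimates into \eqref{E:ABCD_def} and dividing by $\Ecal\geq\lambda_k|p|^2$ yields, uniformly as $|p|\to\infty$,
\[
B_k=O(|p|^{-1}),\qquad C_k=O(|p|^{-2}),\qquad D_k=O(1),
\]
from which $B_k^\infty<\infty$, $C_k^\infty=0$ and $D_k^\infty<\infty$ follow immediately from \eqref{E:ABCD_k_infty}. The argument is pure bookkeeping and there is no real obstacle, because the complete-market cancellation has already eliminated the quadratic-gradient nonlinearity. The delicate companion statement is the analogous lemma in the incomplete case, where the residual $R_{\Hcal}(\sigma_r,g,p)$ does not vanish and one must either invoke uniform ellipticity of $\Sigma$ via Lemma \ref{L:L_inv_bnd} (when $|\sigma_r|>0$) or reduce to the trivial case $\sigma_r\equiv 0$; that burden is carried by the next lemma rather than by the cancellation used here.
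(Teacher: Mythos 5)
Your proof is correct and follows essentially the same route as the paper: you identify that for $n \geq k+1$ the cutoff $\chi_n \equiv 1$ on $\ol{\OO}_k$, substitute $\Hcal_c$ from \eqref{E:Hcal_is_ok_1} to see that $\check{a}$ becomes affine in $p$ (the paper expresses the same cancellation as $-\tfrac{\alpha}{2}p'(A - \Upsilon_e'\Sigma_e^{-1}\Upsilon_e)p = 0$ since $\rho = \idmat{d}$), and then read off the same order-of-magnitude estimates for $\delta(p)[\Ecal]$, $(\ol{\delta}(p)-1)[\check{a}]$, $\delta(p)[\check{a}]$, $\delta(p)[A^{ij}]$, $\nabla_p\Ecal$, $\nabla_p\check{a}$, dividing by $\Ecal \geq \lambda_k|p|^2$.
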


\begin{proof}[Proof of Lemma \ref{L:BCD_k_infty_finite_1}] Recall that for $n\geq k+1$ we have $\chi_n = 1$ on $\ol{\OO}_k$ and hence we write $\check{a}$ for $\check{a}^n$ in \eqref{E:checka_def}. Next, under Assumption \ref{A:complete_mkt} we know from  \eqref{E:Hcal_is_ok_1} that 
\begin{equation}\label{E:check_a_comp_0}
    \ca(g, p) = \bigl( b -\Upsilon_e'\Sigma_e^{-1}(\mu_e-\wt{\gamma}\ell_e)\bigr)'p  - \wt{\gamma}g + F.
\end{equation}
(the quadratic term $-(\alpha/2)p' (A - \Upsilon_e' \Sigma_e^{-1} \Upsilon_e) p$ vanishes as $\rho = \idmat{d}$). Therefore, 
\begin{equation*}
    \begin{split}
    \nabla_p \Ecal &= Ap;\quad (\ol{\delta}(p)-1)[\Ecal] = \Ecal;\quad \delta(p)[\Ecal] = \frac{1}{2\abs{p}^2} p'\bigg(\sum_{i,j=1}^d p_i p_j \nabla_x A^{ij}\bigg),\\
    \nabla_p \ca(g, p) &= b - \Upsilon_e' \Sigma_e^{-1} \left(\mu_e - \wt{\gamma}\ell_e\right);\quad \bigl( \ol{\delta}(p) - 1 \bigr) [\ca](g, p) =  \wt{\gamma}g - F,\\
    \delta(p) [\ca](g, p) &= -\wt{\gamma} + \frac{1}{\abs{p}^2} p'\bigg(\sum_{i = 1}^{d} p^i \nabla_x \bigl( b -\Upsilon_e'\Sigma_e^{-1}(\mu_e-\wt{\gamma}\ell_e)\bigr)^i -g\nabla_x \wt{\gamma} + \nabla_x F\bigg).
    \end{split}
\end{equation*}
As $\delta(p) [\Ecal]$ is on the order of $\abs{p}$ and $\bigl( \ol{\delta}(p) - 1 \bigr) [\ca]$ is on the order of $1$, $B_k^\infty < \infty$ (in fact $B^k_{\infty} = 0$). As $\delta(p)[A^{ij}]$ is on the order of $1/\abs{p}$ and $\delta(p) [\ca]$ is on the order of $1$, $C_k^\infty = 0$. Finally, as $\Lambda_k p' p$ is on the order of $\abs{p}^2$ and $\abs{p} \left( \abs{\nabla_p \Ecal} + \abs{\nabla_p \ca} \right)$ is on the order of $\abs{p}^2$, $D_k^\infty < \infty$. This finishes the result.
\end{proof}


\begin{lemma}\label{L:BD_k_infty_finite_2}
Under Assumption \ref{A:incomplete_mkt}, for $B^{\infty}_k$ and $D^{\infty}_k$ defined in \eqref{E:ABCD_k_infty}, we have
\begin{equation*}
    B^{\infty}_k < \infty,\quad C^{\infty}_k = 0,\quad D^{\infty}_k < \infty.
\end{equation*}
\end{lemma}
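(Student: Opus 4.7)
The plan is to proceed in parallel with Lemma \ref{L:BCD_k_infty_finite_1}, using the fact that for $n\geq k+1$ we have $\chi_n=1$ on $\ol{\OO}_k$, so $\check{a}^n$ coincides with the global $\check{a}$ obtained from \eqref{E:checka_def} and \eqref{E:Hcal_is_ok_2}. The natural split is the same dichotomy imposed by Assumption \ref{A:incomplete_mkt}: either $\abs{\sigma_r}\equiv 0$ on $[0,T]\times\OO$, or $\abs{\sigma_r}>0$ on $[0,T]\times\OO$.

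First I would address the case $\abs{\sigma_r}\equiv 0$. Here Proposition \ref{P:general_hamil_2} gives $R_{\Hcal}\equiv 0$, so $\check{a}(g,p)$ reduces to a polynomial which is affine in $g$ and quadratic in $p$, with coefficients of class $C^{(1,1)}$ in $(s,y)$ on $\ol{\OO}_k$ thanks to Assumptions \ref{A:region}, \ref{A:SE_reg} and \ref{A:alt_gamma}. The required bounds then follow by direct differentiation exactly as in the proof of Lemma \ref{L:BCD_k_infty_finite_1}: $(\ol{\delta}(p)-1)[\check{a}]$ is at worst $O(\abs{p}^2)$, so $B_k$ is bounded; $\delta(p)[A^{ij}]$ is $O(\abs{p}^{-1})$ and $\delta(p)[\check{a}]$ is $O(\abs{p})$, so $C_k=O(\abs{p}^{-1})$ and $C_k^{\infty}=0$; and $\abs{\nabla_p\check{a}}=O(\abs{p})$, giving $D_k$ bounded after division by $\Ecal\sim\abs{p}^2$.

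Next I would address the case $\abs{\sigma_r}>0$. Continuity of $\sigma_r$ (guaranteed by \eqref{E:sigR_def}, Proposition \ref{P:cds_dynamics}, and the smoothness of $\wt{u},\wt{v}$) together with compactness of $[0,T]\times\ol{\OO}_k$ produces $\eps(k)>0$ with $\abs{\sigma_r}\geq\eps(k)$ there. Lemma \ref{L:L_inv_bnd} then gives locally uniform ellipticity of $\Sigma$ on this region, allowing me to write the reduced Hamiltonian in its unified form \eqref{E:MF_hamil} rather than via the piecewise expression leading to \eqref{E:GR_def_new}. In this non-degenerate setting the entire calculation reduces to a bounded, smooth perturbation of the one carried out in \textit{Lemma C.3} of \cite{MR4086602} for the no-CDS model, since the extra CDS asset simply augments the matrices $\Sigma,\Upsilon,\mu,\ell$ in \eqref{E:SigE_UpsE}--\eqref{E:coeffs} by quantities which are bounded and $C^{(1,1)}$ on $[0,T]\times\ol{\OO}_k$. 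Invoking that lemma directly on the set $\abs{\sigma_r}\geq\eps(k)$ would then yield the three desired bounds.

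The main technical concern will be verifying that the product-log terms hidden in \eqref{E:MF_hamil} (via $\plog{\cdot}$) do not contribute super-quadratic growth in $p$ once derivatives of $\check{a}$ are taken. This is precisely where Lemma \ref{L:plog_quad}---which bounds $\plog{Ke^z}^2+2\plog{Ke^z}$ by $2K+z^2$---together with the differentiation identity \eqref{E:plog_deriv} are the key tools: the former ensures the Hamiltonian has at most quadratic growth in $p$, while the latter shows that $\partial_z\plog{\cdots e^z}$ stays bounded by $1$, controlling the various $\nabla_p,\delta(p),\ol{\delta}(p)$ operators appearing in \eqref{E:ABCD_def}. Once these growth estimates are in hand, the argument closes in exactly the same fashion as Lemma \ref{L:BCD_k_infty_finite_1}, finishing the result.
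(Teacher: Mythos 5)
Your two-case decomposition ($\abs{\sigma_r}\equiv 0$ versus $\abs{\sigma_r}>0$) is exactly the paper's, and your treatment of the degenerate case is correct: with $R_{\Hcal}\equiv 0$, $\check a$ is affine in $g$ and quadratic in $p$, and the order-counting you perform for $B_k$, $C_k$, $D_k$ matches the paper's computation term by term. Your discussion of the product-log function (Lemma \ref{L:plog_quad} controlling quadratic growth, \eqref{E:plog_deriv} controlling the derivatives entering $\delta(p)$, $\ol{\delta}(p)$, $\nabla_p$) also correctly anticipates the technical ingredients needed for the non-degenerate case.

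The one concrete slip is the citation in the second case: after establishing $\abs{\sigma_r}\geq\eps(k)$ on $[0,T]\times\ol{\OO}_k$ and invoking Lemma \ref{L:L_inv_bnd} for local ellipticity of $\Sigma$, you propose to ``invoke \textit{Lemma C.3} of \cite{MR4086602}.'' That lemma is the analogue of Lemma \ref{L:p_checka_bnd}: it only gives the bound $\abs{\check a(g,p)}=O(\abs{p}^2)$, which feeds the Lieberman existence theorem but does \emph{not} by itself control $B_k$, $C_k$, $D_k$ --- those require bounds on $\delta(p)[\check a]$, $\ol\delta(p)[\check a]$, $\nabla_p\check a$, not just on $\check a$ itself. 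The paper invokes \textit{Lemma C.5} of \cite{MR4086602}, which is the analogue of the present lemma and is precisely where Lemma \ref{L:plog_quad} and \eqref{E:plog_deriv} are brought to bear. So your final paragraph correctly previews the substance of the argument, but it is the substance of Lemma C.5, not C.3; citing C.3 would leave the proof incomplete.
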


\begin{proof}[Proof of Lemma \ref{L:BD_k_infty_finite_2}]
We start by considering when  $\abs{\sigma_r} = 0$.  Here, from \eqref{E:Hcal_is_ok_2} we find
\begin{equation*}
    \check{a} = \left(b-\Upsilon_e'\Sigma_e^{-1}(\mu_e - \wt{\gamma}\ell_e)\right)'p - \frac{\alpha}{2}p'\left(A-\Upsilon_e'\Sigma_e^{-1}\Upsilon_e\right)p  -\wt{\gamma} g + F.
\end{equation*}
This is very similar to \eqref{E:check_a_comp_0} and direct computation shows
\begin{equation*}
    \begin{split}
    \nabla_p \ca(g, p) &= b - \Upsilon_e' \Sigma_e^{-1} \left(\mu_e - \wt{\gamma}\ell_e\right) - \alpha \left(A-\Upsilon_e'\Sigma_e^{-1}\Upsilon_e\right)p,\\
    \bigl( \ol{\delta}(p) - 1 \bigr) [\ca](g, p) &=  -\frac{\alpha}{2}p'\left(A-\Upsilon_e'\Sigma_e^{-1}\Upsilon_e\right)p + \wt{\gamma}g - F,\\
    \delta(p) [\ca](g, p) &= -\wt{\gamma} + \frac{1}{\abs{p}^2} p'\bigg(\sum_{i = 1}^{d} p^i \nabla_x \bigl( b -\Upsilon_e'\Sigma_e^{-1}(\mu_e-\wt{\gamma}\ell_e)\bigr)^i \\
    &\qquad\qquad - \frac{\alpha}{2}\sum_{i,j=1}^d p^ip^j \nabla_x \left(A-\Upsilon_e'\Sigma_e^{-1}\Upsilon_e\right)^{ij} -g\nabla_x \wt{\gamma} + \nabla_x F\bigg).
    \end{split}
\end{equation*}
Just like in Lemma \ref{L:BCD_k_infty_finite_1}, as $\delta(p) [\Ecal]$ is on the order of $\abs{p}$ and $\bigl( \ol{\delta}(p) - 1 \bigr) [\ca]$ is on the order of $\abs{p}^2$, $B_k^{\infty} < \infty$. As $\delta(p)[A^{ij}]$ is on the order of $1/\abs{p}$ and $\delta(p) [\ca]$ is on the order of $\abs{p}$, $C_k^{\infty} = 0$. Finally, as $\Lambda_k p' p$ is on the order of $\abs{p}^2$ and $\abs{p} \left( \abs{\nabla_p \Ecal} + \abs{\nabla_p \ca} \right)$ is on the order of $\abs{p}^2$, $D_k^{\infty} < \infty$. This finishes the result restricting to $\abs{\sigma_r} = 0$. 

Next, when $\abs{\sigma}_r(s,y) > 0$ for all $(s,y)$, just like above, the regularity of $(s,y)\to\sigma_r(s,y)$ yields an $\eps(k)$ so that $\abs{\sigma_r}\geq \eps(k)$ on $[0,T]\times \ol{\OO}_k$. Therefore, using Lemma \ref{L:L_inv_bnd} we may invoke \textit{Lemma C.5} to obtain $B^{\infty}_k <\infty$, $C^{\infty}_k = 0$ and $D^{\infty}_k < \infty$.  This finishes the result.

\end{proof}



\section{Lemma for Proposition \ref{P:unwind_prob}}\label{AS:PropE9Lem}

\begin{lemma}\label{L:long_cond_exp}
\begin{equation*}
    \begin{split}
        &\condexpvs{\frac{\ol{Z}^n_{b_n}}{\ol{Z}^n_{a_n}} \log \left( \frac{\ol{Z}^n_{b_n}}{\ol{Z}^n_{a_n}} \right)}{\G^n_{a_n}} = 1_{\tau^n > \wh{a}_n} \wt{\mathbb{E}}\bigg[\int_{\wh{a}_n}^{\wh{b}_n} e^{-\int_{\wh{a}_n}^u (\chi_n\wt{\gamma})(v,X_v)dv}\bigg(\frac{1}{2}\abs{\mbf{\ol{A}}_u}^2\\
        &\qquad\qquad + \frac{1}{2}\abs{\ol{\mbf{B}}^{n}_u}^2 + \bigg(\chi_n\wt{\gamma}\bigg(\frac{\gamma}{\wt{\gamma}} - 1 - \log\bigg(\frac{\gamma}{\wt{\gamma}}\bigg)\bigg)\bigg)(u,X_u)\bigg)du\ \bigg| \ \F^W_{\wh{a}_n}\bigg].
    \end{split}
\end{equation*}
\end{lemma}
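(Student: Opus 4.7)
The plan is to first dispense with the trivial case.  On $\{\tau^n \leq \wh{a}_n\}$ we have $a_n = b_n = \tau^n$, so the ratio $\ol{Z}^n_{b_n}/\ol{Z}^n_{a_n} \equiv 1$ and the left hand side vanishes, matching the factor $1_{\tau^n > \wh{a}_n}$ on the right.  It therefore suffices to restrict attention to the event $\{\tau^n > \wh{a}_n\}$, on which $a_n = \wh{a}_n$ and $b_n = \tau^n \wedge \wh{b}_n$, and to evaluate the conditional expectation there.

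On this event I would factor the density ratio as in \eqref{E:bZ_ABD_n} into a product of three stochastic exponentials driven by $W$, $B$ and $M$ (recall $\ol{\mbf{D}}^n \equiv 0$), and then invoke the elementary identity
\begin{equation*}
    \E\bra{N\log N \bigsuch \G^n_{a_n}} = \E^{\ol{\qprob}^n}\bra{\log N \bigsuch \G^n_{a_n}}
\end{equation*}
valid for any positive $\prob$-martingale $N$ on $[a_n,b_n]$ with $N_{a_n}=1$.  Under $\ol{\qprob}^n$ we have $dW_u = d\wt{W}_u + \ol{\mbf{A}}_u du$ and $dB_u = d\wt{B}_u + \ol{\mbf{B}}^n_u du$, while $\tau^n$ acquires $\filtg^n$-intensity $\chi_n \wt{\gamma}$ because $(1+\ol{\mbf{C}})\gamma = \wt{\gamma}$; in particular, since $\ol{\mbf{A}}=-\tprobnu$, $X$ satisfies the $\tprob$-dynamics \eqref{E:tprob_SDE} on $[\wh{a}_n,\wh{b}_n]$, which will let me replace $\E^{\ol{\qprob}^n}[\,\cdot\,|\F^W_{\wh{a}_n}]$ by $\wt{\E}[\,\cdot\,|\F^W_{\wh{a}_n}]$ on integrands depending only on $X$.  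Taking logarithms of the two Brownian stochastic exponentials and passing to $\ol{\qprob}^n$-compensators produces, on $[\wh{a}_n, b_n]$, the non-negative integrands $\tfrac{1}{2}|\ol{\mbf{A}}|^2$ and $\tfrac{1}{2}|\ol{\mbf{B}}^n|^2$; restricting the time integral to $\{u\leq \tau^n\}$ and integrating against the $\ol{\qprob}^n$-survival density automatically inserts the factor $e^{-\int_{\wh{a}_n}^u \chi_n\wt{\gamma}(v,X_v)dv}$.

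The main obstacle will be handling the jump contribution from $\Ecal(\int \ol{\mbf{C}}\,dM)$.  Here I would expand
\begin{equation*}
    \log \Ecal\pare{\int_{\wh{a}_n}^{\cdot} \ol{\mbf{C}}_u dM_u}_{b_n} = -\int_{\wh{a}_n}^{b_n}(\wt{\gamma}-\gamma)\chi_n\, du + 1_{\tau^n\leq\wh{b}_n}\log\pare{\frac{\wt{\gamma}}{\gamma}}(\tau^n,X_{\tau^n}),
\end{equation*}
observe that the second summand has $\ol{\qprob}^n$-compensator $1_{u\leq\tau^n}\chi_n\wt{\gamma}\log(\wt{\gamma}/\gamma)(u,X_u)\,du$, and combine with the drift piece to obtain the integrand $\chi_n\bra{(\gamma-\wt{\gamma})+\wt{\gamma}\log(\wt{\gamma}/\gamma)} = \chi_n\wt{\gamma}\pare{\gamma/\wt{\gamma}-1-\log(\gamma/\wt{\gamma})}$, which is exactly the familiar non-negative relative-entropy density between two point processes of intensities $\gamma$ and $\wt{\gamma}$.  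Combining this with the two Brownian contributions, all weighted by the common survival factor, and applying the Markovian replacement noted above, yields the claimed identity.
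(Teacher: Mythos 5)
Your proposal is correct, and it takes a genuinely different route from the paper's proof.

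The paper proceeds by brute-force computation via the test-function method: it multiplies by arbitrary $A_{\wh{a}_n}\in\F^{W,B,\wh W}_{\wh{a}_n}$ and $\varphi((H^n)^{\wh{a}_n})$, splits the expectation over $\cbra{\tau^n > \wh{b}_n}$ and $\cbra{\wh{a}_n < \tau^n \leq \wh{b}_n}$ into $\mbf{Q}_1$ and $\mbf{Q}_2$, handles the $B$-part by the explicit Gaussian conditional expectation, then collects everything into $\mbf{Q}_3$ and uses an integration by parts in $u$ (with $D_u = e^{-\int_{\wh{a}_n}^u(\chi_n\wt\gamma)_v dv}$) to cancel the stochastic-integral term $\int \mbf{\ol A}_u'd\wt W_u$ and extract the final integrand. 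You instead apply the abstract Bayes identity $\E\bigl[N\log N\,|\,\G^n_{a_n}\bigr]=\E^{\ol\qprob^n}\bigl[\log N\,|\,\G^n_{a_n}\bigr]$ for the conditional density $N = \ol Z^n_{b_n}/\ol Z^n_{a_n}$, then expand $\log N$ and compensate each piece under $\ol\qprob^n$. The martingale terms (in $\wt W$, $\wt B$, and the compensated jump) drop out by optional stopping — which replaces the paper's integration-by-parts cancellation — and the surviving compensators $\tfrac12|\ol{\mbf A}|^2$, $\tfrac12|\ol{\mbf B}^n|^2$, and $\chi_n\wt\gamma\bigl(\gamma/\wt\gamma -1-\log(\gamma/\wt\gamma)\bigr)$ are immediately recognizable as the relative-entropy rates of the Gaussian and point-process parts. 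Your approach is cleaner and more conceptual: it avoids the explicit case-split into $\mbf{Q}_1,\mbf{Q}_2$ and the test-function bookkeeping, and it makes the structure of the answer (Kullback--Leibler densities weighted by the $\ol\qprob^n$-survival factor) transparent. Two details you should make explicit when writing it up: (i) the Bayes identity requires $\ol Z^n_{\wh a_n,\cdot}$ to be a true $\prob$-martingale on $[\wh a_n, b_n]$, which holds because $\ol{\mbf A},\ol{\mbf B}^n,\ol{\mbf C}$ are bounded on $[\wh a_n,\wh b_n]$ since $X$ remains in $\ol{\OO}_n$; and (ii) the replacement $\E^{\ol\qprob^n}[\,\cdot\,|\F^W_{\wh a_n}]\to\wt\E[\,\cdot\,|\F^W_{\wh a_n}]$ on $\F^W$-measurable integrands requires that the $W$-marginal of $\ol\qprob^n$ coincide with that of $\tprob$, which follows since $\E\bigl[\ol Z^n_{\wh b_n}\,|\,\F^W_{\wh b_n}\bigr]=\ol Z^{\mbf A}_{\wh b_n}$ (the $B$- and $M$-exponentials have $\F^W$-conditional expectation one, the latter by the computation in \eqref{E:jump_part_good}).
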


\begin{proof}
Throughout this proof, we write $Z_{a,b} = Z_b/Z_a$ for any strictly positive process $Z$. First, on the set $\{ \tau^n \leq \wh{a}_n\}$, $a_n = b_n$ and $\ol{Z}^n_{a_n} = \ol{Z}^n_{b_n}$, which implies $\ol{Z}^n_{a_n,b_n} = \ol{Z}_{a_n,a_n} = 1$ and hence 
\begin{equation*}
    \condexpvs{ \ol{Z}^n_{a_n,b_n}\log \left( \ol{Z}^n_{a_n,b_n}\right)}{\G^n_{a_n}} = 0,\quad \left(\textrm{ on } \{ \tau^n \leq \wh{a}_n\}\right).
\end{equation*}
Therefore, from now on, we restrict to $\{ \tau^n > \wh{a}_n\}$ where  $a_n = \tau^n \wedge \wh{a}_n = \wh{a}_n$ and hence $\ol{Z}^n_{a_n,b_n} = \ol{Z}^n_{\wh{a}_n,b_n}$. Using \eqref{E:bZ_ABD_n} we have
\begin{equation*}
    \ol{Z}^n_{\wh{a}_n,b_n} = \ol{Z}^{\mbf{A}}_{\wh{a}_n,b_n} \times \ol{Z}^{\mbf{B},n}_{\wh{a}_n,b_n}\times \ol{Z}^{\mbf{C},n}_{\wh{a}_n,b_n}.
\end{equation*}
From \cite[Chapter 1]{MR1121940}, we may replace $\G^n_{a_n}$ with $\G^n_{\wh{a}_n}$ as we are restricting to $\cbra{\tau^n > \wh{a}_n}$.  Lastly, we write $Y^{a}$ for any process $Y$ stopped at any random time $a$.  Given all of this, let $A_{\wh{a}_n} \in \F^{W,B,\wh{W}}_{\wh{a}_n}$ and let $\varphi$ any function of the path $H^n$.  First, we have
\begin{equation*}
    \begin{split}
        &\expvs{\ol{Z}^n_{\wh{a}_n,b_n}\log\left(\ol{Z}^n_{\wh{a}_n,b_n}\right)1_{\tau^n > \wh{a}_n} 1_{A_{\wh{a}_n}} \varphi((H^n)^{\wh{a}_n})}\\
        &\qquad\qquad = \expvs{\ol{Z}^{\mbf{A}}_{\wh{a}_n,b_n} \ol{Z}^{\mbf{B},n}_{\wh{a}_n,b_n} \ol{Z}^{\mbf{C},n}_{\wh{a}_n,b_n}\log\left(\ol{Z}^{\mbf{A}}_{\wh{a}_n,b_n}  \ol{Z}^{\mbf{B},n}_{\wh{a}_n,b_n} \ol{Z}^{\mbf{C},n}_{\wh{a}_n,b_n}\right) 1_{A_{\wh{a}_n}} \varphi((0)^{\wh{a}_n})\left(1_{\tau^n > \wh{b}_n} + 1_{\wh{b}_n\geq \tau^n > \wh{a}_n}\right)}.
    \end{split}
\end{equation*}
Let us focus on the term with $\tau^n > \wh{b}^n$ (so that $b_n = \wh{b}_n$), labeling it $\mbf{Q}_1$.  Here, (omitting function arguments so e.g.$f(v,X_v)$ is written $f_v$ and defining $\Hcal_n = \F^{W,B,\wh{W}}_{\wh{a}_n} \vee \F^{W}_{\wh{b}_n}\vee \sigma(\tau^n)$)
\begin{equation*}
    \begin{split}
        \mbf{Q}_1 &= \E\Bigg[1_{A_{\wh{a}_n}} \varphi((0)^{\wh{a}_n}) 1_{\tau^n > \wh{b}_n} \ol{Z}^{\mbf{A}}_{\wh{a}_n,\wh{b}_n} e^{-\int_{\wh{a}_n}^{\wh{b}_n} (\chi_n(\wt{\gamma}-\gamma))_vdv}\bigg(\bigg(\log\big(\ol{Z}^{\mbf{A}}_{\wh{a}_n,\wh{b}_n}\big) - \int_{\wh{a}_n}^{\wh{b}_n} (\chi_n(\wt{\gamma}-\gamma))_u du\bigg)\\
        &\qquad \times \condexpvs{\ol{Z}^{\mbf{B},n}_{\wh{a}_n,b_n}}{\Hcal_n} + \condexpvs{\ol{Z}^{\mbf{B},n}_{\wh{a}_n,b_n}\log\big(\ol{Z}^{\mbf{B},n}_{\wh{a}_n,b_n}\big)}{\Hcal_n}\bigg)\Bigg],\\
        &= \E\Bigg[1_{A_{\wh{a}_n}} \varphi((0)^{\wh{a}_n}) 1_{\tau^n > \wh{b}_n} \ol{Z}^{\mbf{A}}_{\wh{a}_n,\wh{b}_n} e^{-\int_{\wh{a}_n}^{\wh{b}_n} (\chi_n(\wt{\gamma}-\gamma))_v dv}\bigg(\int_{\wh{a}_n}^{\wh{b}_n} \mbf{\ol{A}}_u'd\wt{W}_u + \frac{1}{2}\int_{\wh{a}_n}^{\wh{b}_n} \abs{\mbf{\ol{A}}_u}^2 du\\
        &\qquad - \int_{\wh{a}_n}^{\wh{b}_n} (\chi_n(\wt{\gamma}-\gamma))_u du  + \frac{1}{2}\int_{\wh{a}_n}^{\wh{b}_n} \abs{\mbf{\ol{B}}^n_u}^2 du\bigg)\Bigg],\\
        &= \E\Bigg[1_{A_{\wh{a}_n}} \varphi((0)^{\wh{a}_n}) e^{-\int_t^{\wh{a}_n} (\chi_n\gamma)_v dv} \E\bigg[\ol{Z}^{\mbf{A}}_{\wh{a}_n,\wh{b}_n} e^{-\int_{\wh{a}_n}^{\wh{b}_n} (\chi_n\wt{\gamma})_v dv}\bigg(\int_{\wh{a}_n}^{\wh{b}_n} \mbf{\ol{A}}_u'd\wt{W}_u + \frac{1}{2}\int_{\wh{a}_n}^{\wh{b}_n} \abs{\mbf{\ol{A}}_u}^2 du\\
        &\qquad - \int_{\wh{a}_n}^{\wh{b}_n} (\chi_n(\wt{\gamma}-\gamma))_u du  + \frac{1}{2}\int_{\wh{a}_n}^{\wh{b}_n} \abs{\mbf{\ol{B}}^n_u}^2 du\bigg)\ \bigg| \ \F^W_{\wh{a}_n} \bigg]\Bigg],\\
    \end{split}
\end{equation*}
Above, to obtain the third equality we first conditioned on $\F^W_{\wh{b}_n}$ and used the intensity function of $\tau^n$ under $\prob$, and then we conditioned on $\F^W_{\wh{a}_n}$.  Next, we focus on the term with $\wh{a}_n \leq \tau^n < \wh{b}_n$, labeling it $\mbf{Q}_2$.
\begin{equation*}
    \begin{split}
        \mbf{Q}_2 &= \E\Bigg[1_{A_{\wh{a}_n}} \varphi((0)^{\wh{a}_n}) 1_{\wh{a}_n\leq \tau^n < \wh{b}_n} \ol{Z}^{\mbf{A}}_{\wh{a}_n,\tau^n} e^{-\int_{\wh{a}_n}^{\tau^n} (\chi_n(\wt{\gamma}-\gamma))_vdv}\left(\frac{\wt{\gamma}}{\gamma}\right)_{\tau^n}\bigg(\bigg(\log\big(\ol{Z}^{\mbf{A}}_{\wh{a}_n,\tau^n}\big)\\
        &\qquad - \int_{\wh{a}_n}^{\tau^n} (\chi_n(\wt{\gamma}-\gamma))_u du + \left(\frac{\wt{\gamma}}{\gamma}\right)_{\tau^n}\bigg)\condexpvs{\ol{Z}^{\mbf{B},n}_{\wh{a}_n,b_n}}{\Hcal_n} + \condexpvs{\ol{Z}^{\mbf{B},n}_{\wh{a}_n,b_n}\log\big(\ol{Z}^{\mbf{B},n}_{\wh{a}_n,b_n}\big)}{\Hcal_n}\bigg)\Bigg],\\
        &= \E\Bigg[1_{A_{\wh{a}_n}} \varphi((0)^{\wh{a}_n}) 1_{\wh{a}_n\leq \tau^n < \wh{b}_n} \ol{Z}^{\mbf{A}}_{\wh{a}_n,\tau^n} e^{-\int_{\wh{a}_n}^{\tau^n} (\chi_n(\wt{\gamma}-\gamma))_vdv}\left(\frac{\wt{\gamma}}{\gamma}\right)_{\tau^n}\bigg(\int_{\wh{a}_n}^{\tau^n} \mbf{\ol{A}}_u'd\wt{W}_u\\
        &\qquad + \frac{1}{2}\int_{\wh{a}_n}^{\tau^n} \abs{\mbf{\ol{A}}_u}^2 du - \int_{\wh{a}_n}^{\tau^n} (\chi_n(\wt{\gamma}-\gamma))_u du + \left(\frac{\wt{\gamma}}{\gamma}\right)_{\tau^n} + \frac{1}{2}\int_{\wh{a}_n}^{\tau^n} \abs{\mbf{\ol{B}}^n_u}^2 du\bigg)\Bigg],\\
        &= \E\Bigg[1_{A_{\wh{a}_n}} \varphi((0)^{\wh{a}_n})e^{-\int_t^{\wh{a}_n}(\chi_n\gamma)_v dv}\E\bigg[\bigg( \int_{\wh{a}_n}^{\wh{b}_n} \ol{Z}^{\mbf{A}}_{\wh{a}_n,u} (\chi_n\wt{\gamma})_u e^{-\int_{\wh{a}_n}^{u} (\chi_n\wt{\gamma})_vdv}\bigg(\int_{\wh{a}_n}^{u} \mbf{\ol{A}}_v'd\wt{W}_v\\
        &\qquad + \frac{1}{2}\int_{\wh{a}_n}^{u} \abs{\mbf{\ol{A}}_v}^2 dv  - \int_{\wh{a}_n}^{u} (\chi_n(\wt{\gamma}-\gamma))_v dv + \left(\frac{\wt{\gamma}}{\gamma}\right)_{u} + \frac{1}{2}\int_{\wh{a}_n}^{u} \abs{\mbf{\ol{B}}^n_v}^2 dv\bigg)du\bigg)\ \bigg| \ \F^W_{\wh{a}_n}\bigg]\Bigg].
    \end{split}
\end{equation*}
For a generic $\F^W_{\wh{a}_n}$ measurable random variable $Y_n$ one has
\begin{equation*}
    \E\Bigg[Y_n 1_{\tau^n>\wh{a}_n}1_{A_{\wh{a}_n}}\varphi((H^n)^{\wh{a}_n}\Bigg] = \E\Bigg[1_{A_{\wh{a}_n}}\varphi((0)^{\wh{a}_n})e^{-\int_t^{\wh{a}_n}(\chi_n\gamma)_v dv} Y_n\Bigg].
\end{equation*}
This shows that
\begin{equation*}
    \condexpvs{\ol{Z}^n_{\wh{a}_n,b_n}\log\left(\ol{Z}^n_{\wh{a}_n,b_n}\right)1_{\tau^n > \wh{a}_n}}{\G^n_{\wh{a}_n}} = 1_{\tau^n>\wh{a}_n} \times \mbf{Q}_3,
\end{equation*}
where
\begin{equation*}
    \begin{split}
         \mbf{Q}_3 &=  \E\bigg[\ol{Z}^{\mbf{A}}_{\wh{a}_n,\wh{b}_n} e^{-\int_{\wh{a}_n}^{\wh{b}_n} (\chi_n\wt{\gamma})_v dv}\bigg(\int_{\wh{a}_n}^{\wh{b}_n} \mbf{\ol{A}}_u'd\wt{W}_u + \frac{1}{2}\int_{\wh{a}_n}^{\wh{b}_n} \abs{\mbf{\ol{A}}_u}^2 du \\
        &\quad - \int_{\wh{a}_n}^{\wh{b}_n} (\chi_n(\wt{\gamma}-\gamma))_u du  + \frac{1}{2}\int_{\wh{a}_n}^{\wh{b}_n} \abs{\mbf{\ol{B}}^n_u}^2 du\bigg) + \int_{\wh{a}_n}^{\wh{b}_n} \ol{Z}^{\mbf{A}}_{\wh{a}_n,u} (\chi_n\wt{\gamma})_u e^{-\int_{\wh{a}_n}^{u} (\chi_n\wt{\gamma})_vdv}\\
        &\quad \times \bigg(\int_{\wh{a}_n}^{u} \mbf{\ol{A}}_v'd\wt{W}_v + \frac{1}{2}\int_{\wh{a}_n}^{u} \abs{\mbf{\ol{A}}_v}^2 dv  - \int_{\wh{a}_n}^{u} (\chi_n(\wt{\gamma}-\gamma))_v dv + \left(\frac{\wt{\gamma}}{\gamma}\right)_{u} + \frac{1}{2}\int_{\wh{a}_n}^{u} \abs{\mbf{\ol{B}}^n_v}^2 dv\bigg)du \ \bigg| \ \F^W_{\wh{a}_n}\bigg]. 
    \end{split}
\end{equation*}
For any $\filt^{W}$ adapted process $R$ one can show
\begin{equation*}
    \condexpvs{\int_{\wh{a}_n}^{\wh{b}_n} \ol{Z}^{\mbf{A}}_{\wh{a}_n,u} R_u du}{\F^W_{\wh{a}_n}} = \wtcondexpv{}{}{\int_{\wh{a}_n}^{\wh{b}_n} R_u du}{\F^W_{\wh{a}_n}}.
\end{equation*}
So, with 
\begin{equation*}
    \begin{split}
        R_u &= \int_{\wh{a}_n}^u \mbf{\ol{A}}_v'd\wt{W}_v + \frac{1}{2}\int_{\wh{a}_n}^u \abs{\mbf{\ol{A}}_v}^2 dv - \int_{\wh{a}_n}^u (\chi_n(\wt{\gamma}-\gamma))_v dv + \frac{1}{2}\int_{\wh{a}_n}^u \abs{\mbf{\ol{B}}^n_v}^2 dv,\\
        D_u &= e^{-\int_{\wh{a}_n}^u (\chi_n\wt{\gamma})_v dv},
    \end{split}
\end{equation*}
we obtain
\begin{equation*}
    \begin{split}
        \mbf{Q}_3 &= \wt{\E}\Bigg[ D_{\wh{b}_n} R_{\wh{b}_n} - \int_{\wh{a}_n}^{\wh{b}_n} R_u dD_u + \int_{\wh{a}_n}^{\wh{b}_n} (\chi_n\wt{\gamma})_u e^{-\int_{\wh{a}_n}^u (\chi_n\wt{\gamma})_v dv}\log\left(\frac{\wt{\gamma}}{\gamma}\right)_u du \ \bigg| \ \F^W_{\wh{a}_n} \Bigg],\\
        &= \wt{\E}\Bigg[ \int_{\wh{a}_n}^{\wh{b}_n} D_u dR_u + \int_{\wh{a}_n}^{\wh{b}_n} (\chi_n\wt{\gamma})_u e^{-\int_{\wh{a}_n}^u (\chi_n\wt{\gamma})_v dv}\log\left(\frac{\wt{\gamma}}{\gamma}\right)_u du \ \bigg| \ \F^W_{\wh{a}_n} \Bigg],\\
        &= \wt{\E}\Bigg[ \int_{\wh{a}_n}^{\wh{b}_n} D_u \bigg( \bar{A}_u'd\wt{W}_u + \bigg((\chi_n\wt{\gamma})_u \log\left(\frac{\wt{\gamma}}{\gamma}\right)_u   + \frac{1}{2}\abs{\mbf{\ol{A}}_u}^2  -  (\chi_n(\wt{\gamma}-\gamma))_u  + \frac{1}{2}\abs{\mbf{\ol{B}}^n_u}^2\bigg) du\bigg) \ \bigg| \ \F^W_{\wh{a}_n} \Bigg],\\
        &= \wt{\E}\Bigg[ \int_{\wh{a}_n}^{\wh{b}_n} D_u \bigg(\frac{1}{2}\abs{\mbf{\ol{A}}_u}^2 + \frac{1}{2}\abs{\mbf{\ol{B}}^n_u}^2 + (\chi_n\wt{\gamma})_u\bigg(\frac{\gamma}{\gamma} - 1 - \log\left(\frac{\gamma}{\wt{\gamma}}\right)\bigg)_u \bigg)du \ \bigg| \ \F^W_{\wh{a}_n} \Bigg].
    \end{split}
\end{equation*}
This gives the result.

\end{proof}


\end{document}